\DeclareSymbolFont{extraup}{U}{zavm}{m}{n}
\DeclareMathSymbol{\vardiamond}{\mathalpha}{extraup}{87}
\renewcommand{\theequation}{\thesection.\arabic{equation}}
\def\begeqar{\begin{eqnarray}}
\def\endeqar{\end{eqnarray}}
\def\begeq{\begin{equation}}
\def\endeq{\end{equation}}
\def\wgta#1#2#3#4{\hbox{\rlap{\lower.35cm\hbox{$#1$}}
\hskip.2cm\rlap{\raise.25cm\hbox{$#2$}}
\rlap{\vrule width1.3cm height.4pt}
\hskip.55cm\rlap{\lower.6cm\hbox{\vrule width.4pt height1.2cm}}
\hskip.15cm
\rlap{\raise.25cm\hbox{$#3$}}\hskip.25cm\lower.35cm\hbox{$#4$}\hskip.6cm}}
\def\wgtb#1#2#3#4{\hbox{\rlap{\raise.25cm\hbox{$#2$}}
\hskip.2cm\rlap{\lower.35cm\hbox{$#1$}}
\rlap{\vrule width1.3cm height.4pt}
\hskip.55cm\rlap{\lower.6cm\hbox{\vrule width.4pt height1.2cm}}
\hskip.15cm
\rlap{\lower.35cm\hbox{$#4$}}\hskip.25cm\raise.25cm\hbox{$#3$}\hskip.6cm}}
\def\begeqar{\begin{eqnarray}}
\def\endeqar{\end{eqnarray}}
\newcommand{\eqrefej}{(3.9)~} 
\newcommand{\eqrefuu}{(3.12)~} 
\newcommand{\eqrefuuu}{(4.22)~} 
\newcommand{\secrefcr}{4}
\newcommand{\proj}{p}
\newcommand{\modd}{\,\mathrm{mod}\,}
\newcommand{\one}{\boldsymbol{1}}%{1\kern-4pt 1}
\newcommand{\tensor}{\otimes}
\newcommand{\q}{\mathfrak{q}}
\newcommand{\ffrac}[2]{\mbox{\footnotesize$\displaystyle\frac{#1}{#2}$}}
\newcommand{\idem}{\boldsymbol{e}}
\newcommand{\PTL}[1]{TL^a_{#1}}
\newcommand{\TL}[1]{TL_{#1}}
\newcommand{\JTL}[1]{JTL_{#1}}
\newcommand{\aJTL}[1]{\JTL{#1}^{(au)}}
\newcommand{\ATL}[1]{T^a_{#1}}
\newcommand{\oATL}[1]{O_{#1}}
\newcommand{\inv}{\mathrm{inv}}
\newcommand{\LQG}{U_{\q} s\ell(2)}
\newcommand{\LQGi}{U_{i} s\ell(2)}
\newcommand{\LQGodd}{U^{\text{odd}}_{\q} s\ell(2)}
\newcommand{\LQGoddi}{U^{\text{odd}}_{i} s\ell(2)}
\newcommand{\K}{\mathsf{K}}
\newcommand{\F}{\mathsf{F}}
\newcommand{\FF}[1]{\F_{#1}}
\newcommand{\f}{\mathsf{f}}
\newcommand{\E}{\mathsf{E}}
\newcommand{\EE}[1]{\E_{#1}}
\newcommand{\h}{\mathsf{h}}
\newcommand{\e}{\mathsf{e}}
\newcommand{\ext}[1]{\boldsymbol{#1}}
\newcommand{\half}{%
  \mathchoice{\ffrac{1}{2}}{\frac{1}{2}}{\frac{1}{2}}{\frac{1}{2}}}
\newcommand{\inn}[2]{\langle#1,#2\rangle}
\newcommand{\smatrix}[4]{\mbox{\scriptsize%
    $\displaystyle\begin{pmatrix}#1&#2\\
      #3&#4\end{pmatrix}$}}
\newcommand{\sigmap}{\smatrix{0}{1}{0}{0}}
\newcommand{\sigmam}{\smatrix{0}{0}{1}{0}}
\newcommand{\sigmaq}{\smatrix{\q}{0}{0}{-\q}}
\newcommand{\Hilb}{\mathcal{H}}
\newcommand{\veven}[1]{|v^{\text{even}}\rangle}
\newcommand{\vodd}[1]{|v^{\text{odd}}\rangle}
\newcommand{\vac}{\boldsymbol{\Omega}}
\newcommand{\lvac}{\boldsymbol{\omega}}
\newcommand{\oN}{\mathbb{N}}
\newcommand{\oC}{\mathbb{C}}
\newcommand{\oP}{\mathbb{P}}
\newcommand{\Endo}{\mathrm{End}}
\newcommand{\Hom}{\mathrm{Hom}}
\newcommand{\HomQodd}{\mathrm{Hom}_{\rule{0pt}{5.5pt}%
{U^{\text{odd}}_{\q}}}}
\newcommand{\gl}{g\ell}
\newcommand{\Ext}{\mathrm{Ext}_{\rule{0pt}{5.5pt}%
{\LQGodd}}^1}
\newcommand{\ExtJTL}{\mathrm{Ext}_{\rule{0pt}{7.5pt}%
{\JTL{N}}}^1}
\newcommand{\HomATL}{\mathrm{Hom}_{\rule{0pt}{6.5pt}%
{\ATL{N}}}}
\newcommand{\HomTL}{\mathrm{Hom}_{\rule{0pt}{6.5pt}%
{\JTL{N}}}}
\newcommand{\HomaJTL}{\mathrm{Hom}_{\rule{0pt}{6.5pt}%
{\aJTL{N}}}}
\newcommand{\EndJTL}{\mathrm{End}_{\rule{0pt}{6.5pt}%
{\JTL{N}}}}
\newcommand{\EndcJTL}{\mathrm{End}_{\rule{0pt}{6.5pt}%
{\centJTL}}}
\newcommand{\chVv}{\Hilb_{N}}
\newcommand{\repgl}{\pi_{\gl}}
\newcommand{\repXX}{\pi_{\mathrm{XX}}}
\newcommand{\repQG}{\rho_{\gl}}
\newcommand{\cent}{\mathfrak{Z}}
\newcommand{\centTL}{\cent_{\mathsf{TL}}}
\newcommand{\centJTL}{\cent_{\JTL{}}}
\newcommand{\N}{\mathrm N}
\newcommand{\toppr}{\mathsf{t}}
\newcommand{\botpr}{\mathsf{b}}
\newcommand{\leftpr}{\mathsf{l}}
\newcommand{\rightpr}{\mathsf{r}}
\newcommand{\stprp}{\mathsf{x}}
\newcommand{\Xodd}[1]{\mathsf{X}_{#1}}
\newcommand{\XX}{\mathsf{X}}
\newcommand{\PP}{\mathsf{P}}
\newcommand{\PPodd}[1]{\mathsf{T}_{#1}}
\newcommand{\TLX}{d^0}
\newcommand{\IrrTL}[1]{(d^0_{#1})}
\newcommand{\PrTL}[1]{\mathscr{P}_{#1}}
\newcommand{\StTL}[1]{\mathscr{W}_{#1}}
\newcommand{\AIrrTL}[2]{\mathscr{L}_{#1,#2}}
\newcommand{\APrTL}[1]{\widehat{\mathscr{P}}_{#1}}
\newcommand{\APrTLs}[1]{\widehat{\mathscr{P}}^*_{#1}}
\newcommand{\AStTL}[2]{\mathscr{W}_{#1,#2}}
\newcommand{\bAStTL}[2]{\overline{\mathscr{W}}_{#1,#2}}
\newcommand{\MmodJTL}[3]{\mathscr{M}^{(#1)}_{#2}(#3)}
\newcommand{\MJTL}[2]{\mathscr{M}_{#1}(#2)}
\newcommand{\NJTL}[2]{\mathscr{N}_{#1}(#2)}
\newcommand{\MJTLs}[2]{\mathscr{M}^*_{#1}(#2)}
\newcommand{\NJTLs}[2]{\mathscr{N}^*_{#1}(#2)}
\newcommand{\NNJTL}[1]{\mathscr{N}_{#1}}
\newcommand{\Bodd}{B_{\q}}
\newcommand{\Nodd}{N_{\q}}
\newcommand{\Boddm}{\Bodd^-}
\newcommand{\Boddp}{\Bodd^+}
\newcommand{\Noddm}{\Nodd^-}
\newcommand{\Noddp}{\Nodd^+}
\newcommand{\Bmod}[2]{\mathsf{B}^{(#1)}_{#2}}
\newcommand{\Rmod}[2]{\mathsf{R}^{(#1)}_{#2}}
 \definecolor{col-p-k}{rgb}{0.0,0.0,0.0}
 \definecolor{col-p-1}{rgb}{0.0,0.0,0.0}
 \definecolor{col-p-2}{rgb}{0.0,0.0,0.0}
 \definecolor{col-p-3}{rgb}{0.0,0.0,0.0}
 \definecolor{col-p-4}{rgb}{0.0,0.0,0.0}
  \definecolor{col-m-l}{rgb}{0.0,0.0,0.0}
 \definecolor{col-m-1}{rgb}{0.0,0.0,0.0}
 \definecolor{col-m-2}{rgb}{0.0,0.0,0.0}
 \definecolor{col-m-3}{rgb}{0.0,0.0,0.0}
 \definecolor{col-m-4}{rgb}{0.0,0.0,0.0}
\newcommand{\Approjmodbase}{B}
\newtheorem{Thm}[subsection]{Theorem}
\newtheorem{thm}[subsubsection]{Theorem}
\newtheorem{lemma}[subsubsection]{Lemma}
\newtheorem{prop}[subsubsection]{Proposition}
\theoremstyle{definition}
\newtheorem{dfn}[subsubsection]{Definition}
\newtheorem{rem}[subsubsection]{Remark}
\newtheorem{example}[subsubsection]{Example}
\begin{document}
\begin{center}

%%%%%%%%%%%%%%%%%%%%%%%%%%%%%%%%%%%%%%%%%%%%%%%%%%%%%%%%%%%%%%%%%%%%%%%%%%%%%%%
%\Large{The periodic case: first paper}
\Large{
 Bimodule structure in the periodic $\gl(1|1)$ spin chain}
%%%%%%%%%%%%%%%%%%%%%%%%%%%%%%%%%%%%%%%%%%%%%%%%%%%%%%%%%%%%%%%%%%%%%%%%%%%%%%%
\vskip 1cm

{\large A.M. Gainutdinov\,$^{a}$, N. Read\,$^{b}$, and
H. Saleur\,$^{a,c}$}

\vspace{1.0cm}

{\sl\small $^a$  Institut de Physique Th\'eorique, CEA Saclay,\\ 
Gif Sur Yvette, 91191, France\\}
{\sl\small $^b$ Department of Physics, Yale University, P.O. Box 208120,\\ New Haven, Connecticut 06520-8120, USA\\}
{\sl\small $^c$ Department of Physics and Astronomy,
University of Southern California,\\
Los Angeles, CA 90089, USA\\}

\end{center}

\begin{abstract}
This paper is the second  in a series devoted to the study of 
periodic super-spin chains.  In our first paper~\cite{GRS1}, we
have studied the symmetry algebra of the periodic $\gl(1|1)$ spin
chain. In technical terms, this spin chain is built out of the
alternating product of the $\gl(1|1)$ fundamental representation and
its dual. The local energy densities -- the nearest neighbor
Heisenberg-like couplings -- provide a representation of the Jones--Temperley--Lieb (JTL) algebra $\JTL{N}$. 
The symmetry algebra is then
the centralizer of $\JTL{N}$, and turns out to be smaller than for the
open chain, since it is now only a subalgebra of $\LQG$ at $\q=i$ --
dubbed $\LQGodd$ in~\cite{GRS1}.  A crucial step in our associative
algebraic approach to bulk logarithmic conformal field theory (LCFT)
is then the analysis of the spin chain as a bimodule over $\LQGodd$
and $\JTL{N}$. While our ultimate goal is to use this bimodule to
deduce properties of the LCFT in the continuum limit, its derivation
is sufficiently involved to be the sole subject of this
paper. We describe representation theory of the centralizer and then
use it to find a decomposition of the periodic $\gl(1|1)$ spin chain
over $\JTL{N}$ for any even $N$ and ultimately a corresponding
bimodule structure. Applications of our results to the analysis of the bulk LCFT
will then be discussed in the third part of this series.

\end{abstract}

%\today

\section{Introduction}

The general philosophy of the lattice approach to LCFTs in the
boundary case \cite{ReadSaleur07-1,ReadSaleur07-2} relies on the
analysis of microscopic models -- typically spin chains built out of
alternating representations of a super Lie algebra such as $\gl(m|n)$,
with a nearest neighbour ``Heisenberg" coupling -- as a bi-module over
two algebras. In physical terms, one of these algebras is generated by
the local hamiltonian densities, and the other is the ``symmetry"
commuting with these hamiltonian densities.

It is natural to try to extend this approach~\cite{ReadSaleur07-2} to
the bulk case, but considerable mathematical difficulties are
encountered in this endeavor. This is true even for the -- a priori
simplest -- case of $\gl(1|1)$, whose continuum limit is the
ubiquitous symplectic fermion theory. The local hamiltonian densities
then provide a (non-faithful) representation of the
Jones--Temperley--Lieb algebra $\JTL{N}$. Its centralizer $\centJTL$ was studied
in our previous paper~\cite{GRS1}, where we found that it is generated
 by a subalgebra -- dubbed $\LQGodd$ -- of $\LQG$ at $\q=i$ (recall that
the centralizer
of the ordinary Temperley--Lieb algebra  in the open case
is $\LQGi$) and two operators mapping to each other the lowest and highest spin states. Concerning the representations of $\centJTL$, the essential part is contained in the $\LQGodd$, and we will sometimes abuse notations by calling $\LQGodd$ the centralizer of $\JTL{N}$. Note that in the scaling limit~\cite{GRS1}, the difference between the two objects becomes irrelevant.  The next
step in the program consists thus in decomposing the spin chain as a
bimodule over this $\LQGodd$ and $\JTL{N}$ -- a rather technical task
we tackle in this paper, leaving the discussion of the (many) physical implications to a
sequel~\cite{GRS3}.

The plan of the paper is as follows. After preliminaries and
reminders of various definitions in Sec.~\ref{sec:prelim}, we
explore the representation theory of the centralizer in
Sec.~\ref{sec:rep-th-centJTL}. The representation theory of the
Jones--Temperley--Lieb algebra is then summarized in
Sec.~\ref{sec:JTL-stand}, largely based upon the seminal work of
Graham and Lehrer~\cite{GL}. Section~\ref{sec:sp-ch-decomp-JTL} is
devoted to the spin-chain decomposition over $\JTL{N}$. Considerable attention is paid to the
absence of the  double-centralizing property (a familiar aspect of  the semi-simple case),
and the ensuing technical complications for our analysis of  the  $\gl(1|1)$ spin-chain.  All
these elements are put together in Sec.~\ref{ind-chain-bimod-sec}
where the bimodule structure is finally obtained for the periodic
model. The twisted model with antiperiodic boundary conditions is also
decomposed as a (now, semisimple) bimodule over two centralizing
algebras in  Sec.~\ref{ind-chain-bimod-sec}. A few conclusions are gathered
in Sec.~\ref{sec:concl}.

\subsection{Notations}
To help the reader navigate through this long paper, we provide a (partial) list of notations:

\begin{itemize}

\item[\mbox{}] $\TL{N}$ --- the (ordinary) Temperley--Lieb algebra,

\item[\mbox{}] $\ATL{N}$ --- the periodic Temperley--Lieb algebra with the translation $u$, or the algebra of affine diagrams, 

\item[\mbox{}] $\JTL{N}(m)$ --- the Jones--Temperley--Lieb algebra with parameter $m$,

\item[\mbox{}] $\JTL{N}$ --- the Jones--Temperley--Lieb algebra at $m=0$,

\item[\mbox{}] $\centJTL$ --- the  centralizer of $\JTL{N}$ in the $\gl(1|1)$ spin chain,

\item[\mbox{}] $\repgl$ --- the spin-chain representation of $\JTL{N}$,

\item[\mbox{}] $\repQG$ --- the spin-chain representation of the quantum group $\LQG$,

\item[\mbox{}] $\E$, $\F$,  $\K^{\pm1}$ --- the standard quantum group generators,

\item[\mbox{}] $\e$, $\f$ --- the renormalized powers of the generators $\E$ and $\F$,

 \item[\mbox{}] $\XX_{1,n}$ ---  the simple $\LQGi$-modules,

 \item[\mbox{}] $\PP_{1,n}$ ---  the projective $\LQGi$-modules,

 \item[\mbox{}] $\Xodd{n}$ --- the simple $\LQGoddi$- and $\centJTL$-modules
 
 \item[\mbox{}] $\PPodd{n}$ ---  the indecomposable summands in the spin-chain decomposition over the centralizer $\centJTL$,

\item[\mbox{}] $\StTL{j}$ --- the standard modules over $\TL{N}$,
 
\item[\mbox{}] $\PrTL{j}$ --- the projective modules over $\TL{N}$,

%% \item[$\AIrrTL{j}{(-1)^{j+1}}$] --- the simple modules over $\JTL{N}$
%%   for which we also use the notation $(\TLX_{j})$,

\item[\mbox{}] $\AIrrTL{j}{(-1)^{j+1}}$ --- the simple modules over $\JTL{N}$
  for which we also use the notation $(\TLX_{j})$,

\item[\mbox{}] $\AStTL{j}{(-1)^{j+1}}$ --- the standard modules over $\JTL{N}$,

\item[\mbox{}] $\AStTL{j}{e^{2iK}}$ --- the standard modules over $\JTL{N}(m)$

\item[\mbox{}] $\bAStTL{0}{\q^2}$ --- the standard module over $\JTL{N}(m)$ for $j=0$. 

\item[\mbox{}] $\APrTL{j}$ --- the indecomposable summands in spin-chain decomposition over $\JTL{N}$.
\end{itemize}

\section{Preliminaries}\label{sec:prelim}

\subsection{The Temperley--Lieb algebras in the periodic case}
\label{sec:TL-alg-def}
The models we are interested in have transfer matrices (Hamiltonians)
expressed in terms of Temperley--Lieb generators. In the periodic case,
several variants of this algebra can be considered, and it is useful
to start by going over a few definitions.

We begin with an algebra  generated by the $e_j$'s together with the identity, subject to the usual relations
\begin{eqnarray}\label{TL-rel1}
e_j^2&=&me_j,\nonumber\\
e_je_{j\pm 1}e_j&=&e_j,\label{TL}\\
e_je_k&=&e_ke_j\qquad(j\neq k,~k\pm 1),\nonumber
\end{eqnarray}
where $j=1,\ldots,N$; $m$ is a (real) parameter,  and the indices are interpreted modulo $N$. This
algebra is a quotient of the affine Hecke algebra of $A$-type and
denoted by $\PTL{N}$ in the work of Graham and Lehrer \cite{GL,GL1}
whose definitions and notations we follow whenever
possible. The algebra $\PTL{N}$ is also known as the periodic
Temperley--Lieb algebra~\cite{MartinSaleur,MartinSaleur1}.
%%  and it is an
%% infinite-dimensional algebra.

\begin{figure}
%	\usetikzlibrary{lindenmayersystems}
\begin{equation*}
 \begin{tikzpicture}
 %%%%%%%%% Frame %%%%%%%%
 	\draw[thick, dotted] (-0.05,0.5) arc (0:10:0 and -7.5);
 	\draw[thick, dotted] (-0.05,0.55) -- (2.65,0.55);
 	\draw[thick, dotted] (2.65,0.5) arc (0:10:0 and -7.5);
	\draw[thick, dotted] (-0.05,-0.85) -- (2.65,-0.85);
%%%%%%%%%%%%	
	\draw[thick] (0,0) arc (-90:0:0.5 and 0.5);
%	\draw[thick] (0.8,0.5) line (0:0.5:0.5 and 0);
	\draw[thick] (0.9,0.5) arc (0:10:0 and -7.6);
	\draw[thick] (1.65,0.5) arc (0:10:0 and -7.6);
	\draw[thick] (2.6,0) arc (-90:0:-0.5 and 0.5);

	\draw[thick] (0.5,-0.8) arc (0:90:0.5 and 0.5);
%	\draw[thick] (1.8,-0.8) arc (0:180:0.5 and 0.5);
	\draw[thick] (2.1,-0.8) arc (0:90:-0.5 and 0.5);
	\end{tikzpicture}\;\;,
	\qquad\qquad
 \begin{tikzpicture}
 %%%%%%%%% Frame %%%%%%%%
 	\draw[thick, dotted] (-0.05,0.5) arc (0:10:0 and -7.5);
 	\draw[thick, dotted] (-0.05,0.55) -- (2.65,0.55);
 	\draw[thick, dotted] (2.65,0.5) arc (0:10:0 and -7.5);
	\draw[thick, dotted] (-0.05,-0.85) -- (2.65,-0.85);
%%%%%%%%%%%%	
	\draw[thick] (0,0) arc (-90:0:0.5 and 0.5);
	\draw[thick] (0.8,0.5) arc (-180:0:0.5 and 0.5);
	\draw[thick] (2.6,0) arc (-90:0:-0.5 and 0.5);

	\draw[thick] (0.5,-0.8) arc (0:90:0.5 and 0.5);
	\draw[thick] (1.8,-0.8) arc (0:180:0.5 and 0.5);
	\draw[thick] (2.1,-0.8) arc (0:90:-0.5 and 0.5);

%	\draw[thick] (0.4,0) arc (-180:0:0.3 and 0.3);
%	\draw[thick] (0.6,0) arc (-180:0:0.1 and 0.18);
	\end{tikzpicture}\;\;,
	\qquad\qquad
 \begin{tikzpicture}
  %%%%%%%%% Frame %%%%%%%%
 	\draw[thick, dotted] (-0.05,0.5) arc (0:10:0 and -7.5);
 	\draw[thick, dotted] (-0.05,0.55) -- (2.65,0.55);
 	\draw[thick, dotted] (2.65,0.5) arc (0:10:0 and -7.5);
	\draw[thick, dotted] (-0.05,-0.85) -- (2.65,-0.85);
%%%%%%%%%%%%	
	\draw[thick] (0,0.1) arc (-90:0:0.5 and 0.4);
	\draw[thick] (0,-0.1) arc (-90:0:0.9 and 0.6);
	\draw[thick] (2.6,-0.1) arc (-90:0:-0.9 and 0.6);
	\draw[thick] (2.6,0.1) arc (-90:0:-0.5 and 0.4);
	
	\draw[thick] (0.5,-0.8) arc (0:90:0.5 and 0.5);
	\draw[thick] (1.8,-0.8) arc (0:180:0.5 and 0.5);
	\draw[thick] (2.1,-0.8) arc (0:90:-0.5 and 0.5);
	\end{tikzpicture}
\end{equation*}
%\begin{equation}
%\psset{xunit=2mm,yunit=2mm}
%\begin{pspicture}(0,0)(0.75,1)
%\psellipticarc[linecolor=black,linewidth=1.0pt]{-}(0,1.0)(0.5,1.42){180}{360}
%\end{pspicture}
%\end{equation}
\caption{Examples of affine diagrams for $N=4$, with the left and right sides of the framing rectangle identified. The first diagram represents the generator $e_4$, it has rank $2$ as well as the second one. The third diagram has  rank $3$.}
\label{fig:aff-diag}
\end{figure}
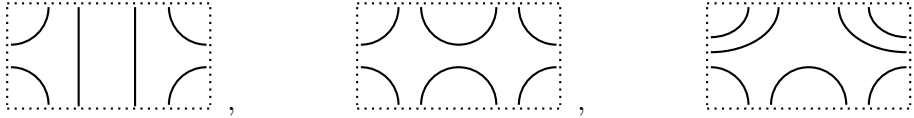

The $e_i$'s can be interpreted in terms of particular diagrams
on an annulus~\cite{MartinSaleur,MartinSaleur1,GL,Green,Jones} (a representation which is known to be faithful
\cite{FanGreen}).  A general basis element in the space of
diagrams we will be interested in is obtained by taking $N$ sites on the inner and $N$ sites on
the outer boundary of the annulus; these sites are connected in pairs,
and only configurations that can be represented using lines inside the
annulus without crossings are allowed. Diagrams related by an isotopy
leaving the labeled sites fixed are considered equivalent. We call such
(equivalence classes of) diagrams  \textit{affine} diagrams. Examples of affine diagrams are shown in Fig.~\ref{fig:aff-diag}, where we draw them in slightly different geometry: we cut the annulus and transform it to a rectangle which we call \textit{framing} so that the sites labeled by `$1$' are closest to the left and sites labeled by `$N$' are to the right sides of the rectangle. Multiplication of two diagrams
can be then defined by joining an inner to an outer annulus, and
removing the interior sites. Whenever a closed contractible loop is
produced  in this multiplication, it is 
replaced by a numerical factor $m$.  This defines abstractly an associative
algebra  which we denote as $\ATL{N}(m)$. Note that the  diagrams
in this algebra allow winding of through-lines around the annulus any
integer number of times, and different windings result in independent
algebra elements. Moreover, in the ideal of zero through-lines, any number of
non-contractible loops is allowed. The algebra $\ATL{N}(m)$ is
thus infinite-dimensional.

The action of the $e_i$ generators in the diagram basis is well known \cite{FanGreen}. 
Once in the sector with $N$ through-lines of the diagram algebra $\ATL{N}$, we consider the
generators $u$ and $u^{-1}$ of translations by one site to the right
and to the left, respectively.  The following additional defining
relations are then obeyed,
\begin{equation}\label{aTL-rel}
\begin{split}
ue_ju^{-1}&=e_{j+1},\\
u^2e_{N-1}&=e_1\ldots e_{N-1},
%u^{N}&=&1\label{PTL}
\end{split}
\end{equation}
and $u^{\pm N}$ is a central element.  The algebra generated by the
 $e_i$ and $u^{\pm1}$ with the defining relations~\eqref{TL}
 and~\eqref{aTL-rel} is isomorphic to $\ATL{N}(m)$  and
 called  the \textit{affine} Temperley--Lieb algebra.

We call \textit{rank}~\cite{GL} (see also~\cite{Green}) of an affine
diagram the minimal number of intersections with  
%a line segment joining the boundaries of the annulus and not containing any site.
the left side of the framing rectangle, see examples shown in Fig.~\ref{fig:aff-diag}.
The algebra $\PTL{N}$
introduced in~\eqref{TL} is spanned by all affine diagrams of
even-rank in sectors with number of through-lines less than $N$ and by
the identity in the sector with $N$ through-lines. Nothing is said at
this stage about non contractible loops and windings of through-lines, and the algebra $\PTL{N}$ is
also infinite dimensional.

 \bigskip
 
 For the models we are interested in, with Hilbert spaces built out of
  (tensor products of) alternating representations, $N=2L$ is
  even. Moreover, the pattern of representations forces one to
  consider translations by an even number of sites only, {\it i.e.}, restrict
  to powers of $u^2$. This leads to a subalgebra
  $\oATL{N}(m)\subset\ATL{N}(m)$ spanned by all affine diagrams of even rank\footnote{The algebra
  $\oATL{N}(m)$ can be alternatively described as an algebra of diagrams
  with orientable lines (such that the arrows emanating from the even
  sites enter the odd sites on the inner boundary, and the reverse for
  the outer boundary), modulo odd-rank diagrams  in the ideal without
  through-lines.}.
%%   (including the sector with $N$ through-lines).  
 Physical applications require actually the consideration of further finite-dimensional quotients of the
$\oATL{N}(m)$. The   easiest way to define such
quotients is to consider a homomorphism $\psi$ to the Brauer
algebra~\cite{[Br]}. Recall first that the Brauer algebra  is defined as the algebra of diagrams drawn
inside a rectangle with lines connecting  two identical or opposite
edges, say the bottom and the top ones, with $N$ sites on each of them and
allowing any crossings, up to isotopy leaving the labeled sites fixed as usual. The homomorphism $\psi$ takes an even rank annular diagram and produces a rectangular diagram with crossings in the following way: we first cut the annulus such that the diagram is now inside the framing rectangle defined above and then we connect the point (of an arc or a through-line) on the left side with its corresponding point on the right side of the rectangle. For example, we have
%\begin{equation}
%\psi\left(\begin{tikzpicture}
% %%%%%%%%% Frame %%%%%%%%
% 	\draw[thick, dotted] (-0.05,0.5) arc (0:10:0 and -7.5);
% 	\draw[thick, dotted] (-0.05,0.55) -- (2.65,0.55);
% 	\draw[thick, dotted] (2.65,0.5) arc (0:10:0 and -7.5);
%	\draw[thick, dotted] (-0.05,-0.85) -- (2.65,-0.85);
%%%%%%%%%%%%%	
%	\draw[thick] (0,0.2) arc (-90:0:0.5 and 0.3);
%	\draw[thick] (0.0,-0.1) arc (-90:0:1.1 and 0.6);
%	\draw[thick] (1.65,0.5) arc (0:10:38.5 and -7.6);
%	\draw[thick] (2.6,0.2) arc (-90:0:-0.5 and 0.3);
%	\draw[thick] (0.5,-0.8) arc (0:90:0.5 and 0.3);
%	\draw[thick] (1.7,-0.8) arc (0:90:-0.9 and 0.6);
%	\draw[thick] (2.1,-0.8) arc (0:90:-0.5 and 0.3);
%	\end{tikzpicture}\Bigr)
%\end{equation}
\begin{equation}
  \xymatrix@C=8pt@R=1pt@M=-5pt@W=-2pt{
  &&	\mbox{}\quad\xrightarrow{{\mbox{}\quad\psi\quad} }\quad &\\
  & {\begin{tikzpicture}
 %%%%%%%%% Frame %%%%%%%%
 	\draw[thick, dotted] (-0.05,0.5) arc (0:10:0 and -7.5);
 	\draw[thick, dotted] (-0.05,0.55) -- (2.65,0.55);
 	\draw[thick, dotted] (2.65,0.5) arc (0:10:0 and -7.5);
	\draw[thick, dotted] (-0.05,-0.85) -- (2.65,-0.85);
%%%%%%%%%%%%	
	\draw[thick] (0,0.2) arc (-90:0:0.5 and 0.3);
	\draw[thick] (0.0,-0.1) arc (-90:0:1.1 and 0.6);
	\draw[thick] (1.65,0.5) arc (0:10:38.5 and -7.6);
	\draw[thick] (2.6,0.2) arc (-90:0:-0.5 and 0.3);
	\draw[thick] (0.5,-0.8) arc (0:90:0.5 and 0.3);
	\draw[thick] (1.7,-0.8) arc (0:90:-0.9 and 0.6);
	\draw[thick] (2.1,-0.8) arc (0:90:-0.5 and 0.3);
	\end{tikzpicture}\quad}&
	%\ar[r]
%	\xrightarrow{{\mbox{}\quad\psi\quad} }
%	\ar@{->}[r]^(0.45){\mbox{}\qquad\psi\qquad} 
	& {	\quad
  \begin{tikzpicture}
 %%%%%%%%% Frame %%%%%%%%
 	\draw[solid] (-0.0,0.51) arc (0:10:0 and -7.6);
 	\draw[solid] (-0.0,0.51) -- (2.65,0.51);
 	\draw[solid] (2.65,0.51) arc (0:10:0 and -7.6);
	\draw[solid] (-0.0,-0.81) -- (2.65,-0.81);
%%%%%%%%%%%%	
	\draw[thick] (0.5,0.5) arc (-180:0:0.8 and 0.5);
	\draw[thick] (1.0,0.5) arc (0:10:-44.5 and -7.6);
	\draw[thick] (1.65,0.5) arc (0:10:38.5 and -7.6);
%	\draw[thick] (2.6,0.2) arc (-90:0:-0.5 and 0.3);
	%
%	\draw[thick] (0.5,-0.8) arc (0:90:0.5 and 0.3);
	\draw[thick] (2.1,-0.8) arc (0:180:0.8 and 0.5);
%	\draw[thick] (1.7,-0.8) arc (0:90:-0.9 and 0.6);
%	\draw[thick] (2.1,-0.8) arc (0:90:-0.5 and 0.3);
	\end{tikzpicture}}
%  &&&
  }
\end{equation}
where the right diagram is an element of the Brauer algebra.
The image of
$\oATL{N}(m)$ under the
homomorphism $\psi$ is thus a subalgebra of  diagrams that can be drawn in the annulus
without crossings (so they might have crossings in the rectangle, as a subalgebra in the Brauer algebra) {\sl plus} additional relations~\cite{Jones}: (i)  non contractible
loops are replaced
by the same numerical factor $m$ as  for contractible loops; (ii)
$u^N=1$ (this  allows one  to ``unwind" through-lines of the
affine diagrams); (iii) non-isotopic (in the annulus) diagrams connecting  the same sites
 are identified.
We call the finite-dimensional image of $\psi$ \textit{the
Jones--Temperley--Lieb algebra} $\JTL{N}(m)$ (actually used
in~\cite{ReadSaleur07-1}), and it is the object we mostly want to study  in this
paper.  This algebra was first introduced in~\cite{Jones} and called
oriented annular subalgebra in the Brauer algebra.

For further references, we gather all the mentioned algebras in the diagram
\begin{equation}\label{diag-alg}
     \xymatrix@C=28pt@R=16pt@M=5pt@W=4pt{
       &{\TL{N}\;\;}\ar@{^{(}->}[r] & {\;\;\PTL{N}\;\;}\ar@{^{(}->}[r]
       & {\;\;\ATL{N}\;\;} 
       &{\;\;\oATL{N}\;} \ar@{_{(}->}[l] \ar@{->>}[r]^(0.45){\psi} & {\;\JTL{N}}&
    } 
\end{equation}
where we also introduced the  notation for the open Temperley--Lieb algebra
$\TL{N}$ generated by $e_j$, for $1\leq j\leq N-1$;
%% , and for the Jones
%% annular algebra $\Jann{N}$ containing the generator $u$;
the arrows
$\hookrightarrow$ denote embeddings of algebras while the doubled
arrows denote projections (surjective homomorphisms of algebras).

We will only be concerned in this paper with the case $m=0$ for which
 the algebra $\JTL{2L}(m)$ is non semi-simple; in the following we usually  suppress all
 reference to $m$. We will also mostly restrict to  a specific ``tensor product" representation - the alternating  $\gl(1|1)$ spin chain.

 \subsection{The closed $\gl(1|1)$ super-spin chain}\label{sec:super-spin-ch-def}
 
The closed  $\gl(1|1)$ super-spin chain \cite{ReadSaleur07-1,GRS1} is a
 tensor product 
%$\chVv=\tensor_{j=1}^{N}\chV_j$, with $\chV_j\cong\oC^2$,
 representation $\chVv=\tensor_{j=1}^{N}\oC^2$ of the algebra $\JTL{N}(0)$, which consists of
 $N=2L$ tensorands 
labelled $j=1,\ldots,2L$ with the fundamental representation of $\gl(1|1)$ on even sites and its dual on odd sites.
%\begin{equation*}
%\chV = 
%\end{equation*}
%% The open $\gl(1|1)$ spin chain provides a faithful representation of $\TL{2L}(0)$ by the use of
The representation of each $e_j$ is given by the operator mapping   the
product of two neighbour tensorands on the $\gl(1|1)$-invariant 
\begin{equation}\label{rep-JTL-1}
e_j^{\gl}= (f_j+f_{j+1})(f_j^\dagger+f_{j+1}^\dagger),\qquad 1\leq
j\leq 2L.
\end{equation}
Here we used a free fermion representation based on operators $f_j$
and $f_j^\dagger$ acting non-trivially only on $j$th tensorand and obeying
\begin{eqnarray}\label{f_j-rel}
\{f_j,f_{j'}\}=0,\quad\{f^{\dagger}_j,f^{\dagger}_{j'}\}=0,\quad\{f_j,f_{j'}^\dagger\}=(-1)^{j}\delta_{jj'},\qquad
f_{2L+1}=f_{1}, \quad f^\dagger_{2L+1}=f^{\dagger}_{1},
\end{eqnarray}
where the minus sign  for an odd $j$ is due to presence of the dual representations of $\gl(1|1)$.

The generators $e_j^{\gl}$ satisfy the (periodic) Temperley--Lieb algebra relations~\eqref{TL-rel1}
with $m=0$, and together with the generator $u^2$ translating the
periodic spin-chain by two sites $j\to j+2$, they provide a
representation of $\JTL{2L}(m=0)$ which we denote by
$\repgl:\JTL{2L}(0)\to\Endo_{\oC}(\chVv)$.
 The representation $\repgl$
is known to be non-faithful~\cite{ReadSaleur07-1}.
 
 The representation space $\Hilb_{2L}$ is equipped with an  inner product $\inn{\cdot}{\cdot}$ such that $\inn{f_j x}{y}=\inn{x}{f_j^{\dagger}y}$ for any $x,y\in\Hilb_{2L}$.
 We stress that the inner product is indefinite because of the sign factors in the relations~\eqref{f_j-rel}. Then, 
% a $\oN$-grading  given by
 the Hamiltonian operator
 \begin{equation}\label{hamil-def}
 H=-\sum_{j=1}^{2L}e^{\gl}_j,
 \end{equation}
with the ``hamiltonian densities" $e^{\gl}_j$ defined in~\eqref{rep-JTL-1}, is self-adjoint $H=H^{\dagger}$ with respect to this inner product (actually, each $e_j^{\gl}$ is a self-adjoint operator).
Its eigenvalues are real and the eigenvectors are computed in~\cite{GRS1} using a relation with XX
spin-chains. It was also shown that the Hamiltonian has non-trivial Jordan cells (of rank-two).

\subsection{Centralizers and bimodules}\label{subsec:imp-bimod}

As was mentioned in the introduction, an important step in our approach is to find a decomposition of the spin-chain over the $\JTL{N}$ for any finite $N$. The representation $\repgl$ of $\JTL{N}$ is
non-faithful and there are thus no direct evident ways of getting the
decomposition of the spin-chain, unlike in the open case where
one deals with  a faithful representation of $\TL{N}$. For example, the general theory~\cite{GL0} of projective  modules over
a cellular algebra (which includes $\TL{N}(m)$ and $\JTL{N}(m)$ algebras) could
be applied in a faithful
representation. In our non-faithful case, we need an indirect strategy,
which uses the symmetry algebra as  discussed  below. In turn, the use of this indirect strategy is made  complicated by the fact that we deal with the non semi-simple representation of an associative algebra. Our problem is thus rather complicated.

\medskip
In general, an important concept in  lattice models is the full symmetry
algebra which is technically the centralizer of  a ``hamiltonian
densities" algebra of the model.
  By %a `hamiltonian densities' 
the latter algebra we generally mean any (representation of a) Hecke-type algebra -- mostly 
  $\TL{N}(m)$ for  open spin-chains or $\JTL{N}(m)$ for closed
ones.
 We recall that, 
for  an associative algebra $A$ and its representation space~$\Hilb$, \textit{the~centralizer} of $A$ is an algebra
  $\cent_{A}$ of the maximum dimension such that $[\cent_{A},A]=0$,
  {\it i.e.}, the centralizer is defined as $\cent_{A}\cong\Endo_{A}(\Hilb)$
  -- the algebra of all endomorphisms on $A$-module $\Hilb$.

The representation theory of the centralizer $\cent_A$ is usually much
easier to study than  the representation theory of the  ``hamiltonian
densities" algebra $A$. It is thus more reasonable to start with a
decomposition of spin-chains over $\cent_A$ into indecomposable direct summands, which are in general what are technically called  tilting modules~\cite{Donkin}.
Note here that strictly speaking the tilting modules are defined for a quasi-hereditary algebra, which is the case for the centralizers of the TL algebras, and for JTL representations in closed alternating $\gl(n|m)$ spin chains only if $n+m>2$. So for the $\gl(1|1)$ closed case the ``spin-chain modules" are not tilting, though we will sometimes still call them ``tilting", abusing notations (the concept of tilting modules is very powerful and will be used for studying $\gl(2|1)$ spin chains in forthcoming papers, see also~\cite{GJSV} for a short review in the  context of boundary spin chains.) 

The next step is to  study all
homomorphisms between the direct summands in the decomposition 
 to obtain  the module structure over the ``hamiltonian densities''
algebra $A$. In particular, multiplicities in front of tilting
$\cent_A$-modules give the  dimensions of simple $A$-modules, and the
subquotient structure of projective $A$-modules can be deduced from
the one of the tilting $\cent_A$-modules, see~\cite{AF-book}.
As a result, one  gets a sequence of bi-modules $\Hilb_N$ over the two commuting algebras
parametrized by the number $N$ of sites/tensorands in the
spin-chain. 

This approach however requires the double-centralizing property -- that is, that  the algebra $A$ is the centralizer
of its centralizer $\cent_A$.  This property, which holds in the semi-simple case, is  not obvious for non-semisimple representations of an associative algebra $A$, and in fact does not hold in our case.
Our problem is  thus more complicated.  While we first  follow the general  strategy by studying the decomposition of the spin chain over  $\cent_A$   in  section \ref{sec:rep-th-centJTL}, the analysis of the  decomposition of the spin chain over $\JTL{N}$ (see section \ref{sec:sp-ch-decomp-JTL}) requires some extra steps where we start by proposing a subquotient structure for the spin chain $\JTL{N}$ modules, and then check consistency and uniqueness based on our previous analysis of the centralizer.

\medskip
As a simple example, the open $\gl(1|1)$ spin-chain exhibits a large symmetry algebra
dubbed ${\cal A}_{1|1}$ in~\cite{ReadSaleur07-1}. This algebra is the
centralizer $\centTL$ of $\TL{N}(0)$ and is
generated by the identity and the five generators
\begin{gather*}
F_{(1)}=\sum_{1\leq j\leq N} f_j,\qquad
F^\dagger_{(1)}=\sum_{1\leq j\leq N} f_j^\dagger,\\%\label{Fgen-1}\\
F_{(2)}=\sum_{1\leq j<j'\leq N}f_jf_{j'},\qquad
F_{(2)}^\dagger=\sum_{1\leq j<j'\leq N} f_{j'}^\dagger f_j^\dagger,\qquad
%% ${\N}=\sum_j (-1)^jf_j^\dagger f_j-L$,
{\N}=\sum_{1\leq j\leq N} (-1)^jf_j^\dagger f_j.%\label{Fgen-2}
\end{gather*}
We note that these formulas give just a representation of the quantum
group $\LQG$ for $\q=i$. The fermionic generators, with the index
$(1)$, are from the nilpotent part and the bosonic ones form the
$s\ell(2)$ subalgebra in $\LQG$ (see a precise correspondence below
in~\eqref{QG-ferm-2} and~\eqref{QG-ferm-1}.)

The decomposition of the open spin-chain as a bimodule over the pair
$(\TL{N},{\cal A}_{1|1})$ of mutual centralizers 
is shown on Fig.~\ref{openbimodule-fin} for  $N=8$ case (borrowed
from~\cite{ReadSaleur07-2}).
 \begin{figure}\centering
% \begin{center}
 \leavevmode
 \epsfysize=80mm{\epsffile{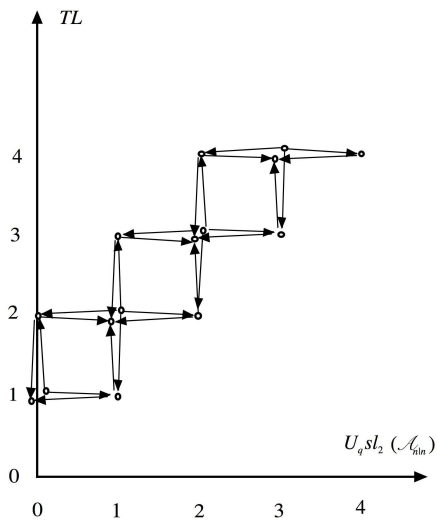}}
 % \end{center}
  %\protect
  \caption{The structure of the open $\gl(1|1)$ spin-chain for  $N=8$
  sites, as a representation of $\TL{N}\boxtimes\LQGi$. Each node with a Cartesian coordinate $(n,n')$ corresponds to the tensor product $(\TLX_{n'})\boxtimes\XX_{1,n+1}$, see notations in Sec~\ref{sec:rep-th-centJTL}. Some nodes
  with Cartesian coordinates $(n,n+1)$ occur twice and those nodes
  have been separated slightly for clarity. 
%% \texttt{TO DO: to replace ${\cal A}(n|n)$ by ${\cal A}(1|1)$ in the
%%   figure.}
}\label{openbimodule-fin}
 \end{figure}
Each node with a Cartesian coordinate $(n,n')$ in the bimodule diagram
corresponds to a simple subquotient $(\TLX_{n'})\boxtimes\XX_{1,n+1}$ over the tensor product $\TL{N}\boxtimes\LQGi$ of associative algebras and
arrows show the action of both algebras -- the Temperley--Lieb $\TL{N}$ acts
in the vertical direction (preserving the coordinate $n$), while
$\LQGi$ acts in the horizontal way. Indecomposable projective $\TL{N}$-modules
$\PrTL{n'}$ (which are discussed below in
Sec.~\ref{sec:TL-decomp}) can be recovered by ignoring all the
horizontal arrows, while tilting $\LQGi$-modules $\PP_{1,n+1}$ are obtained by ignoring all the vertical arrows
of the bimodule diagram (these are also projective and given
in~\eqref{schem-proj}.) Having the decomposition over $\LQGi$, we see that the subquotient structure of direct summands
over $\TL{N}$ is obtained by drawing arrows corresponding to all possible homomorphisms between the tilting modules.

\medskip
In the closed case, while the $\gl(1|1)$ symmetry 
generated by $F_{(1)}$, $F^\dagger_{(1)}$, and ${\N}$ remains, the
``bosonic" $s\ell(2)$ generators $F_{(2)}$ and $F^\dagger_{(2)}$ do not
commute with the action of $\JTL{N}(0)$. Instead, we have essentially
only a ``fermionic" subalgebra of ${\cal A}_{1|1}$ that generates the
centralizer of $\JTL{N}(0)$.  We next describe in detail the
centralizer of (the representation $\repgl$ of) $\JTL{N}$ obtained
first in our previous paper~\cite{GRS1} where it is realized as a
subalgebra of the quantum group $\LQG$.

\subsection{The centralizer of $\JTL{N}(0)$}
 Recall first that
the \textit{full} quantum group $\LQG$ with $\q = e^{i\pi/p}$ and
an integer $p\geq2$ is generated by $\E$, $\F$, $\K^{\pm1}$, and $\e$, $\f$,
$\h$. The first three generators satisfy the standard quantum-group relations
\begin{equation*}
  \K\E\K^{-1}=\q^2\E,\quad
  \K\F\K^{-1}=\q^{-2}\F,\quad
  [\E,\F]=\ffrac{\K-\K^{-1}}{\q-\q^{-1}},
\end{equation*}
with additional relations
\begin{equation*}
  \E^{p}=\F^{p}=0,\quad \K^{2p}=\one,
\end{equation*}
and  the divided powers $\f\sim \F^p/[p]!$ and $\e\sim \E^p/[p]!$ satisfy the usual $s\ell(2)$-relations:
\begin{equation*}
  [\h,\e]=\e,\qquad[\h,\f]=-\f,\qquad[\e,\f]=2\h.
\end{equation*}
The full list of relations with comultiplication formulae are borrowed
from~\cite{BFGT} and listed in App.~A where we also give relations for
the related  quantum group generators
$S^{\pm}$, $S^z$ and $\q^{S^z}$ more common in spin chain literature. We will use in the text only the
notation $S^z$ which is proportional to $\h$ as $2\h=S^z$.

 For applications to $\gl(1|1)$
spin-chains, we consider only the case $p=2$ and set  in what follows
$\q\equiv i$. 
As a module over $\LQG$, the spin
chain $\chVv$ is a tensor product of $N$ copies of
two-dimensional irreducibe representations defined as $\E\mapsto\sigma^+=\sigmap$,
$\F\mapsto\sigma^-=\sigmam$, $\K\mapsto \q\sigma^z=\sigmaq$, and $\e\mapsto0$, $\f\mapsto0$.
Using the $(N-1)$-folded
comultiplications~\eqref{N-fold-comult-cap},
\eqref{N-fold-comult-ren-e}, and~\eqref{N-fold-comult-ren-f}, 
we obtain
the representation $\repQG:\LQG\to\Endo_{\oC}(\chVv)$ in terms of
the  operators $f_j$ and $f^{\dagger}_j$ defined in Sec.~\ref{sec:super-spin-ch-def},
\begin{align}
\repQG(\h)&=\half\sum_{j=1}^{N} (-1)^jf_j^\dagger f_j-\ffrac{L}{2},\;
&\repQG(\e) &= \q^{-1}\sum_{1\leq j_1<j_2\leq N}f_{j_1}^\dagger f_{j_2}^\dagger,\;
&\repQG(\f) &= \q\sum_{1\leq j_1<j_2\leq N}f_{j_1}f_{j_2},\label{QG-ferm-2}\\
\repQG(\K)&=(-1)^{2\repQG(\h)},\;
&\repQG(\E)& = \sum_{j=1}^N f_j^\dagger\,\repQG(\K),\;
&\repQG(\F) &= \q^{-1}\sum_{j=1}^N f_j.\label{QG-ferm-1}
\end{align}

\begin{dfn}\label{dfn:Uqodd}
We now introduce an associative algebra $\LQGodd$, with $\q=i$. The algebra $\LQGodd$ is
generated by $\FF n$, $\EE m$ ($n,m\in\oN\cup\{0\}$), $\K^{\pm1}$, and
$\h$ with the following defining relations
\begin{gather}
\K \EE m \K^{-1} = \q^2 \EE m,\qquad \K \FF n \K^{-1} = \q^{-2} \FF n,
\qquad \K^4=\one,\label{Uqodd-dfn-1}\\
[\EE m,\FF n] = \sum_{r=1}^{\text{min}(n,m)}P_r(\h)\FF{n-r}\EE{m-r},\\
%% [\EE m,\EE n]=0,\quad [\FF m,\FF n]=0,\quad 
\EE{m}\EE{n} = \EE{n}\EE{m} = 0,\quad \FF{m}\FF{n}= \FF{n}\FF{m} =
0,\quad [\K,\h]=0,\label{Uqodd-dfn-3}\\
% [h,\EE m] = \EE{m}(\ffrac{1}{2} A + m), \qquad [h,\FF n] = -(\ffrac{1}{2}A + n)\FF{n},
[\h,\EE m] = (m + \ffrac{1}{2})\EE{m}, \qquad [\h,\FF n] = -(n + \ffrac{1}{2})\FF{n},\label{Uqodd-dfn-4}
\end{gather}
where $P_r(\h)$ are polynomials in $\h$ obtained from the usual $s\ell(2)$
relation
$[\e^m,\f^{n}]=\sum_{r=1}^{\text{min}(n,m)}P_r(\h)\f^{n-r}\e^{m-r}$,
and we assume that $\sum_{r=1}^{0}f(r)=0$.
% where the element $A$ is defined in~\eqref{A-element}.

The algebra $\LQGodd$ has the PBW basis $\EE n \FF m \h^k\K^l$, with
$n,m,k\geq0$ and $0 \leq l \leq 3$.
The positive Borel
subalgebra is generated by $\h$, $\K$ and $\EE n$ while the negative subalgebra -- by $\h$, $\K$ and $\FF n$, for $n\geq0$.
\end{dfn}

\begin{rem}\label{rem:inj-hom-LQGodd}
There is an injective homomorphism $\LQGodd\to\LQG$ of associative
algebras defined as
\begin{equation}\label{LQGodd-LQG-hom}
\EE m \mapsto \e^m\E\,\ffrac{\K^2+\one}{2}, \qquad \FF n \mapsto
\f^n\F\,\ffrac{\K^2+\one}{2},\qquad m,n\geq0.
\end{equation}
This homomorphism together with expressions~\eqref{QG-ferm-2}
and~\eqref{QG-ferm-1} defines  by restriction a representation of
$\LQGodd$ on the space $\chVv$ which we also denote by $\repQG$. The
representation $\repQG$ of
$\LQGodd$ is given in~\cite{GRS1} in terms of the fermionic operators $f_j$ and $f^{\dagger}_j$.
%% Thus defined subalgebra in $\LQG$ can be realized as the limit $\q\to i$ of the renormalized \textit{odd}-powers of the $\E$ and $\F$ in $U_{\q}s\ell(2)$ at generic $\q$:
%% \begin{equation*}
%% \ffrac{\E^{2m+1}}{[2m+1]!} \xrightarrow[\;\q\to i\;]{} \e^{m}\E, \qquad
%% \ffrac{\F^{2n+1}}{[2n+1]!} \xrightarrow[\;\q\to i\;]{} \f^{n}\F, 
%% \qquad n,m\geq 0,
%% \end{equation*}
%% up to some irrelevant coefficients.
\end{rem}

We next recall the result~\cite{GRS1} about the
centralizer of the $\JTL{2L}(0)$.
\begin{thm}\label{Thm:centr-JTL-main}\cite{GRS1}
Fix $\q=i$ and let 
%an algebra 
$\cent$ be the subalgebra in $\repQG\bigl(\LQG\bigr)$
generated by $\LQGodd$ and $\f^L$, $\e^L$. On the alternating
periodic $\gl(1|1)$ spin chain $\Hilb_{2L}$,
 the centralizer $\centJTL$ of the image of the
Jones--Temperley--Lieb algebra $\repgl\bigl(\JTL{2L}(0)\bigr)$ is the associative algebra $\cent$, where
$\repgl$ is defined in~\eqref{rep-JTL-1}.
\end{thm}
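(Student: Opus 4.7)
The plan is to verify the equality $\centJTL=\cent$ by proving the two inclusions separately. Since the Temperley--Lieb algebra $\TL{N}$ sits inside $\JTL{N}$ (cf.\ the diagram \eqref{diag-alg}), we have $\centJTL\subseteq\centTL$. The centralizer of $\TL{N}$ on the open $\gl(1|1)$ spin chain is known to coincide with $\repQG(\LQGi)$ (this is the $\gl(1|1)$ analogue of quantum Schur--Weyl duality at $\q=i$, and is the content of \eqref{openbimodule-fin} in the open-chain discussion). Therefore the task reduces to identifying the subalgebra of $\repQG(\LQGi)$ whose elements additionally commute with the translation $u^2$ that, together with the $e_j^{\gl}$, generates the $\JTL{N}$-action.

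First I would verify the easy inclusion $\cent\subseteq\centJTL$. Using \eqref{QG-ferm-2}--\eqref{QG-ferm-1} and the anticommutation relations \eqref{f_j-rel}, every element of $\repQG(\LQG)$ commutes with each $e_j^{\gl}$ (the standard quantum-group symmetry of the local densities). The delicate point is commutation with $u^2$. Conjugation by $u^2$ relabels $f_j\mapsto f_{j+2}$ (with the appropriate sign from \eqref{f_j-rel}) and acts on a PBW monomial of $\LQG$ by a phase depending on $\K^2$. Thus only those elements on which $\K^2$ acts as $+\one$ survive, which is precisely the effect of the projector $\tfrac{\K^2+\one}{2}$ appearing in the embedding \eqref{LQGodd-LQG-hom}. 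This shows that every generator $\EE m, \FF n$ of $\LQGodd$ commutes with $u^2$; for the extremal divided powers the phase is $\q^{\pm 2L}=1$, so $\e^L$ and $\f^L$ also commute with $u^2$.

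For the reverse inclusion $\centJTL\subseteq\cent$, I would decompose $\repQG(\LQGi)$ into eigenspaces under the adjoint action of $u^2$ via the PBW basis. An element commutes with $u^2$ if and only if all its PBW components are $u^2$-invariant. Since the phase analysis of the previous paragraph is exhaustive, the $u^2$-invariant part of $\repQG(\LQGi)$ is spanned by monomials whose ``non-divided'' factors $\E,\F$ appear with the projector $\tfrac{\K^2+\one}{2}$, together with the purely-divided monomials in $\e,\f$. The former lie in the image of $\LQGodd$ under \eqref{LQGodd-LQG-hom}, while the latter are generated (in the spin-chain representation where $\e^{L+1}=\f^{L+1}=0$) by $\e^L$ and $\f^L$. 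Collecting these monomials yields exactly $\cent$.

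The main obstacle I expect is to control the extremal powers $\e^L$ and $\f^L$ carefully: although $\e^p=\f^p=0$ in $\LQG$ for $p=2$, the divided powers are nonzero and on $\chVv$ the maximal nontrivial exponent is $L$, which is where the interplay between $\LQGodd$ and the extra generators is most subtle (one must make sure no further combinations arise, and no relations among the proposed generators are overlooked). A clean way to finish is a dimension check: compute $\dim\Endo_{\JTL{N}}(\chVv)$ from the $\LQGi$-decomposition of the spin chain (whose blocks are described in Sec.~\ref{sec:rep-th-centJTL}), and compare with the size of the proposed generating set of $\cent$ via its PBW basis from Definition~\ref{dfn:Uqodd}.
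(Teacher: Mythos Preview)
The paper does not prove this theorem; it is quoted verbatim from~\cite{GRS1}, so there is no proof in the present text to compare against. Nevertheless, your sketch has genuine gaps that would prevent it from going through as written.

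First, the assertion that ``every element of $\repQG(\LQG)$ commutes with each $e_j^{\gl}$'' is false for $j=N$: as the paper states explicitly just after~\eqref{QG-ferm-1}, the bosonic $s\ell(2)$ generators $\e,\f$ (equivalently $F_{(2)},F^{\dagger}_{(2)}$) do \emph{not} commute with the $\JTL{N}$ action. You presumably meant $1\leq j\leq N-1$, and then use $u^2$ to generate $e_N$; this should be said.

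More seriously, the proposed mechanism for the $u^2$-analysis is incorrect. You claim that conjugation by $u^2$ acts on a PBW monomial of $\LQG$ ``by a phase depending on $\K^2$'', and that the projector $\tfrac{\K^2+\one}{2}$ in~\eqref{LQGodd-LQG-hom} selects the invariant part. But on the spin chain $2\h=S^z$ takes integer values, so $\K=(-1)^{S^z}$ and $\K^2=\one$ identically; the projector is trivially the identity and cannot be the selecting device. Moreover, $u^2$-conjugation fixes the shift-symmetric sums $\E,\F,\K,\h$, while on $\e=\q^{-1}\sum_{j_1<j_2}f_{j_1}^{\dagger}f_{j_2}^{\dagger}$ it produces boundary correction terms (from the pairs with $j_2\in\{N-1,N\}$ wrapping around), not a scalar phase times $\e$. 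So $u^2\,\e\,u^{-2}-\e$ is a nonzero operator, not $(\text{phase}-1)\e$.

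Because of this, the key step --- why the combinations $\e^m\E$ and $\f^n\F$ \emph{do} commute with $u^2$ while $\e^m,\f^n$ with $0<m,n<L$ do \emph{not}, and why exactly the extremal powers $\e^L,\f^L$ are the only surviving ``purely divided'' elements --- is not established by your argument. This is precisely the nontrivial computation carried out in~\cite{GRS1} using the explicit fermionic realizations; a phase/eigenvalue bookkeeping cannot replace it. Your proposed dimension check at the end is a reasonable consistency test, but it presupposes the inclusion $\cent\subseteq\centJTL$, which your phase argument has not actually proved.
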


We rely below on  the representation theory of the
$\JTL{N}$-centralizer $\centJTL$ to study the  decomposition of the
periodic spin-chain into indecomposable $\JTL{N}$-modules. The question of what replaces the appealing bi-module
structure known to exist in the open case when one turns to periodic
systems is the subject of the following three sections.

\section{Representation theory of the centralizer $\centJTL$}\label{sec:rep-th-centJTL}
We now briefly describe  the representation theory of 
$\centJTL$ (which coincides up to trivial details  due to  the extra $\f^L$, $\e^L$
with the  representation theory of $\LQGodd$). We begin with recalling  the decomposition of the spin-chain
$\Hilb_{2L}$ over $\LQG$ and then we describe all simple subquotients
over $\centJTL$ occuring in the decomposition.  We then
use this in studying particular indecomposable modules constituting
blocks in a spin-chain decomposition over the centralizer $\centJTL$.
 We give a decomposition
over $\LQGodd$ in Sec.~\ref{subsec:PPodd-def} and describe spaces of
intertwining operators among indecomposable direct summands in the
decomposition in Sec.~\ref{sec:Tn-homo}, where we also give
 important facts about 
 extensions (``glueings") among simple $\LQGodd$-modules.

\subsection{Spin-chain decomposition over $\LQG$}\label{subsec:LQG-decomp-def}

We first recall the decomposition of $\Hilb_{2L}$ over the full quantum
group $\LQG$ (which is relevant to the open case~\cite{ReadSaleur07-2}), in the
representation $\repQG$ defined in~\eqref{QG-ferm-2}
and~\eqref{QG-ferm-1} (we suppress usually the notation $\repQG$ in
the text below and write simply $\E$ instead of $\repQG(\E)$, {\it etc.})
\begin{equation}\label{decomp-LQG}
\Hilb_{N}|_{\rule{0pt}{7.5pt}%
\LQG} =  \bigoplus_{j=1}^{L}\IrrTL{j}\boxtimes \PP_{1,j},\qquad \quad
%% \text{with}\quad
 N=2L,
\end{equation}
 where multiplicities $d^0_j=\sum_{i=j}^L(-1)^{j-i}\left(\binom{N}{L+i}
 - \binom{N}{L+i+1}\right)$ are dimensions of irreducibles over
 $\TL{2L}(0)$. The indecomposable direct summands $\PP_{1,j}$ in the decomposition are projective
 covers of simple modules $\XX_{1,j}$ which are introduced in
 App.~\Approjmodbase~with the $\LQG$-action given
 in~\eqref{eq:sl-irrep} (a module $\XX_{1,j}$ has a trivial action of
 $\E$, $\F$, and $\K$, while it is a $j$-dimensional simple $s\ell(2)$-module.)
  We recall the subquotient
structure of $\PP_{1,n}$ is then
\begin{equation}\label{schem-proj}
\includegraphics[scale=0.9]{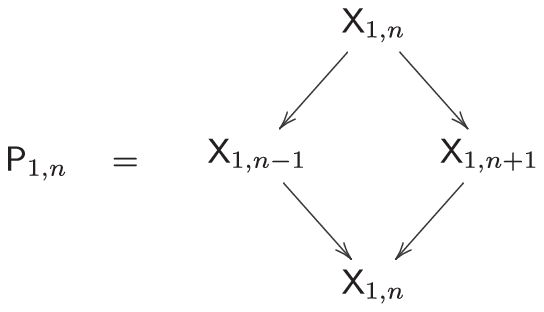}
\end{equation}
with the $\LQG$-action explicitly described in
App.~\Approjmodbase~which is the particular case $\q=i$
of~\cite{BFGT}. In the diagram~\eqref{schem-proj}, we assume
$\XX_{1,0}\equiv0$.

\subsection{Simple modules over $\LQGodd$}
We now describe simple modules over $\LQGodd$ occurring in the spin-chain
decomposition. Using Rem.~\ref{rem:inj-hom-LQGodd},
we consider the restriction to
the subalgebra $\LQGodd$ in a simple  $\LQG$-module $\XX_{1,r}$.
The action~\eqref{eq:sl-irrep} on  $\XX_{1,r}$ where the generators $\E$ and $\F$ act
trivially, and thus $\EE{n}$ and $\FF{m}$ do the same, proves that the restriction  decomposes onto one-dimensional subspaces 
\begin{equation*}
\XX_{1,r}|_{\rule{0pt}{7.5pt}%
\LQGodd} = \bigoplus_{n=1-r}^{r-1}\Xodd{n},
\end{equation*}
where we introduced the  notation $\Xodd{n}$  for simple modules 
over $\LQGodd$. These one-dimensional
modules  are parametrized by the weight $n$
with respect to the Cartan generator $2\h=S^z$.

With the use of the decomposition~\eqref{decomp-LQG} and~\eqref{schem-proj}, we conclude that all the simple
modules over $\LQGodd$ that occur as subquotients in the spin-chain
$\Hilb_{2L}$ are the one-dimensional modules $\Xodd{n}$ parametrized by the weight $n$,
 where $n$ is an integer number in the interval
$-L\leq n\leq L$.

The only difference in the
representation theory of $\centJTL$ when compared to $\LQGodd$ is due to the two additional generators
$\f^L$ and $\e^L$ which map the two $\JTL{N}$-invariants 
 (at $S^z=\pm L$) of the spin-chain $\Hilb_{2L}$ onto each
other.
Simple modules over $\centJTL$ are the same $\Xodd{n}$ for $-L+1\leq
n\leq L-1$ (since $\e^L$ and $\f^L$ act then trivially) and we use the same
notation for them. The $\LQGodd$-modules $\Xodd{\pm L}$ are combined
by the action of $\e^L$ and $\f^L$ into a two-dimensional simple
module over $\centJTL$ which we also denote as $\Xodd{L}$ (to avoiding 
confusion we explicitly indicate the corresponding algebra in our
decompositions).  In what follows, we will contend ourselves by studying  modules over $\LQGodd$. Modules over $\centJTL$ are easily recovered, and the distinction is not relevant for our purposes.

\begin{rem}\label{rem:others}
There are also simple $\LQGodd$-modules of dimension $2r$ with
the action given by the restriction on  the $2r$-dimensional
$\LQG$-modules $\XX_{2,r}$, with $r\geq 1$, described
in~\cite{BFGT}. We do not give details because these modules do not
appear in our spin-chains.
\end{rem}

\subsection{Spin-chain decomposition over $\LQGodd$}\label{subsec:PPodd-def}
We now introduce  indecomposable $\LQGodd$-modules
$\PPodd{n}$ which are used then in the decomposition of
$\Hilb_N$. With the use of the algebra homomorphism~\eqref{LQGodd-LQG-hom}, we define the modules $\PPodd{n}$ as the restriction of the projective $\LQG$-modules
$\PP_{1,n}$ described above in Sec.~\ref{subsec:LQG-decomp-def}. 
Using the
homomorphism~\eqref{LQGodd-LQG-hom} together with the action in $\PP_{1,n}$
from App.~\Approjmodbase,
one easily shows  that all $\PPodd{n}$, with $1\leq n\leq L$, are indecomposable
$\LQGodd$-modules with dimension  $4n$.

As an example, for the restriction of the projective module
$\PP_{1,2}$ covering the doublet representation, we have the following diagrams
of subquotient structure
%% \newcommand{\bbar}{\ar@[|(1.3)]}
%% \newpage
%% \thispagestyle{empty}
\begin{equation*}
\includegraphics[scale=0.9]{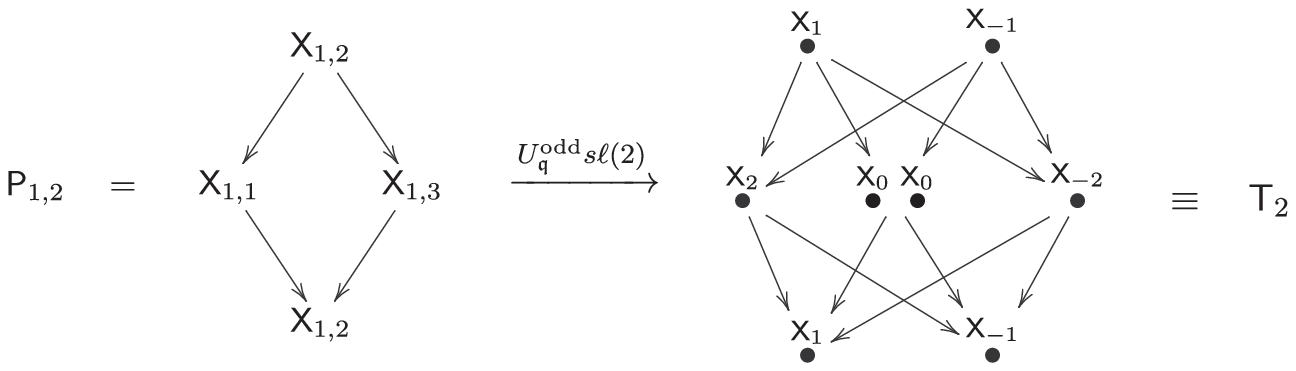}\\
\end{equation*}
where the horizontal arrow means the restriction to the subalgebra
$\LQGodd$, and the diagram on the right depicts the subquotient
structure for $\PPodd{2}$. The two-dimensional top subquotient $\XX_{1,2}$ in
$\PP_{1,2}$ is split into two one-dimensional top subquotients $\Xodd{\pm1}$ in
$\PPodd{2}$, and the arrows are split in such a way that short
south-\textit{west} arrows, say mapping from $\Xodd{1}$ to $\Xodd{2}$, and south-\textit{east}
ones denote the action of $\E\equiv\EE{0}$ and $\F\equiv\FF{0}$, respectively, while
the long south-west, say mapping from  $\Xodd{-1}$ to $\Xodd{2}$, and south-east
arrows denote the action of $\EE{1}$ and $\FF{1}$, respectively. Due
to~\eqref{Uqodd-dfn-3} and the fermionic relations
$\EE{0}^2=\FF{0}^{2}=0$, it follows that a node in the middle of
the diagram, say the left $\Xodd{0}$, has ingoing arrows of either
south-west or south-east direction and  outgoing arrows of opposite direction.

We next study the decomposition of the representation $\repQG$ of $\LQGodd$ in $\Hilb_N$. To help the reader, we begin with an
example for $N=8$ (or $L=4$).
The decomposition is given in Fig.~\ref{Uq_odd_T-N8},
\begin{figure}\centering
    \includegraphics[scale=1]{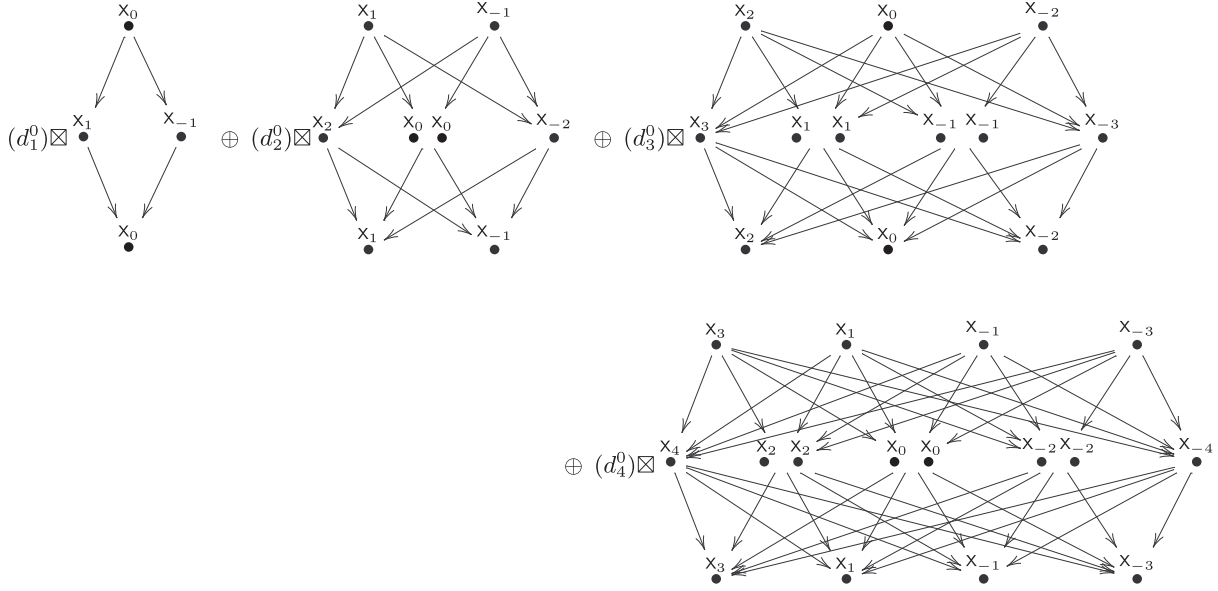}
  \caption{The decomposition of the spin-chain ($N=8$) over
$\LQGodd$ into four indecomposable modules $\PPodd{n}$ with the
    multiplicities $(d^0_n)$, $1\leq n\leq 4$.  Each node in the
 middle level of $\PPodd{n}$ has ingoing arrows only of one type
 (either south-west or south-east) and outgoing ones  of the opposite type.}
    \label{Uq_odd_T-N8}
    \end{figure}
where the multiplicities $d^0_n$ are given by the same expression as
the one after~\eqref{decomp-LQG} for the open case, because the
restriction in each $\PP_{1,n}$ is an indecomposable module over
$\LQGodd$ as we noted before. For $N=8$ one finds $(d^0_1)=(14)'$,
$(d^0_2)=(14)$, $(d^0_3)=(6)$, and $(d^0_4)=(1)$. These numbers   must  be
dimensions of simple modules over $\JTL{N}$, a fact we will discuss more below
 (these dimensions turn out to coincide with those of the simples
in the open case, a peculiarity of this value of $\q=i$).  We note that
a south-west arrow mapping from a subquotient $\Xodd{m}$ to
$\Xodd{n}$, {\it i.e.}, $n>m$, represents an action of the raising generator
$\EE{(n-m-1)/2}$ while a south-east arrow mapping from a subquotient
$\Xodd{m}$ to $\Xodd{n}$, {\it i.e.}, $n<m$, represents an action of the
lowering generator $\FF{(m-n-1)/2}$. We also note that all
subquotients of $\PPodd{n}$ in the middle level (those $\Xodd{k}$ that satisfy $k-n=0\mod2$) are divided into two classes -- one
having only south-west ingoing  and south-east outgoing arrows,
and the other  has having only south-east ingoing
and south-west outgoing arrows. In Fig.~\ref{Uq_odd_T-N8}, we thus have for  $\PPodd{4}$ that the left-most subquotient $\Xodd{2}$ which is in the image of $\FF 0$ is mapped by $\EE 0$ to the $\Xodd{3}$ in the bottom,   while all generators $\FF{n}$ represented by south-east arrows act as zero on it, and, in contrast, the right-most node $\Xodd{2}$ is sent to zero by $\EE 0$ while it is mapped to three subquotients corresponding to the targets of the three south-east arrows.

\medskip

\begin{figure}\centering
   \begin{equation*}%\label{decomp-N8}
    {\footnotesize
 \xymatrix@C=3pt@R=70pt@M=2pt@W=2pt{
 	&\stackrel{\Xodd{n-1}}{{\color{col-p-3}\bullet}}
 		\ar@[col-p-1][dl]  \ar@[col-m-2]@{->}[drrr]  \ar@[col-m-4]@{->}[drrrrrrr]\ar@[col-m-5]@{->}[drrrrrrrrrrr]
           && 
           &\dots&\stackrel{\Xodd{n-2k'-1}}{{\color{col-p-k} \bullet}}
 	   \ar@[col-p-1][dl] \ar@[col-p-3][dlllll] \ar@[col-m-2][drrr] \ar@[col-m-4][drrrrrrr]
 	   %%%%%%
 	   &\dots&\stackrel{\Xodd{2l'-n+1}}{{\color{col-m-l} \bullet}}
 	    \ar@[col-p-2][dlll] \ar@[col-p-4][dlllllll] \ar@[col-m-1][dr] \ar@[col-m-3][drrrrr] 
 	    &\dots&
 	    &&\stackrel{\Xodd{-n+1}}{{\color{col-m-2} \bullet}}
 		 \ar@[col-p-2][dlll] \ar@[col-p-4][dlllllll]\ar@[col-p-5]@{->}[dlllllllllll] \ar@[col-m-1][dr]&\\
 %%%%%% SECOND level: %%%%%%
 	\stackrel{\Xodd{n}}{{\color{col-p-4}\bullet}}
 		\ar@[col-m-1][dr]\ar@[col-m-3][drrrrr] \ar@[col-m-4]@{->}[drrrrrrr]\ar@[col-m-5]@{->}[drrrrrrrrrrr]
 	  &&\dots
 		&&\stackrel{\Xodd{n-2k}}{{\color{col-p-k}\bullet}}\stackrel{\Xodd{n-2k}}{{\color{col-p-k}\bullet}} 
 		    \ar@[col-p-2][dlll] \ar@[col-m-1][dr] \ar@[col-m-2][drrr]\ar@[col-m-4][drrrrrrr]
 	    &&\dots
	     		&&{\stackrel{\Xodd{2l-n}}{{\color{col-p-k}\bullet}}\stackrel{\Xodd{2l-n}}{{\color{col-p-k}\bullet}}} 
			\ar[dl]\ar[dlll] \ar[dlllllll]\ar[drrr]
 	     &&\dots
 	      &&\stackrel{\Xodd{-n}}{{\color{col-m-4}\bullet}}\ar@[col-p-1][dl]\ar@[col-p-3][dlllll] \ar@[col-p-4]@{->}[dlllllll]\ar@[col-p-5]@{->}[dlllllllllll]\\
 %%%%%%%%%THIRD level %%%%%%%
       &\stackrel{\Xodd{n-1}}{{\color{col-p-3} \bullet}}
        &&
        &\dots&\stackrel{\Xodd{n-2k'-1}}{{\color{col-p-k} \bullet}}
        &\quad\dots\quad&\stackrel{\Xodd{2l'-n+1}}{{\color{col-m-l} \bullet}}
 	&\dots&
 	 &&\stackrel{\Xodd{-n+1}}{{\color{col-m-3} \bullet}}&
    }  
 %%   \xymatrix@C=2pt@R=52pt@M=1pt@W=1pt{&&\\&\oplus\;\,(d^0_5)\boxtimes&\\&&}\!\!\!\!
 }
  \end{equation*}
% \newpage
  \caption{Subquotient structure of the $\LQGodd$-modules $\PPodd{n}$,
  where $n\geq1$, $1\leq k,l\leq n-1$,  $1\leq k',l'\leq n-2$. Each simple subquotient $\Xodd{k}$ appears once in the top and bottom parts of the diagram, and each $\Xodd{k}$, with $-n<k<n$ and $k-n=0\modd2$, appears twice in the middle.
 A south-west arrow from $\Xodd{m}$ to
  $\Xodd{n}$, {\it i.e.}, when $n>m$, represents the generator $\EE{(n-m-1)/2}$ while a
  south-east arrow with $n<m$ corresponds to the action of $\FF{(m-n-1)/2}$.}
    \label{Uq_odd_Tn}
    \end{figure}

In general, restricting the open chain decomposition~\eqref{decomp-LQG} on $\LQGodd$ and because the
restriction in each $\PP_{1,n}$ is an indecomposable module over
$\LQGodd$ as we noted before, we thus have the following decomposition over $\LQGodd$ 
\begin{equation}\label{eq:Hilb-decomp-Uqodd}
\Hilb_N|_{\rule{0pt}{8.5pt}%
\LQGodd} = \bigoplus_{n=1}^{L}\IrrTL{n}\boxtimes\PPodd{n}
\end{equation}
with the subquotient structure for $\PPodd{n}$, with $n\geq2$, given in Fig.~\ref{Uq_odd_Tn}.
We note that each node in the
 middle level of each $\PPodd{n}$ has ingoing arrows only of one type
 (either south-west or south-east) and outgoing ones  of the opposite type. This
 trivially follows from the relation~\eqref{Uqodd-dfn-3} and the restriction on the subalgebra $\LQGodd$
 using formulas in App.~\Approjmodbase. With the use of the
 homomorphism~\eqref{LQGodd-LQG-hom}, the formulas give an explicit
 action of $\EE{n}$ and $\FF{m}$, with $n,m\geq0$, in the basis used
 in App.~\Approjmodbase. We only note again that a south-west arrow mapping from a subquotient $\Xodd{m}$ to
$\Xodd{n}$, {\it i.e.}, when $n>m$, represents an action of the raising generator
$\EE{(n-m-1)/2}$ while a south-east arrow with $n<m$ corresponds to $\FF{(m-n-1)/2}$.

The space $\Hilb_{2L}$ being considered as a module over the centralizer $\centJTL$ has the same decomposition~\eqref{eq:Hilb-decomp-Uqodd} with the only difference in the subquotient structure for $\PPodd{\pm L}$. The two nodes $\Xodd{\pm L}$ in Fig.~\ref{Uq_odd_T-N8}
(for $L=4$) and Fig.~\ref{Uq_odd_Tn} are mixed by the action of $\f^L$ and $\e^L$ into one simple subquotient over $\centJTL$.

Finally, we note that the full dimension of the $\gl(1|1)$ spin chain is recovered via
\begin{equation}
\sum_{n=1}^L 4n ~d_n^0=4\times 2^{2L-2}=2^{2L}
\end{equation}
in agreement with the dimension of the $\PPodd{n}$ being equal to $4n$. 
The same formula would represent the dimension of the
$\gl(1|1)$ Hilbert space in the open case, $4n$ now being the dimension
of projective modules of the centralizer given by the full
quantum group $\LQG$.
%% , a semi-direct product of $s\ell(2)$ and $\psl(1|1)$.
These
are replaced here
by modules over $\LQGodd$.

In the rest of this section and in the next section, we describe our rather technical results (homomorphisms between direct summands $\PPodd{n}$ and the structure of JTL standard modules) which are used then in an analysis of the spin-chain decomposition over the JTL algebra.
The reader can skip the rest of this section and the next section in  first reading and go over directly to Sec.~\ref{sec:sp-ch-decomp-JTL}
where 
%representation theory 
the decomposition over $\JTL{N}$ is described. 

\subsection{Spaces of intertwining operators}\label{sec:Tn-homo}
We  now describe  all intertwining operators respecting the $\LQGodd$ action on the
spin-chain by studying homomorphisms among the indecomposable
direct summands $\PPodd{n}$ in the
 decomposition~\eqref{eq:Hilb-decomp-Uqodd} for each even $N$. We
 begin with  basic information about first extension groups for a
 pair of simple modules. Then, we introduce Weyl-type modules that
 allow us to describe images and kernels of all the homomorphisms between $\PPodd{n}$.

\subsubsection{Extensions for $\LQGodd$}

We  study possible extensions between simple
$\LQGodd$-modules in order to construct indecomposable modules in what
follows, and  begin our description of the extensions by introducing some
standard notations and definitions. 

Let $A$ and $C$ be left
$\LQGodd$-modules. We call a
short exact sequence 
%of $\LQGodd$-modules
 $0\to A\to B\to C\to 0$
 an \textit{extension} of $C$ by $A$, and we let $\Ext(C,A)$ denote
the set of equivalence classes (see, {\it e.g.},~\cite{[M]}) of such extensions. Qualitatively, the extension group $\mathrm{Ext}^1(C,A)$ is the 
vector space of possible  glueings between modules $A$ and $C$ 
into an indecomposable module $B$ containing a submodule isomorphic to $A$ and
having at the top the subquotient $C$. 
%% We can summarize the obtained information about extensions of
%% irreducible $\LUresSL2$-modules.
%% in the following lemma.

First extensions  can be  analyzed in principle by simply using  the defining relations from Def.~\ref{dfn:Uqodd}.
While this task is difficult in general, our problem is rather easy because the modules $\Xodd{n}$ are one dimensional. 
 Therefore, when  asking  how the action of  $\LQGodd$  on $\Xodd{n}$ can be modified by glueing $\Xodd{m}$ to it, the relations~\eqref{Uqodd-dfn-4} in particular  (recall that $\h$ has the eigenvalue $n/2$ on $\Xodd{n}$) show that there is only possible one generator that can do it -- it is $\EE{(m-n-1)/2}$ if $m>n$, and $\FF{(n-m-1)/2}$ if $n>m$. 
 
The following result then easily follows.
%relations~\eqref{Uqodd-dfn-4} and the PBW basis given in Def.~\ref{dfn:Uqodd}.
\begin{prop}\label{prop:exts}
For $-L\leq n,m\leq L$,
  there are vector-space isomorphisms
  \begin{equation*}
    \Ext(\Xodd{n},\Xodd{m})\cong
    \begin{cases}
      \oC,\quad n+m = 1\mod2,\\
      0,\quad \text{otherwise}.
    \end{cases}
  \end{equation*}
%There are no nontrivial extensions between $\XX^{\pm}_{p,r}$ and any
%irreducible module. 
All other 
%$\Ext$
first extensions between simple modules in the category
of finite-dimensional $\LQGodd$-modules vanish (see Rem.~\ref{rem:others} and results on extension groups in~\cite{BFGT}.)
\end{prop}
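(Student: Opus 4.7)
The plan is to compute $\Ext(\Xodd{n},\Xodd{m})$ directly by enumerating two-dimensional extensions and classifying them up to equivalence. Fix an extension $0\to\Xodd{m}\to B\to\Xodd{n}\to 0$; since each simple is one-dimensional, $\dim B=2$, and I choose a basis $\{v_m,v_n\}$ with $v_m$ spanning the submodule $\Xodd{m}$ and $v_n$ lifting a generator of the quotient $\Xodd{n}$. Within the category of finite-dimensional $\LQGodd$-modules inherited by restriction from $\LQG$-representations at $\q=i$, the Cartan generators act semisimply, so $v_n$ may be chosen as an $\h$-eigenvector of weight $n/2$, which forces $\K v_n=\q^n v_n$ via $[\K,\h]=0$.

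I would then use the weight relation $[\h,\EE{k}]=(k+\tfrac12)\EE{k}$ from~\eqref{Uqodd-dfn-4} to pin down the action of $\EE{k}$ on $v_n$: the vector $\EE{k}v_n$ must have $\h$-weight $(n+2k+1)/2$, and the only line in $B$ of weight different from $n/2$ is $\oC v_m$ of weight $m/2$. This forces $\EE{k}v_n=0$ unless $k=(m-n-1)/2\geq 0$, which requires $m>n$ and $n+m$ odd. The analogous argument for $\FF{k}$ gives a non-trivial action only when $n>m$ and $k=(n-m-1)/2$. Hence if $n+m$ is even, every generator must send $v_n$ into $\oC v_n$, so the extension splits and $\Ext(\Xodd{n},\Xodd{m})=0$. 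If $n+m$ is odd, exactly one generator acts non-trivially on $v_n$, parameterized by a single scalar $\gamma\in\oC$.

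Finally I would verify that each $\gamma$ defines a bona fide $\LQGodd$-module and that $\gamma$ is invariant under equivalence. Consistency of the defining relations~\eqref{Uqodd-dfn-1}--\eqref{Uqodd-dfn-4} is immediate: the only non-vanishing generator action on $v_n$ lands in $v_m$, and every $\EE{k'}$, $\FF{k'}$ annihilates $v_m$ because $\Xodd{m}$ is simple; the nilpotency identities~\eqref{Uqodd-dfn-3} and the bracket $[\EE{p},\FF{q}]=\sum_r P_r(\h)\FF{q-r}\EE{p-r}$ therefore reduce to $0=0$ on both basis vectors. An equivalence of extensions is a module map restricting to the identity on $\Xodd{m}$ and inducing the identity on $\Xodd{n}$, hence of the form $v_m\mapsto v_m$, $v_n\mapsto v_n+cv_m$ with $c\in\oC$; because $\EE{k}v_m=\FF{k}v_m=0$, the scalar $\gamma$ is unchanged under this change of basis, so distinct values of $\gamma$ give inequivalent extensions and $\Ext(\Xodd{n},\Xodd{m})\cong\oC$ exactly when $n+m$ is odd.

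The most delicate point of the argument is the implicit restriction to the subcategory where $\h$ acts semisimply: otherwise a Jordan block of the form $\h v_n=(n/2)v_n+\alpha v_m$ with $m=n$ would slip through the weight analysis (the relation $[\h,\EE{k}]=(k+\tfrac12)\EE{k}$ collapses to $0=0$ on $v_n$) and produce a spurious non-trivial self-extension of $\Xodd{n}$. This possibility is excluded by the standard structure theory of the finite-dimensional $\LQG$-modules at $\q=i$ recalled in~\cite{BFGT}, on which the Cartan acts diagonalizably, and is also what Rem.~\ref{rem:others} invokes when disposing of the remaining simples.
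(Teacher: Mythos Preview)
Your proof is correct and follows essentially the same approach as the paper: both exploit that the $\Xodd{n}$ are one-dimensional and use the weight relation~\eqref{Uqodd-dfn-4} to single out the unique generator $\EE{(m-n-1)/2}$ or $\FF{(n-m-1)/2}$ that can act non-trivially. Your version is considerably more detailed than the paper's sketch---you explicitly verify the defining relations on the two-dimensional module and compute the equivalence classes---and your flag about $\h$-semisimplicity (needed to exclude spurious self-extensions when $n=m$) is a genuine subtlety that the paper also defers to~\cite{BFGT}; one minor slip is the claim that $[\K,\h]=0$ forces $\K v_n=\q^n v_n$, since the commutation alone does not determine the $\K$-eigenvalue (it is fixed instead by the quotient being $\Xodd{n}$), but this does not affect the argument.
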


Using the relations, see also~\cite{GRS1},
\begin{equation*}
[\FF{0},\e^L] = L\K^{-1}\EE{L-1}, \qquad [\EE{0},\f^L] = L\K^{-1}\FF{L-1}
\end{equation*}
in the centralizer $\centJTL$ and similar ones for $\EE{n}$ and $\FF{n}$,
we obtain the same result as in Prop.~\ref{prop:exts} on the first extension groups for simple modules over  $\centJTL$. 
The only difference is in the range $-L+1\leq n,m\leq L$, and the module $\Xodd{L}$ is two-dimensional, see the comment above  Rem.~\ref{rem:others}. 

\subsubsection{Indecomposable and Weyl modules}\label{sec:ind-Weyl}

Let now $\Noddp$ denote the positive subalgebra in $\LQGodd$ generated by $\EE n$, for $n\geq0$, and $\Noddm$ denote the negative  subalgebra generated  by $\FF n$, with $n\geq0$;
let also $\Boddp$ denote the positive Borel subalgebra generated by $\h$, $\K$ and $\EE n$, for $n\geq0$, and $\Boddm$ denote the negative Borel subalgebra generated  by $\h$, $\K$ and $\FF n$, with $n\geq0$. 

 Using information about first-extension groups in
 Prop.~\ref{prop:exts}, we construct two series of indecomposable
 $\LQGodd$-modules as extensions of two semi-simple modules. The
 first series consists of modules denoted by $\Bmod{m}{n}$, where $n-m=0\mod2$ and
 $n\geq0$ and $m\geq2$, with trivial action of the positive subalgebra
 $\Noddp$ and  the subquotient structure
%% \newpage
%% \thispagestyle{empty}
%%  {\footnotesize
   \begin{equation}\label{Bmn-mod}
   \includegraphics[scale=0.9]{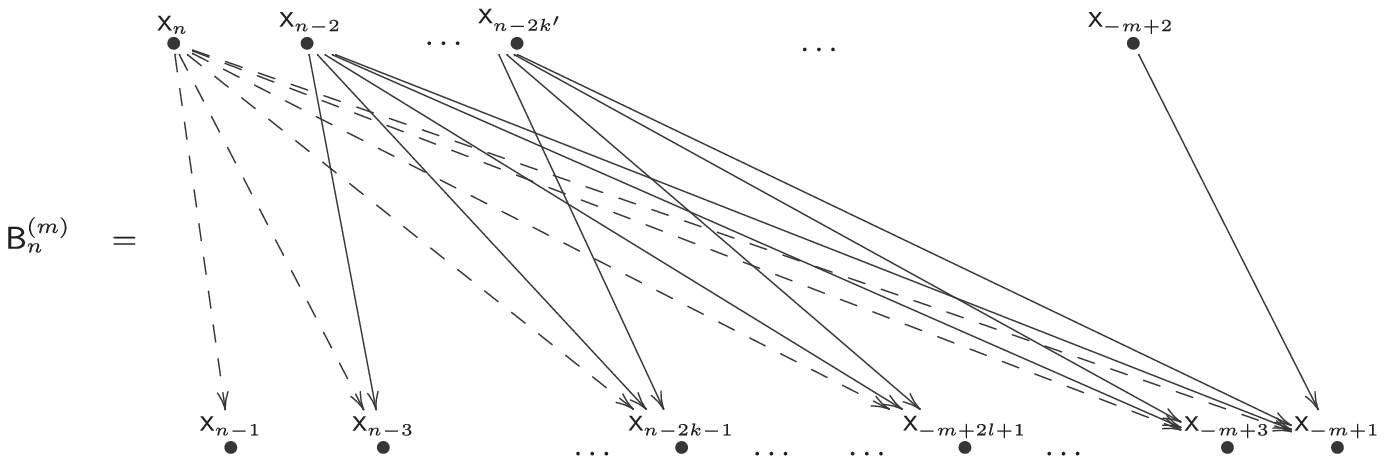}
  \end{equation}
%%  }
%% \newpage
  The
second series consists of modules $\Rmod{m}{n}$ with  trivial action of the
 negative subalgebra $\Noddm$ and the subquotient structure
%% \newpage
%% \thispagestyle{empty}
%%   {\footnotesize
   \begin{equation}\label{Rmn-mod}
   \includegraphics[scale=0.9]{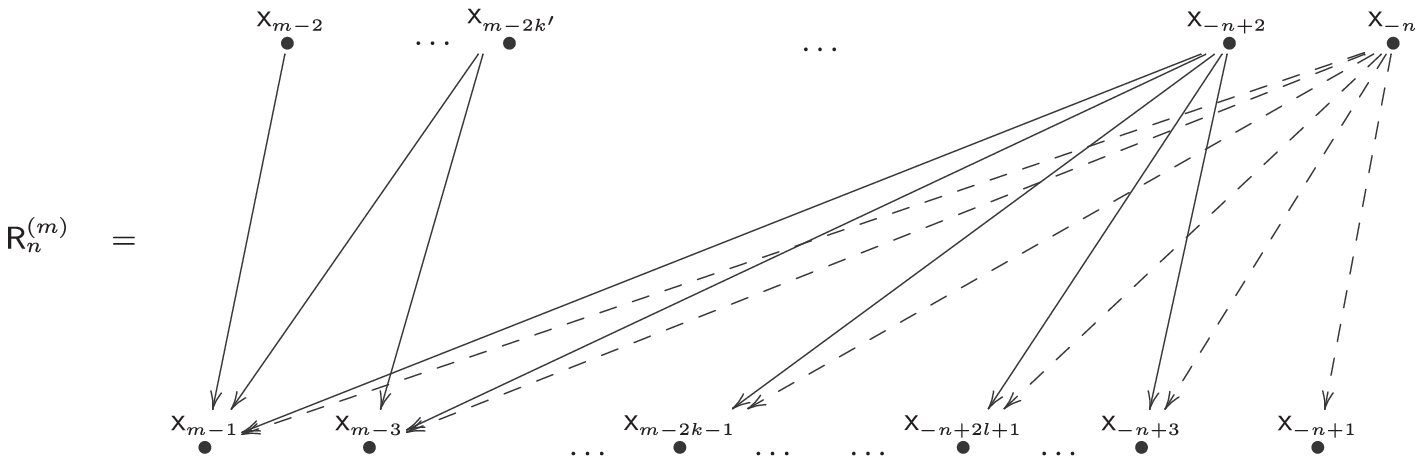}
  \end{equation}
%%  }
%% \newpage
Here,  representatives from $\Ext(\Xodd{l},\Xodd{l'})$ are 
%% the action of $\LQGodd$ generators
depicted by a south-west arrow if $l'>l$ and a south-east arrow if
$l'<l$ (dash lines are used just for clarity). 
We note that the source and the target
of an arrow uniquely define the generator represented by the arrow. The generators $\EE k$ of the positive
subalgebra are represented in~\eqref{Rmn-mod} by south-west arrows mapping from a node
$\Xodd{l}$ to $\Xodd{l'}$ whenever $(l'-l-1)/2=k$,   and the action of
$\FF k$ from the negative subalgebra is given in~\eqref{Bmn-mod} by south-east arrows mapping from a node
$\Xodd{l}$ to $\Xodd{l'}$ whenever $(l'-l+1)/2=-k$.

%% The modules $\Bmod{m}{n}$ and $\Rmod{m}{n}$ play a role
%% of the (contragredient) Weyl modules over $\LQG$.
The modules ($\Bmod{n+2}{n}$) $\Bmod{n}{n}$ and ($\Rmod{n+2}{n}$) $\Rmod{n}{n}$ play the role
of the (contragredient) Weyl modules over $\LQG$. Recall that Weyl modules over $\LQG$ are obtained as limits of irreducible modules from generic value  of $\q$ to a root of unity value~\cite{ChPr}.  In more details, the
modules $\Bmod{n}{n}\oplus\XX_{-n}$ and $\Rmod{n}{n}\oplus\XX_{n}$ are
restrictions of the Weyl module of dimension $2n+1$ to the
negative and positive Borel subalgebras $\Boddm$ and $\Boddp$ of
$\LQGodd$, respectively. It is straightforward to check with the use
of the defining relations~\eqref{Uqodd-dfn-1}-\eqref{Uqodd-dfn-4} that these
restrictions are $\LQGodd$-modules as well. Similarly, $\Bmod{n+2}{n}\oplus\XX_{n+1}$ and
$\Rmod{n+2}{n}\oplus\XX_{-n-1}$ are  restrictions of the
contragredient Weyl $\LQG$-module of dimension $2n+3$. We show below
 that these ``Weyl" $\LQGodd$-modules are building
blocks of the spin-chain -- indecomposable direct summands are
glueings of a pair of these modules -- like the Weyl modules over $\LQG$ do in the open case.

Using~\eqref{Bmn-mod} and~\eqref{Rmn-mod}, we find the following filtrations of the  $\Bmod{m}{n}$ and $\Rmod{m}{n}$ modules.
\begin{eqnarray}
0=\Bmod{m}{-m+4}\subset\Bmod{m}{-m+2}\subset\dots\subset\Bmod{m}{n-4}\subset\Bmod{m}{n-2}\subset\Bmod{m}{n},\label{Bmod-filtr}\\
0=\Rmod{m}{-m+4}\subset\Rmod{m}{-m+2}\subset\dots\subset\Rmod{m}{n-4}\subset\Rmod{m}{n-2}\subset\Rmod{m}{n},\label{Rmod-filtr}
\end{eqnarray}
where for each pair of neighbour terms  $\Bmod{m}{k}/\Bmod{m}{k-2}$ is isomorphic to an indecomposable module with the subquotient structure
   $\Xodd{k}\rightarrow\Xodd{k-1}$, and $\Rmod{m}{k}/\Rmod{m}{k-2}$ is isomorphic to $\Xodd{-k}\rightarrow\Xodd{-k+1}$.
   
An obvious and important property of the filtrations~\eqref{Bmod-filtr} and~\eqref{Rmod-filtr} is the constant value of the superscript index $(m)$ in their terms. The $\Bmod{m}{n}$ and $\Rmod{m}{n}$ modules do not contain submodules isomorphic to $\Bmod{m'}{n'}$ and $\Rmod{m'}{n'}$, respectively, for any $n'$ and $m'<m$, as well as any of submodules in $\Bmod{m'}{n'}$ and $\Rmod{m'}{n'}$. We call this property of the $\Bmod{m}{n}$ and $\Rmod{m}{n}$ modules \textit{the constant-$m$} property, which will be used below.

\medskip
 
 As was observed above in Sec.~\ref{subsec:PPodd-def}, 
% each subquotient in the middle level of $\PPodd{n}$, i.e., those $\Xodd{k}$ satisfying $k-n=0\mod2$,  have ingoing arrows only of one type
% (either south-west or south-east) and outgoing arrows are of the opposite type. 
all subquotients of $\PPodd{n}$ in the middle level (those $\Xodd{k}$ satisfying $k-n=0\mod2$) are divided into two classes -- one
having only south-west ingoing and south-east outgoing arrows,
and the other having only south-east ingoing
and south-west outgoing arrows. Following this division, we therefore can construct a $\PPodd{n}$ module as an extension of the
 %indecomposable $\LQGodd$-
modules 
%either with trivial positive subalgebra action and 
introduced in~\eqref{Bmn-mod} 
%or with trivial negative subalgebra action
and~\eqref{Rmn-mod} in the following two ways
 \begin{equation}\label{Podd-Weyl}
     \xymatrix@C=2pt@R=13pt@M=1pt@W=2pt{&&\\&\PPodd{n}\quad\cong\quad&\\&&}
\xymatrix@C=4pt@R=22pt{
           &\Bmod{n+1}{n-1} \ar@[col-p-3][dl] &\\
 	  \Bmod{n}{n} 
 	    &&      } 
  \xymatrix@C=2pt@R=13pt@M=1pt@W=2pt{&&\\&\quad\cong\quad&\\&&}
\xymatrix@C=4pt@R=22pt{
           &\Rmod{n+1}{n-1} \ar@[col-m-3][dr] &\\ 
 	    &&  \Rmod{n}{n}    } 
\end{equation}
where the south-west and south-east arrows depict the action of the positive and negative subalgebras $\Noddp$ and $\Noddm$, respectively.
This construction of direct summands in the spin-chain decomposition is similar to what happens in  the open case, where a $\PP_{1,n}$ module  is an extension of a pair of Weyl modules over $\LQG$.
 
 \medskip
 
In order to study the decomposition over $\JTL{N}$  we now describe
all intertwining operators respecting the $\LQGodd$ action on the
spin-chain. Using the decomposition~\eqref{eq:Hilb-decomp-Uqodd}, it
is enough to describe all homomorphisms among the indecomposable
direct summands $\PPodd{n}$.
\begin{thm}\label{lem:endo-PPodd} For $n,m\in\oN$,
  we have the equalities
\begin{equation}
\dim \Hom_{\LQGodd}(\PPodd{n},\PPodd{m})=
    \begin{cases}
      2,&\quad m = n\pm1,\\
      \min(n,m)+\delta_{n,m},&\quad m - n=0 \mod 2,\\
      0,&\quad\text{otherwise}.
    \end{cases}
\end{equation}
The  two-dimensional space in the case $m =n+1$ is spanned
by homomorphisms $f^{\pm}_{n,n+1}$ with  images
\begin{equation*}
\mathrm{im}(f^+_{n,n+1}) \cong \Rmod{n+1}{n-1},\quad 
\mathrm{im}(f^-_{n,n+1}) \cong \Bmod{n+1}{n-1},
\end{equation*}
while  the case $m =n-1$ corresponds to maps $f^{\pm}_{n,n-1}\in\Hom(\PPodd{n},\PPodd{n-1})$ with  images
\begin{equation*}
\mathrm{im}(f^+_{n,n-1}) \cong \Rmod{n-1}{n-1},\quad 
\mathrm{im}(f^-_{n,n-1}) \cong \Bmod{n-1}{n-1}.
\end{equation*}
In the case $m - n = 0\mod 2$, the $\Hom$-space is spanned by homomorphisms with
semisimple images.
\end{thm}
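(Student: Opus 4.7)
The plan is to construct the claimed homomorphisms explicitly and then bound $\dim\Hom_{\LQGodd}(\PPodd{n},\PPodd{m})$ from above via the short exact sequences coming from \eqref{Podd-Weyl} together with the constant-$m$ property of the $\Bmod{m}{n}$ and $\Rmod{m}{n}$ modules.

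First, I would construct the maps. For $m=n+1$, define $f^-_{n,n+1}$ as the composition $\PPodd{n}\twoheadrightarrow\Bmod{n+1}{n-1}\hookrightarrow\Bmod{n+1}{n+1}\hookrightarrow\PPodd{n+1}$, where the surjection is the quotient from the left diagram in~\eqref{Podd-Weyl}, the first inclusion is the penultimate step of the filtration~\eqref{Bmod-filtr} for $\Bmod{n+1}{n+1}$, and the second inclusion is the canonical submodule in~\eqref{Podd-Weyl} applied to $\PPodd{n+1}$; swapping $\Bmod{}{}$ for $\Rmod{}{}$ using~\eqref{Rmod-filtr} and the right diagram of~\eqref{Podd-Weyl} produces $f^+_{n,n+1}$. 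For $m=n-1$, the filtration~\eqref{Bmod-filtr} shows that $\Bmod{n+1}{n-1}/\Bmod{n+1}{-n+1}$ has the same composition factors, trivial $\Noddp$-action, and arrow pattern as $\Bmod{n-1}{n-1}$, which forces an isomorphism and hence a canonical surjection $\Bmod{n+1}{n-1}\twoheadrightarrow\Bmod{n-1}{n-1}$; composing with the earlier quotient and with the inclusion $\Bmod{n-1}{n-1}\hookrightarrow\PPodd{n-1}$ produces $f^-_{n,n-1}$, and similarly $f^+_{n,n-1}$. For $m=n$ the identity accounts for one endomorphism, and the remaining $n$ are the rank-one maps $\PPodd{n}\twoheadrightarrow\Xodd{k}\hookrightarrow\PPodd{n}$ through each of the $n$ simples $\Xodd{-n+1},\Xodd{-n+3},\dots,\Xodd{n-1}$ that constitute the head (which coincide with those of the socle). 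For $m\neq n$ with $m\equiv n\pmod 2$, the $\min(n,m)$ maps are the analogous head-of-$\PPodd{n}$ to socle-of-$\PPodd{m}$ factorizations through each of the $\min(n,m)$ simples common to both.

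Second, I would bound $\dim\Hom$ from above by applying the left-exact functor $\Hom(-,\PPodd{m})$ to the short exact sequence $0\to\Bmod{n}{n}\to\PPodd{n}\to\Bmod{n+1}{n-1}\to 0$ from~\eqref{Podd-Weyl}, yielding
\[
\dim\Hom(\PPodd{n},\PPodd{m})\leq\dim\Hom(\Bmod{n+1}{n-1},\PPodd{m})+\dim\Hom(\Bmod{n}{n},\PPodd{m}).
\]
By the constant-$m$ property established in Sec.~\ref{sec:ind-Weyl}, any nonzero map $\Bmod{m'}{n'}\to\PPodd{m}$ must land in a submodule of $\PPodd{m}$ whose composition factors all come from a single constant-$m'$ submodule; the latter are exactly the steps of~\eqref{Bmod-filtr} applied to $\Bmod{m}{m}\subset\PPodd{m}$ together with semisimple pieces in the socle. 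The analogous estimate obtained by using the $\Rmod{}{}$-version of~\eqref{Podd-Weyl} combines with this one to produce the claimed upper bound in each case.

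Third, I would combine the two steps. When $|m-n|\geq 3$ is odd, the weight carried by the generator of $\Bmod{n}{n}$ (respectively of $\Bmod{n+1}{n-1}$) does not appear at any position of $\PPodd{m}$ compatible with a constant-$n$ (respectively constant-$(n+1)$) submodule, forcing both Hom spaces in the upper bound to vanish. For $m-n$ even, one checks by direct comparison that the only contributions are the semisimple head-to-socle maps built above, plus the identity when $n=m$. The main obstacle is the combinatorial work in step two: identifying all constant-$m'$ submodules of $\PPodd{m}$ and showing that every map $\Bmod{m'}{n'}\to\PPodd{m}$ factors through one of them requires carefully tracking weights and the $\Noddp$- and $\Noddm$-actions inside the subquotient diagram of Fig.~\ref{Uq_odd_Tn}, and this is where the bulk of the casework lies; once completed, matching the upper bound to the explicit maps from step one immediately yields the result.
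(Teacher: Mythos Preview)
Your proposal is correct and shares the paper's key ingredients---the decomposition~\eqref{Podd-Weyl}, the filtrations~\eqref{Bmod-filtr},~\eqref{Rmod-filtr}, and the constant-$m$ property---but organizes the upper-bound argument differently. The paper works directly with the subquotient diagram of Fig.~\ref{Uq_odd_Tn}: for $|m-n|\geq3$ odd it argues that any nontrivial map $\PPodd{n}\to\PPodd{n\pm(2k+1)}$ would have to send some top $\Xodd{k}$ onto a middle-level $\Xodd{k}$ of the target, and the latter always carries an outgoing arrow to $\Xodd{\pm(n+2k)}$, a subquotient absent from $\PPodd{n}$, forcing the map to be zero; the constant-$m$ property is then invoked only at the end, to rule out a third independent map in the case $m=n\pm1$. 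Your route instead applies the left-exact functor $\Hom(-,\PPodd{m})$ to the short exact sequence from~\eqref{Podd-Weyl} and reduces everything to computing $\Hom(\Bmod{m'}{n'},\PPodd{m})$, using the constant-$m$ property systematically throughout. Both approaches are valid; the paper's diagram-chase is shorter and more visual for the vanishing cases, while your exact-sequence framework is more uniform and makes the role of~\eqref{Podd-Weyl} in bounding all Hom-spaces explicit. The explicit constructions of the maps $f^\pm_{n,n\pm1}$ and of the semisimple-image maps for $m\equiv n\pmod2$ are essentially identical in both.
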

\begin{proof}
We first describe the space $\HomQodd(\PPodd{n},\PPodd{m})$ when
$n-m=0 \mod 2$. The subquotient structure of $\PPodd{n}$ in
Fig.~\ref{Uq_odd_Tn} makes evident that the only non-trivial
intertwining operators from $\PPodd{n}$ to $\PPodd{m}$, with
$n-m=0\mod2$ and $n\ne m$, are homomorphisms with images isomorphic to
semi-simple submodules in $\PPodd{m}$. The corresponding $\Hom$ space
is spanned by homomorphisms with images isomorphic to $\Xodd{k}$, with
$k-n=0\mod2$ and $1-\min(n,m)\leq k\leq \min(n,m)-1$. In the case
$n=m$ we have one more homomorphism given by identity.

Second, it is crucial to note that, for $n\ne m$,
non-trivial
homomorphisms with images being an indecomposable but reducible submodule are only
between $\PPodd{n}$ and $\PPodd{n\pm1}$. Indeed, all homomorphisms between 
$\PPodd{n}$ and $\PPodd{n\pm(2k+1)}$, for $k>0$, are trivial. To show this, assume that 
there exists a non-trivial homomorphism  $\PPodd{n}\to\PPodd{n\pm(2k+1)}$. Then, at least one of 
the top subquotients $\Xodd{k}$, with $-n+1\leq k\leq n-1$, should cover one of the subquotients $\Xodd{k}$ in the middle
level of $\PPodd{n\pm(2k+1)}$ but the latter subquotient has an outgoing arrow to $\Xodd{n+2k}$ or $\Xodd{-n-2k}$, see Fig.~\ref{Uq_odd_Tn} and Fig.~\ref{Uq_odd_T-N8} in particular. Meanwhile, the two last subquotients are not present in $\PPodd{n}$. Therefore, any homomorphism from $\PPodd{n}$ to $\PPodd{n\pm(2k+1)}$ if $k>0$ is trivial. Similar type of arguments shows that any homomorphism from $\PPodd{n\pm(2k+1)}$ to $\PPodd{n}$ if $k>0$ is trivial as well.

We next describe explicitly homomorphisms between $\PPodd{n}$ and $\PPodd{n\pm1}$.
%-- this easily follows
%from~\eqref{Podd-Weyl} and the filtrations~\eqref{Bmod-filtr}
%and~\eqref{Rmod-filtr}. 
As  follows from~\eqref{Podd-Weyl}
%Fig.~\ref{Uq_odd_Tn} above and Fig.~\ref{Uq_odd_T-N8} in particular,
there are at least  two independent homomorphisms beween $\PPodd{n}$ and
$\PPodd{n\pm1}$ -- of ``positive/south-west" and
``negative/south-east" types. A homomorphism $\PPodd{n}\to\PPodd{n+1}$
of the positive type has its kernel isomorphic to $\Rmod{n}{n}$ and
its image is the submodule $\Rmod{n+1}{n-1}\subset\PPodd{n+1}$, where
we use~\eqref{Podd-Weyl} and the filtration~\eqref{Rmod-filtr};
the negative-type homomorphism $\PPodd{n}\to\PPodd{n+1}$ has its image
isomorphic to $\Bmod{n+1}{n-1}\subset\PPodd{n+1}$, where we
use~\eqref{Bmod-filtr}.  To describe the two homomorphisms
$\PPodd{n}\to\PPodd{n-1}$, we only note that their kernels are
generated by $\Rmod{n}{n}$ and the subquotient $\Xodd{n}$ -- for the
positive-type homomorphisms, -- and by $\Bmod{n}{n}$ together with the
subquotient $\Xodd{-n}$ -- for the negative-type. The images of the
last two homomorphisms are isomorphic to $\Rmod{n-1}{n-1}$ and
$\Bmod{n-1}{n-1}$, respectively.

 Finally, assuming that there exists one more
 homomorphism from $\PPodd{n}$ to $\PPodd{n+1}$ linearly independent
 with the two ones just constructed we should necessarily consider
 one of the top subquotients of $\PPodd{n}$ in the kernel of the
 assumed homomorphism. Then, this implies that an image of the 
 homomorphism should be a submodule in $\Bmod{m}{n-1}$ or $\Rmod{m}{n-1}$, with
 $m<n+1$, and at the same time this image should be a submodule in $\Bmod{n+1}{n+1}$ or $\Rmod{n+1}{n+1}$ from
 $\PPodd{n+1}$, see~\eqref{Podd-Weyl}. This property contradicts  the constant-$m$ property of the 
 $\Bmod{n+1}{n+1}$ and $\Rmod{n+1}{n+1}$ modules introduced after~\eqref{Rmod-filtr}.
 Similarly, one can show that there are only two linearly independent
 homomorphisms from $\PPodd{n}$ to $\PPodd{n-1}$.
% the subquotient
%structure~\eqref{Podd-Weyl} with the use of the `Weyl' modules and
%their filtrations in~\eqref{Bmod-filtr} and~\eqref{Rmod-filtr} show
%that the two independent homomorphisms from $\PPodd{n}$ to
%$\PPodd{n\pm1}$ exhaust the $\HomQodd$ space. 
 These statements finish the proof.
\end{proof}

\section{The standard modules over $\JTL{N}$}\label{sec:JTL-stand}

\subsection{Generalities}\label{sec:JTL-decomp-gen}
 
 We now go back for a little while to the case of the full affine
 Temperley--Lieb algebra $\ATL{N}(m)$. Set $m=\q+\q^{-1}$. For generic
 $\q$ (not a root of unity), the irreducible representations we shall need are parametrized by two
 numbers. In terms of diagrams, the first is the number of
 through-lines, which we denote by $2j$, $j=0,1,\ldots, L$, connecting
 the inner boundary of the annulus with $2j$ sites and the outer boundary with $2L$ sites; the $2j$ sites on the inner boundary we call free or non-contractible. For example, the diagrams
  \begin{tikzpicture}
   %%%%%%%%% Frame %%%%%%%%
 	\draw[thick, dotted] (-0.05,0.1) arc (0:10:0 and -3.5);
 	\draw[thick, dotted] (-0.05,0.1) -- (1.05,0.1);
 	\draw[thick, dotted] (1.05,0.1) arc (0:10:0 and -3.5);
	\draw[thick, dotted] (-0.05,-0.55) -- (1.05,-0.55);
%%%%%%%%%%%%	
	\draw[thick] (0,-0.2) arc (-90:0:0.25 and 0.25);
	\draw[thick] (0.4,0.05) arc (0:10:0 and -2.6);
	\draw[thick] (0.6,0.05) arc (0:10:0 and -2.6);
	\draw[thick] (1.0,-0.2) arc (-90:0:-0.25 and 0.25);
	\end{tikzpicture}
  \, and \, 
  \begin{tikzpicture}
     %%%%%%%%% Frame %%%%%%%%
 	\draw[thick, dotted] (-0.095,0.1) arc (0:10:0 and -3.5);
 	\draw[thick, dotted] (-0.095,0.1) -- (1.025,0.1);
 	\draw[thick, dotted] (1.0,0.1) arc (0:10:0 and -3.5);
	\draw[thick, dotted] (-0.095,-0.55) -- (1.025,-0.55);
%%%%%%%%%%%%	
  \draw[thick] (0,0.05) arc (0:10:0 and -2.6);
	\draw[thick] (0.2,0.05) arc (-180:0:0.25 and 0.25);
	\draw[thick] (0.9,0.05) arc (0:10:0 and -2.6);
	\end{tikzpicture}
 correspond to $L=2$ and $j=1$, where as usual we identify the left and right sides of the framing rectangles, so the diagrams live on the annulus. The action of the algebra $\ATL{N}(m)$ is defined
 in a natural way on these diagrams, by joining their outer boundary
 to an inner boundary of a diagram from $\ATL{N}(m)$, and removing the
 interior sites. As usual, a closed contractible loop is replaced by
 $m$. Whenever the affine diagram thus obtained has a number of
 through lines less than $2j$, the action is zero. For a given
 non-zero value of $j$, it is possible in this action to cyclically
 permute the free sites: this gives rise to the introduction of a
  pseudomomentum $K$ (not to be confused with the quantum group
 generator). Whenever $2j$ through-lines wind counterclockwise around
 the annulus $l$ times, we unwind them at the price of a factor
 $e^{2ijlK}$; similarly, for clockwise winding, the phase 
% chgd
is $e^{-i
 2jlK}$~\cite{MartinSaleur,MartinSaleur1}\footnote{A more pedantic
 definition due to~\cite{GL} is the relation
\begin{equation*}
\mu=\mu'\circ u_j^n \equiv e^{iKn}\mu',
\end{equation*} 
where $\mu$ is an affine diagram with $2j$ through lines, $u_j$ is the
translational operator acted on through lines by shifting a free site
by one, and $\mu'$ is so-called standard diagram which has no through
lines winding the annulus.}. This action gives rise to a generically
irreducible module, which we denote by
$\AStTL{j}{e^{2iK}}$. Note that we used a parametrization such that
different pairs $(j,e^{2iK})$ correspond to non-isomorphic modules
over the even-rank subalgebra $\oATL{N}(m)\subset\ATL{N}(m)$
introduced in Sec.~\ref{sec:TL-alg-def}. In the parametrization $(t,z)$ chosen
in~\cite{GL}, this corresponds to $t=2j$ and the twist parameter $z^2=e^{2iK}$.

% in front of the (so-called standard) diagram with the same outer boundary (of $2L$ sites) and no through lines winding the annulus.
 The dimensions of these modules $\AStTL{j}{e^{2iK}}$ over $\ATL{2L}(m)$  are then given by 
 \begin{equation}\label{dim-dj}
 \hat{d}_{j}=
 \binom{2L}{L+j},\qquad j>0.
 \end{equation}
Note that the numbers do not depend on $K$ (but representations with
different $e^{iK}$ are not isomorphic).
% We use in the following the notation 
%% $r_{j,e^{2iK}}$
These generically irreducible modules
$\AStTL{j}{e^{2iK}}$ are known also as
 standard (or cell) $\ATL{N}(m)$-modules~\cite{GL}.
%  with fixed number $2j$ of through lines and pseudomomentum $e^{2iK}$. 
 
 Keeping $\q$ generic, degeneracies in the standard modules appear whenever 
 \begin{eqnarray}\label{deg-st-mod}
 e^{2iK}&=&\q^{2j+2k},\qquad
k\hbox{ is a strictly positive integer.}
 \end{eqnarray}
 The representation $\AStTL{j}{\q^{2j+2k}}$ then becomes reducible, and contains a submodule isomorphic to 
 $\AStTL{j+k}{\q^{2j}}$ that we set to zero whenever $j+k>L$. The quotient is generically irreducible, with
 dimension $\hat{d}_j-\hat{d}_{j+k}$. The degeneracy
~\eqref{deg-st-mod} is well-known \cite{MartinSaleur1,GL}~\footnote{Note that the twist term in~\cite{PasquierSaleur}, which was denoted there
   $q^{2t}$, reads in these notations as $e^{2iK}$. It  corresponds to $z^2$ in
   the Graham--Lehrer work~\cite{GL}, and to the parameter $x$ in the work of  Martin--Saleur \cite{MartinSaleur1}. The case where $k=1$ is special, and related with braid translation of the blob algebra theory. 
We note that in the $\JTL{N}$ case, $2j$ through-lines going around the cylinder pick up a phase $e^{i 2jK}=1$. In \cite{MartinSaleur1}, this corresponds to $\alpha_h=x^h=1$.}.  When $\q$ is a root of unity, there are infinitely many solutions to the equation \eqref{deg-st-mod}, leading to a complex pattern of degeneracies to which we turn below.

%\bigskip

The case $j=0$ is a bit special. There is no pseudomomentum, but representations are still characterized by another parameter, related with  the weight given to  non contractible loops. Parametrizing this weight as $z+z^{-1}$, the corresponding standard module of  $\ATL{2L}(m)$ is denoted  $\AStTL{0}{z^2}$ and it has dimension given by~\eqref{dim-dj} for $j=0$.

\bigskip

We now specialize to the Jones--Temperley--Lieb algebra $\JTL{N}(m)$ defined
in Sec.~\ref{sec:TL-alg-def}.
% as the image of the `orientable' algebra $\oATL{N}\subset\ATL{N}(m)$ under the homomorphism $\psi$, see the paragraph
 %above~\eqref{diag-alg}. 
 In this case, the rule that winding through-lines can simply be unwound means that the pseudomomentum must satisfy 
$jK\equiv 0~\hbox{mod}~\pi$ \cite{Jones}.
% Solving this equation gives 
% %
% \begin{equation}
% K=\pi {P\over Q}=\pi {p\over Q}\hbox{ mod }\pi,\qquad Q>0, \quad 0\leq p\leq Q-1, \quad Q|j,~p\wedge Q=1,
% \end{equation}
% %
% where the notation means that $Q$ is a divisor of $j$, and $p$, $Q$ are coprimes.
 All possible values of the parameter $z^2=e^{2iK}$ are thus $j$-th
 roots of unity ($z^{2j}=1$,~\cite{Green}). The kernel of the
 homomorphism $\psi$ in~\eqref{diag-alg} (and the ideal  in $\ATL{N}(m)$ generated by $u^N-1$, in
 particular) acts trivially on these modules if $j>0$. In what
 follows, we will thus  use
 the same notation $\AStTL{j}{z^2}$, with $j>0$, for the standard $\JTL{N}(m)$-modules.
 We note that two standard $\JTL{N}$-modules having only different signs in the $z$ parameter are isomorphic.
 
\medskip

If $j=0$, requiring the weight of the non contractible loops to be $m$ as well
leads to the $\ATL{N}(m)$-module $\AStTL{0}{\q^2}$ which is reducible
even for generic $\q$ -- it contains a submodule isomorphic to
$\AStTL{1}{1}$. Meanwhile, on the standard module $\AStTL{0}{\q^2}$ the kernel of the homomorphism 
$\psi$ is non-trivial: the standard module over $\JTL{N}(m)$ for $j=0$ is obtained
precisely by taking the quotient $\AStTL{0}{\q^2}/\AStTL{1}{1}$ as
in~\cite{GL}. This module is now simple for generic $\q$, has the
dimension $\binom{2L}{L}-\binom{2L}{L-1}$ and is denoted by~$\bAStTL{0}{\q^2}$.

\medskip

In what follows we use the representation theory~\cite{GL} of $\ATL{N}$
 in order to describe the subquotient structure of
$\JTL{N}$-standard modules. For this it is convenient to use a variant   of $\JTL{N}$, which is also 
 embedded in $\ATL{N}(m)$. This variant (dubbed here ``augmented") is the
finite-dimensional algebra  $\aJTL{N}(m)$,  isomorphic to $\JTL{N}(m)$
except for  the ideal without through-lines. In this ideal, the algebra $\aJTL{N}(m)$
differs from  $\JTL{N}(m)$ in that  connections within the points on the inner  or outer annulus,
which are topologically different are treated as different.  Recall that in $\JTL{N}(m)$, diagrams in the ideal with no through lines can be chosen to be planar  (they can be drawn in a box without crossings), and  are in bijection with ordinary  $\TL{N}$-diagrams. 
This distinction leads to the   standard $\aJTL{N}$-module
$\AStTL{0}{\q^2}$ of  dimension $\binom{2L}{L}$.

Several results can easily be established following~\cite{GL}
when $\q=i$, to which we restrict for now. We note that the dimension
of the sector of value $S^z=j$ or $S^z=-j$ (including $j = 0$) in the spin
chain coincides with the dimension $\hat{d}_j$ of the standard module
$\AStTL{j}{e^{2iK}}$ over the augmented algebra $\aJTL{N}$. For $\q=i$, these spin-chain sectors provide highly
reducible representations of the Jones--Temperley--Lieb algebra $\JTL{N}$ closely
related (but non-isomorphic) to the standard modules. By the
discussion of the correspondence~\cite{GRS1} between the XX and the $\gl(1|1)$
spin-chains, we see that these representations occur at pseudomomentum satisfying
$e^{2iK}=(-1)^{j+1}$. Before describing indecomposables appearing in the
spin-chain we first discuss more the standard ones with this value of the pseudomomentum.

\subsection{The standard modules at $\q=i$}\label{sec:stand-q-i}
We first describe modules over the
 algebra $\ATL{N}$, containing the generator $u$.
%  and $\PTL{N}$ as a subalgebra.
The structure of the standard $\ATL{N}$-modules at $\q=i$ can be inferred
from~\cite{GL}. 
%% It is represented on Fig.~\ref{GLfig2}.
For a standard module $\AStTL{j}{(-1)^{j+1}}$ with $2j>0$ through lines,
we deduce the subquotient structure using two Graham--Lehrer's theorems, Thm.~3.4 and
 proof of Thm.~5.1 in~\cite{GL}. A crucial fact is that the space of
 homomorphisms 
 \begin{equation}\label{HomATL-iso}
 \HomATL(\AStTL{j}{(-1)^{j+1}},\AStTL{j-1}{(-1)^{j}}) \cong
 \oC,\qquad 1\leq j\leq L.
  \end{equation}
between the standard $\ATL{N}$-modules is one-dimensional and the
 homomorphisms are injective.  The dimensions of simple modules
 $\AIrrTL{j}{(-1)^{j+1}}$ happen to be the same as those in the open
 case, and given by
\begin{equation*}
\widehat{d}_{j,(-1)^{j+1}}^0=d_j^0=\sum_{j'\geq j} (-1)^{j'-j}d_{j'}\qquad
\text{with}\qquad d_j = \binom{2L}{L+j} -  \binom{2L}{L+j+1}.
\end{equation*}
One can show the equivalent formula
\begin{equation*}
\widehat{d}_{j,(-1)^{j+1}}^0=\left(\begin{array}{c} 
2L-2\\
L-j\end{array}\right)-\left(\begin{array}{c} 
2L-2\\
L-j-2\end{array}\right).
\end{equation*}

Our final result for the standard $\ATL{N}$-modules is given on the
left side of Figs.~\ref{GLfig2} and~\ref{GLfig}
 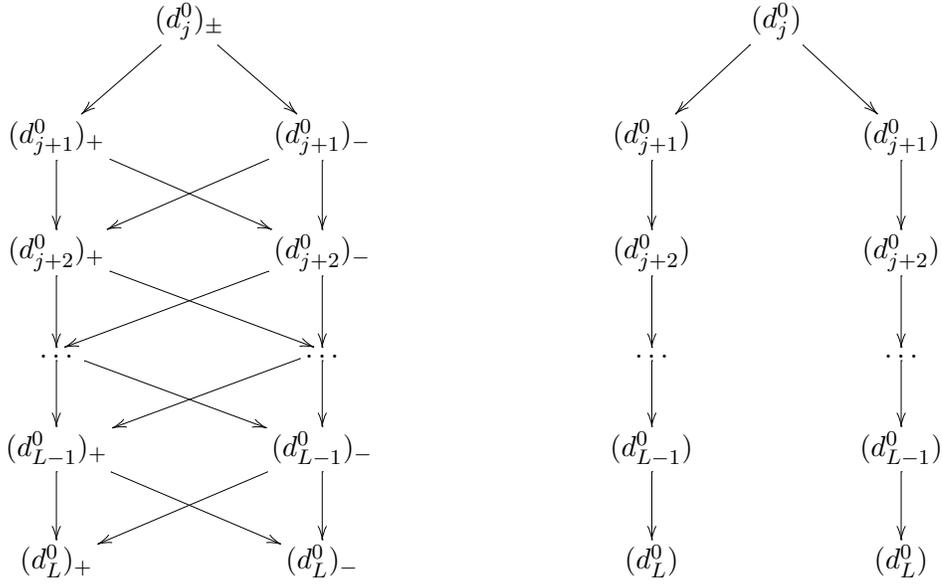
\begin{figure}\centering
 \begin{equation*}
        \xymatrix@R=26pt@C=14pt@W=2pt@M=2pt
   {{}&(\TLX_j)_{\pm}\ar[dr]\ar[dl]&&\\
     {(\TLX_{j+1})_{+}}\ar[d]\ar[drr]
     &&{(\TLX_{j+1})_{-}}\ar[d]\ar[dll]\\
     {(\TLX_{j+2})_{+}}\ar[d]\ar[drr]
     &&{(\TLX_{j+2})_{-}}\ar[d]\ar[dll]\\
     {\quad\dots\quad}\ar[d]\ar[drr]
     &&{\dots}\ar[d]\ar[dll]\\
     {(\TLX_{L-1})_{+}}\ar[d]\ar[drr]
     &&{(\TLX_{L-1})_{-}}\ar[d]\ar[dll]\\
     {(\TLX_{L})_{+}}
     &&{(\TLX_{L})_{-}}
     }
\qquad\qquad\qquad
   \xymatrix@R=26pt@C=18pt@W=2pt@M=2pt
   {{}&(\TLX_{j})\ar[dr]\ar[dl]&&\\
     (\TLX_{j+1})\ar[d]
     &&(\TLX_{j+1})\ar[d]\\
     (\TLX_{j+2})\ar[d]
     &&(\TLX_{j+2})\ar[d]\\
     {\dots}\ar[d]
     &&{\dots}\ar[d]\\
     (\TLX_{L-1})\ar[d]
     &&(\TLX_{L-1})\ar[d]\\
     (\TLX_{L})
     &&(\TLX_{L})
     }
 \end{equation*}
      \caption{The structure of the standard modules
      $\AStTL{j}{(-1)^{j+1}}$ with $2j>0$ through lines at $\q=i$. We
      set $\AIrrTL{j}{(-1)^{j+1}}\equiv(\TLX_{j})$.  The
      module on the left is over $\ATL{N}$ and on the right is the
      restriction to the subalgebra
      $\aJTL{N}$. The twist parameter $z=\pm\sqrt{(-1)^{k+1}}$ for
      each node $(\TLX_k)_{\pm}$ is assumed.}
    \label{GLfig2}
    \end{figure}
%\newpage
where each node corresponds to a simple subquotient. In the case $j=0$, we have no top subquotient because
$\TLX_0=0$.
%% $\AIrrTL{j}{(-1)^{j+1}}$.
 We denote the dimension of a simple
subquotient $\AIrrTL{k}{(-1)^{k+1}}$  in the round brackets (with the twist parameter
$z=\pm\sqrt{(-1)^{k+1}}$ for each node $(\TLX_k)_{\pm}$, with $j\leq k\leq L$, to be
assumed).  For simplicity, we use in what follows the round-brackets
notation for simple subquotients. We will also denote the Graham--Lehrer's parameter
$z=\pm\sqrt{z^2}$
%% (satisfying $z^2=(-1)^{k+1}$)
 by the subscript $\pm$
distinguishing non-isomorphic simple
$\ATL{N}$-subquotients. 
Restricting to the subalgebra
% $\oATL{N}$ or  
$\aJTL{N}$,
 subquotients $(\TLX_k)_{\pm}$ are isomorphic and we discard the subscripts.

We now turn to the description of standard modules over the subalgebra $\aJTL{N}$.
\begin{prop}\label{prop:stmod-subqstr}
The subquotient structures for the standard $\JTL{N}$-modules
$\AStTL{j}{(-1)^{j+1}}$, with $j>0$, and for the standard $\aJTL{N}$-module
$\AStTL{0}{-1}$ are given on the right in Figs.~\ref{GLfig2}
  and~\ref{GLfig}, respectively.
\end{prop}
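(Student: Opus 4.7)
The plan is to deduce the $\aJTL{N}$-structure of $\AStTL{j}{(-1)^{j+1}}$ from its $\ATL{N}$-structure (given on the left of Figs.~\ref{GLfig2} and~\ref{GLfig}, and established in \cite{GL}) by restricting along the canonical quotient map $\oATL{N}\twoheadrightarrow\aJTL{N}$. First I would check that the standard $\ATL{N}$-module $\AStTL{j}{(-1)^{j+1}}$ actually descends to an $\aJTL{N}$-module, i.e.\ that the kernel of $\oATL{N}\twoheadrightarrow\aJTL{N}$ acts by zero. For $j>0$ this reduces to the quantization condition $jK\equiv 0\bmod\pi$ satisfied by $e^{2iK}=(-1)^{j+1}$ (so that through-lines can be unwound), while for $j=0$ one uses that the weight assigned to non-contractible loops is $m=0$, which matches $\q+\q^{-1}$ at $\q=i$.

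Next I would describe what happens to the Graham-Lehrer simples under restriction. The $\ATL{N}$-simples $(\TLX_k)_{+}$ and $(\TLX_k)_{-}$ differ only by the sign of the square root $z=\pm\sqrt{(-1)^{k+1}}$, i.e.\ by the eigenvalue of the translation on the through-lines. Since $\aJTL{N}$ is generated only by $\oATL{N}$-elements (in particular it contains only even powers $u^{2k}$ in the sector with through-lines, combined with the relation $u^N=1$), the sign of $z$ becomes invisible and $(\TLX_k)_{+}\cong(\TLX_k)_{-}$ as $\aJTL{N}$-modules; denote the common isomorphism class by $(\TLX_k)$. Composition-factor counting then gives $(\TLX_j)$ with multiplicity one and $(\TLX_k)$ with multiplicity two for $j<k\leq L$, in agreement with the right-hand sides of Figs.~\ref{GLfig2}--\ref{GLfig}. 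A quick dimension check $d^0_j+2\sum_{k=j+1}^L d^0_k=\binom{2L}{L+j}$ confirms this.

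The main obstacle, and the heart of the proof, is to show that the cross-arrows present in the $\ATL{N}$-picture disappear over $\aJTL{N}$; equivalently, that the radical of $\AStTL{j}{(-1)^{j+1}}$ decomposes as a direct sum of two uniserial chains $(\TLX_{j+1})\to(\TLX_{j+2})\to\cdots\to(\TLX_L)$. My approach would be to use the one-dimensional Hom-space~\eqref{HomATL-iso} of Graham-Lehrer: the embedding $\AStTL{k+1}{(-1)^k}\hookrightarrow\AStTL{k}{(-1)^{k+1}}$ picks out a canonical chain of $\ATL{N}$-submodules, and selecting the $(+)$- and $(-)$-twisted halves at each step produces two independent chains. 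The cross-arrows in the $\ATL{N}$-diagram that connect the two sign branches are implemented by $\ATL{N}$-elements carrying a nontrivial pseudomomentum phase; under the relation $u^N=1$ forced by the quotient to $\aJTL{N}$, these phases collapse and the offending matrix elements either split or identify with the straight arrows already accounted for. A concrete realisation would be to exhibit a pair of idempotents in $\aJTL{N}$ separating the two branches, or to compute $\Ext^1_{\aJTL{N}}((\TLX_k),(\TLX_{k+1}))$ directly from the cellular structure of $\aJTL{N}$ and show it remains one-dimensional per branch. The $j=0$ case of $\AStTL{0}{-1}$ is slightly different: there $\TLX_0=0$ so the common top disappears, $\dim\AStTL{0}{-1}=\binom{2L}{L}$, and the restricted module is already a direct sum of two such uniserial chains --- which is precisely the content of the right-hand picture of Fig.~\ref{GLfig}.
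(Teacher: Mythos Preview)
Your overall plan---restrict the Graham--Lehrer $\ATL{N}$-structure to $\aJTL{N}$, observe that $(\TLX_k)_{+}\cong(\TLX_k)_{-}$, and then argue the cross-arrows disappear---is the right outline, but the key step (the disappearance of the diagonal arrows) has a gap, and the mechanism you propose is not the one that works.

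First, the claim that the cross-arrows ``collapse under $u^N=1$'' misidentifies the mechanism: the relation $u^N=1$ is already satisfied on these standard modules (this is precisely the quantization condition $z^{2j}=1$ you invoked earlier), so imposing it cannot kill anything. What distinguishes $\aJTL{N}$ from $\ATL{N}$ here is that $\aJTL{N}$ sits inside the \emph{even-rank} subalgebra $\oATL{N}$, which in particular lacks the odd translation $u$. But even granting this, you still need an argument that the diagonal arrows are carried precisely by odd-rank elements; your sketch via ``pseudomomentum phases'' or via idempotents or via a direct $\mathrm{Ext}^1$ computation remains a list of possible strategies rather than a proof.

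Second, and more seriously, you cite the one-dimensional $\ATL{N}$-Hom space~\eqref{HomATL-iso}, whereas the paper's argument hinges on the fact that over $\aJTL{N}$ this Hom space becomes \emph{two}-dimensional (eq.~\eqref{HomTL-iso}), exactly because the $\pm$-branches merge. The paper's proof is inductive: it first treats $j=0$ concretely, using the Graham--Lehrer result that $\AStTL{0}{-1}$ splits as $\aJTL{N}$-module into the span of even-rank diagrams and the span of odd-rank diagrams (each being a single chain, obtained from the $\TL{N}$-filtration). You assert this $j=0$ splitting at the end of your proposal without justification, but it is not automatic---it is the base case of the whole argument. Once the $j=0$ case is in hand, for $j=1$ one chooses a basis of the two-dimensional space $\HomaJTL(\AStTL{1}{1},\AStTL{0}{-1})$ consisting of maps whose images are the two already-separated chains in $\AStTL{0}{-1}$; the \emph{kernels} of these two maps are then $\aJTL{N}$-submodules of $\AStTL{1}{1}$ which do not interact, so no diagonal arrows can be present. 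One then proceeds inductively in~$j$. Your proposal lacks both the concrete $j=0$ input (rank parity) and the inductive bootstrap via the enlarged Hom space.
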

\begin{proof}
The proof consists of two parts \textbf{1.} and \textbf{2}. The first one considers the case
$j=0$ and it is then used in \textbf{2.} to deduce the structure for $j>0$.

%% \subsubsection{Standard modules with $j=0$}
%\newpage
%\thispagestyle{empty}
\textbf{1.} For the standard $\aJTL{N}$-module $\AStTL{0}{-1}$ without through lines, the
subquotient structure degenerates into a direct sum of two
non-isomorphic indecomposable modules each consisting of affine diagrams
of even or odd rank~\cite{GL}. These two summands are of chain type
and presented on the right diagram of Fig.~\ref{GLfig}.
 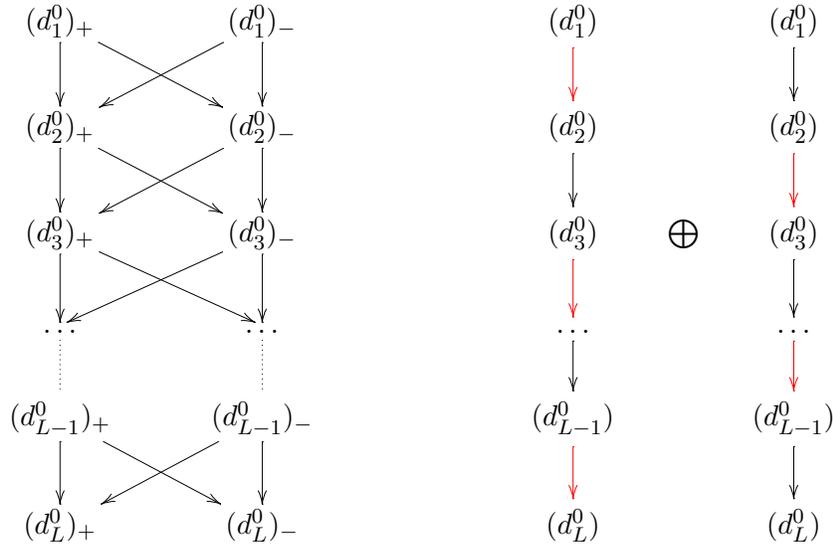
\begin{figure}\centering
 \begin{equation*}
   \xymatrix@R=24pt@C=14pt@W=2pt@M=2pt{
     {(\TLX_{1})_+}\ar[d]\ar[drr]
     &&{(\TLX_{1})_-}\ar[d]\ar[dll]\\
     {(\TLX_{2})_+}\ar[d]\ar[drr]
     &&{(\TLX_{2})_-}\ar[d]\ar[dll]\\
     {(\TLX_{3})_+}\ar[d]\ar[drr]
     &&{(\TLX_{3})_-}\ar[d]\ar[dll]\\
     \dots\ar@{.}[]-<0pt,24pt>&&\dots\ar@{.}[]-<0pt,24pt>\\
     (\TLX_{L-1})_+\ar[d]\ar[drr]
     &&(\TLX_{L-1})_-\ar[d]\ar[dll]\\
      {(\TLX_{L})_+}
     && (\TLX_{L})_-}
    \qquad\qquad\qquad\quad
   \xymatrix@R=20pt@C=12pt@W=4pt@M=4pt
   {{(\TLX_1)}\ar@[red][d]
     &&{(\TLX_1)}\ar[d]\\
     {(\TLX_2)}\ar[d]
     &&{(\TLX_2)}\ar@[red][d]\\
     {(\TLX_3)}\ar@[red][d]%\ar@{.}[]-<0pt,30pt>
     &\bigoplus&{(\TLX_3)}\ar[d]\\%\ar@{.}[]-<0pt,30pt>
     \dots\ar[d]
     &&\dots\ar@[red][d]\\
 {(\TLX_{L-1})}\ar@[red][d]
     &&{(\TLX_{L-1})}\ar[d]\\
     {(\TLX_L)}
     &&{(\TLX_L)}
     %% &&\ar@{.}[]-<0pt,18pt>
   }
 \end{equation*}
      \caption{The structure of the standard $\ATL{N}$-module $\AStTL{0}{-1}$ at $\q=i$  is on the left side, while the corresponding standard $\aJTL{N}$-module
      $\AStTL{0}{-1}$ (but not over $\JTL{N}$) is given on
      the right side. In the latter case, we show the decomposition on standard
      $\TL{N}$-modules using black and red  arrows. Nodes connected by
      black arrows constitute a standard module over $\TL{N}$ while
      red arrows indicate the  action of the last generator $e_N$ that mixes the
direct summands over $\TL{N}$.}
    \label{GLfig}
    \end{figure}
 Black arrows represent the action of the subalgebra $\TL{N}$
-- open Temperley--Lieb algebra -- generated by $e_j$, with $1\leq j
\leq N-1$, and red arrows indicate  the action of the last generator $e_N$ that mixes the
direct summands over $\TL{N}$. The left direct summand over $\JTL{N}$ is spanned by affine
diagrams $\mu$ of even rank $0\leq|\mu|\leq L$, the right summand --
by odd-rank diagrams. This picture easily follows from the
filtration~\cite{GL} of $\AStTL{0}{-1}$ by the standard
$\TL{N}$-modules.

We note next that the translation operator $u$ ($u e_k u^{-1} = e_{k+1}$) mixes even affine diagrams with
odd ones. The corresponding standard module for $j=0$ with respect to the
bigger algebra $\ATL{N}$ containing the element~$u$ has the
subquotient structure given in Fig.~\ref{GLfig} on the left side.
%% %and the subalgebra $\PTL{N}$ is
%%  %with the $u$-eigenvalue $\pm1$.
By selecting a node further down in the ladder, and truncating all
that is at its level or above, one can obtain as well the structure of
all the other standard modules over $\ATL{N}$ presented on the left
side in Fig.~\ref{GLfig2}, using~\eqref{HomATL-iso} and injectivity
of the homomorphisms. We recall  that  the subscript $\pm$
distinguishes non-isomorphic $\ATL{N}$-irreducibles,  and that there are
actually two standard $\ATL{N}$-modules, with the top $(\TLX_j)_{\pm}$, corresponding to the notation $\AStTL{j}{(-1)^{j+1}}$.

\textbf{2.} Restricting to $\aJTL{N}$, the simple modules $(\TLX_{j})_+$
and $(\TLX_{j})_-$ as well as their standard modules are isomorphic as modules over
$\aJTL{N}$ and we thus have the isomorphism of vector spaces
 \begin{equation}\label{HomTL-iso}
 \HomaJTL(\AStTL{j}{(-1)^{j+1}},\AStTL{j-1}{(-1)^{j}}) \cong
 \oC^2,\qquad 1\leq j\leq L.
  \end{equation}
Using this isomorphism, we now show that the ``diagonal" arrows
connecting the left and right strands in $\ATL{N}$-modules are absent in
the corresponding $\aJTL{N}$-modules, {\it i.e.}, they represent actually
 the action of the ideal in $\ATL{N}$ generated by the element $u$. We begin with studying
%(injective) 
homomorphisms from $\AStTL{1}{1}$ to $\AStTL{0}{-1}$. We recall that the last
module is a direct sum of two indecomposables each consisting of affine diagrams
of even or odd rank as in Fig.~\ref{GLfig} on the right side, and each having the same top $(\TLX_{1})$ as the $\AStTL{1}{1}$. Therefore, a
basis in~\eqref{HomTL-iso} 
%of $\AStTL{1}{1}$ into $\AStTL{0}{-1}$
for $j=1$ can be chosen as two homomorphisms with the image isomorphic
to the left 
%(marked by `$+$')
 or right
% (marked by `$-$')
 direct summand in $\AStTL{0}{-1}$ in Fig.~\ref{GLfig}. This means
 the kernel of any of these homomorphisms contains either the submodule
 $(\TLX_{2})\to(\TLX_{3})\to\dots$ -- the chain starting with the red
 arrow -- or the one
 starting with the black arrow.
%$(4)_+\to(1)_+$ marked by `$+$'s
The kernels are submodules over $\aJTL{N}$ and we thus can choose a
 basis in $\AStTL{1}{1}$ such as there are no arrows (with respect to
 the action of $\aJTL{N}$) 
 mixing these submodules. We proceed in the same way for $j>1$.
 This finally gives the diagrams for the modules 
 $\AStTL{j}{(-1)^{j+1}}$ over $\aJTL{N}$ in Fig.~\ref{GLfig2}
 on the right. In these diagrams, we could also indicate the action of $e_N$ by red
 arrows connecting standard $\TL{N}$-modules in a decomposition over
 the subalgebra generated by $e_j$, with $1\leq j\leq N-1$, as in
 Fig.~\ref{GLfig}: the diagrams in such a basis would 
 contain  some ``diagonal" arrows
connecting the left and right strands in  $\AStTL{j}{(-1)^{j+1}}$.

%% on figure
%% \ref{GLfig1}.
By the definition of $\aJTL{N}$ algebra given above in
Sec.~\ref{sec:JTL-decomp-gen}, the $\JTL{N}$-modules
$\AStTL{j}{(-1)^{j+1}}$ for $j>0$ have the same subquotient structure
as in the right diagram in Fig.~\ref{GLfig2}.
This finishes the proof.
\end{proof}

We finally give some explicit examples.

\begin{example}\label{ex:stand-mod}
For $L=3$ or $N=6$, we have the following diagrams for the subquotient structure of the standard $\ATL{N}$-modules $\AStTL{j}{(-1)^{j+1}}$:
 \begin{equation*}
\xymatrix@R=25pt@C=6pt@W=2pt@M=2pt
{&j=2&&\\&&&\\{}&(4)_{\pm}\ar[dr]\ar@[][dl]&&
\quad\xrightarrow{\;\dim=1\;}\qquad\\
     {(1)_+}
     &&{(1)_-}}
\xymatrix@R=25pt@C=10pt@W=2pt@M=2pt
{&j=1&&\\{}&(5)_{\pm}\ar[dr]\ar@[][dl]&&\quad\xrightarrow{\;\dim=1\;}\quad&\\
     {(4)_+}\ar[d]\ar[drr]
     &&{(4)_-}\ar@[][d]\ar[dll]&\\
     {(1)_+}   &&{(1)_-}&}
   \xymatrix@R=25pt@C=10pt@W=2pt@M=2pt
    {&j=0&&\\{(5)_+}\ar@[][d]\ar[drr] &&{(5)_-}\ar[d]\ar[dll]\\
     {(4)_+}\ar[d]\ar[drr]
     &&{(4)_-}\ar@[][d]\ar[dll]\\
     {(1)_+}
     &&{(1)_-}}
 \end{equation*}
 where we also indicated the injective homomorphisms.
 We also show the dimension of
 the spaces of homomorphisms between the standard $\ATL{N}$-modules in
 the figure. 

The diagrams for  subquotient structure of the modules 
 $\AStTL{j}{(-1)^{j+1}}$ over $\aJTL{N}$ are
 \begin{equation*}
\xymatrix@R=25pt@C=6pt@W=2pt@M=2pt
{&j=2&&\\&&&\\{}&(4)\ar[dr]\ar@[red][dl]&&\quad\xrightarrow{\;\dim=2\;}\qquad\\
     {(1)}
     &&{(1)}}
\xymatrix@R=25pt@C=10pt@W=2pt@M=2pt
{&j=1&&\\{}&(5)\ar[dr]\ar@[red][dl]&&\quad\xrightarrow{\;\dim=2\;}\qquad\\
     {(4)}\ar[d]
     &&{(4)}\ar@[red][d]\\
     {(1)}   &&{(1)}}
   \xymatrix@R=25pt@C=10pt@W=2pt@M=2pt
    {&j=0&&\\{(5)}\ar@[red][d] &&{(5)}\ar[d]\\
     {(4)}\ar[d]
     &\bigoplus&{(4)}\ar@[red][d]\\
     {(1)}
     &&{(1)}}
 \end{equation*}
where we also show the filtration~\cite{GL} of the standatd $\aJTL{N}$-modules by the standard
$\TL{N}$-modules and the red arrows represent  the action of
the  generator $e_6$.
They are the same diagrams as in Fig.~\ref{GLfig2} and
Fig.~\ref{GLfig} but truncated for $L=3$. 
The leftmost  diagram
is for the sector with $2j=4$ through lines ($\hat{d}_2=6$), the
central one is spanned by affine diagrams with $2$ through lines
($\hat{d}_1=15$), and the right most diagram has no through lines
($j=0$, $\hat{d}_0=20$).
The two invariants in $\AStTL{0}{-1}$ are given explicitly by
\begin{equation*}
\inv_1 = \sum_{j=1}^{3}u^{2j}
\Bigl( {\; \begin{tikzpicture}
	\draw[thick] (0,0) arc (-180:0:0.1 and 0.25);
	\draw[thick] (0.4,0) arc (-180:0:0.3 and 0.3);
	\draw[thick] (0.6,0) arc (-180:0:0.1 and 0.18);
	\end{tikzpicture}} - {\begin{tikzpicture}
	\draw[thick] (0,0) arc (-180:0:0.5 and 0.35);
	\draw[thick] (0.2,0) arc (-180:0:0.3 and 0.25);
	\draw[thick] (0.4,0) arc (-180:0:0.1 and 0.15);
	\end{tikzpicture}} \;\Bigr)\qquad
	\inv_2 = u \bigl(\inv_1\bigr),
\end{equation*}
where we use the notation for diagrams on an annulus introduced at  the beginning of Sec.~\ref{sec:JTL-decomp-gen}.
The two $\ATL{N}$-invariants $(1)_{\pm}$ on the diagram above are spanned by $\inv_1\pm\inv_2$, respectively.
\end{example}

\section{The spin-chain decomposition over $\JTL{N}$}\label{sec:sp-ch-decomp-JTL}
It turns out that the structure of the modules present in the
 $\gl(1|1)$ spin chain is closely related to the standard modules
 discussed above. First, we give some results about extensions between (``glueings" of)
 simple modules and  give explicit examples.
Then, we construct ``zig-zag'' indecomposable
 $\JTL{N}$-modules that play the role of the standard modules for $\TL{N}$ in the
 spin-chain decomposition, {\it i.e.}, indecomposable direct summands over
 $\JTL{N}$ in the
 spin-chain are gluings of two such zig-zag modules. Finally, we use
 these modules to describe the 
% all intertwining operators respecting
 subquotient structure of spin-chain modules over $\JTL{N}$ and obtain finally the 
  bimodule structure over the pair $\bigl(\JTL{N},\LQGodd\bigr)$.

\subsection{Extensions between simple $\JTL{N}$-modules}\label{prop:exts-PTL}
%As a consequence of the subquotient structure for the standard
%$\JTL{N}$-modules studied in the previous section, we have the
%following result.
We formulate now an important lemma which will be used in what follows. 
\begin{lemma}\label{thm:exts-PTL}
 The dimension of the group of first
extensions between simple $\JTL{N}(0)$-modules $\AIrrTL{n}{(-1)^{n+1}}$
and $\AIrrTL{m}{(-1)^{m+1}}$, for $ n= m\pm1$, is not less than
$2$. 
\end{lemma}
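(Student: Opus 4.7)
The plan is to extract two linearly independent glueings directly from the subquotient structure of the standard $\JTL{N}(0)$-modules established in Prop.~\ref{prop:stmod-subqstr} (see Fig.~\ref{GLfig2}, right), combined with the general identification
\[
\dim\ExtJTL(S,T) \;=\; [\,\mathrm{rad}(P(S))/\mathrm{rad}^2(P(S)):T\,],
\]
valid for any two simples $S,T$ over a finite-dimensional algebra with projective cover $P(S)$.

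First I would handle the direction $\dim\ExtJTL((\TLX_n),(\TLX_{n+1}))\geq 2$. By Prop.~\ref{prop:stmod-subqstr} the standard module $\AStTL{n}{(-1)^{n+1}}$ has simple head $(\TLX_n)$, and its second radical layer contains \emph{two} copies of $(\TLX_{n+1})$, each nontrivially reached from the head by the two outgoing arrows displayed in Fig.~\ref{GLfig2}. Quotienting by $\mathrm{rad}^2$ gives a radical-length-two module $M$ of shape $(\TLX_n)$ on top and $(\TLX_{n+1})^{\oplus 2}$ in the socle, still generated by the image of the head. Since $M$ has simple head, the projective cover $P((\TLX_n))$ over $\JTL{N}(0)$ surjects onto it, and hence $\mathrm{rad}(P((\TLX_n)))\twoheadrightarrow\mathrm{rad}(M)=(\TLX_{n+1})^{\oplus 2}$. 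As the target is semisimple, this surjection factors through $\mathrm{rad}(P)/\mathrm{rad}^2(P)$, so $(\TLX_{n+1})$ appears there with multiplicity at least~$2$, and the Ext formula yields the bound.

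For the opposite direction $\dim\ExtJTL((\TLX_{n+1}),(\TLX_n))\geq 2$, I would invoke the canonical Graham--Lehrer cellular anti-involution $\sigma$ on $\JTL{N}(0)$ (horizontal reflection of affine diagrams) and the resulting exact contravariant duality $D$ on finite-dimensional modules. By standard cellular-algebra theory each simple $\AIrrTL{j}{(-1)^{j+1}}$ is self-dual under $D$, so $D$ produces an $\oC$-linear isomorphism $\ExtJTL((\TLX_{n+1}),(\TLX_n))\cong\ExtJTL((\TLX_n),(\TLX_{n+1}))$, and the previous step delivers the inequality.

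The main subtlety I anticipate is ensuring that the two glueings of $(\TLX_{n+1})$ below $(\TLX_n)$ witnessed inside $M$ genuinely span a two-dimensional subspace of $\ExtJTL$, rather than representing the same class twice; the projective-cover formulation above sidesteps this cleanly. A more concrete alternative would construct the two extensions explicitly as the quotients of $M$ by each of the two simple socle summands, and verify directly that their Baer difference is a split extension, equivalently that $M$ admits no endomorphism interchanging its two socle copies while fixing the head --- an argument that is workable but notationally heavier than the route via $P((\TLX_n))$.
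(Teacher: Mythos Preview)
Your proof is correct and follows essentially the same route as the paper: both extract from Prop.~\ref{prop:stmod-subqstr} the length-two quotient $M$ of $\AStTL{n}{(-1)^{n+1}}$ with simple head $(\TLX_n)$ sitting over $(\TLX_{n+1})^{\oplus2}$, and both handle the dual inequality via the diagram-reflection anti-involution. The only difference is cosmetic --- the paper deduces the bound from $M$ by contradiction (a one-dimensional $\mathrm{Ext}^1$ would let one rotate the socle so that a copy of $(\TLX_{n+1})$ splits off, contradicting the simple head), whereas you argue directly via the surjection $P((\TLX_n))\twoheadrightarrow M$ and the identity $\dim\ExtJTL(S,T)=[\mathrm{rad}\,P(S)/\mathrm{rad}^2 P(S):T]$.
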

\begin{proof}
Assume that the dimension is less than $2$, {\it i.e.}, $\mathrm{dim} \ExtJTL\bigl(\AIrrTL{n}{(-1)^{n+1}},\AIrrTL{n\pm1}{(-1)^{n}}\bigr) = 1$ (it is obviously not zero). We take then the  standard module $\AStTL{j}{(-1)^{j+1}}$, which is reducible but indecomposable as it was shown in Prop.~\ref{prop:stmod-subqstr}, and consider its quotient by a submodule generated from both subquotients $(\TLX_{j+2})$, or $\AIrrTL{j+2}{(-1)^{j+1}}$, see the right part of Fig.~\ref{GLfig2}. This quotient is still a reducible but indecomposable module with the subquotient structure
\begin{equation*}
   \xymatrix@R=18pt@C=12pt@W=2pt@M=2pt
   {{}&(\TLX_{j})\ar[dr]\ar[dl]&&\\
     (\TLX_{j+1})
     &&(\TLX_{j+1})
     }
\end{equation*}
On the other hand, using our assumption about the first extension groups we can choose a basis in the direct sum 
$(\TLX_{j+1})\oplus(\TLX_{j+1})$ such that the resulting module has a decomposition onto a direct sum $(\TLX_{j+1})\oplus 
(\TLX_{j})\to(\TLX_{j+1})$ of an irreducible module and an indecomposable one. This property contradicts  the fact the the module
is indecomposable. We therefore obtain that dimension of $\ExtJTL\bigl(\AIrrTL{n}{(-1)^{n+1}},\AIrrTL{m}{(-1)^{m+1}}\bigr)$, for 
$m=n+1$, is not less than $2$. Similarly, we can prove the statement for $m=n-1$. We take a conjugate module  $\AStTL{j}{(-1)^{j+1}}^*$, 
which is the space of linear maps  $\AStTL{j}{(-1)^{j+1}}\to\oC$ with the JTL action given by $a f(\cdot) = f(a^{*}\cdot)$, where the anti-involution $\cdot^*$ on the JTL algebra corresponds to reflecting the diagram (for an element $a$) in a horizontal line. The conjugate module has all arrows inverted when compared with  the diagram for the original module. We then consider a submodule with two subquotients $(\TLX_{j+1})$ and one $(\TLX_{j})$ in $\AStTL{j}{(-1)^{j+1}}^*$ and repeat the previous steps using the assumption on the one-dimensionality of the first extensions. This last step finishes our proof.
\end{proof}

%We will formulate below a conjecture proposing that the dimension of the
%extension group is precisely two -- this is motivated by our analysis of
%projective and tilting $\JTL{N}$-modules, which will be published
%elsewhere -- but Lem.~\ref{thm:exts-PTL} is enough for our purposes in studying
%spin-chain decompositions.

In what follows, we use a notation for basis elements denoted by
$\ext{x}_{\pm}$ and $\ext{y}_{\pm}$ that span a two-dimensional subspace in the first extension groups
$\ExtJTL\bigl(\AIrrTL{n}{(-1)^{n+1}},\AIrrTL{n\pm1}{(-1)^{n}}\bigr)$ from Lem.~\ref{thm:exts-PTL}. 
The basis element $\ext{x}_{\pm}$ is chosen to
represent an extension corresponding to the action of the open Temperley--Lieb
subalgebra $\TL{N}$ generated by $e_j$, with $1\leq j\leq N-1$, and it is
depicted by an arrow connecting two simple subquotients 
$\AIrrTL{n}{(-1)^{n+1}}$ and $\AIrrTL{n\pm1}{(-1)^{n}}$. The second extension $\ext{y}_{\pm}$ corresponds to 
the action of the subalgebra
$u\TL{N}u^{-1}\subset{\JTL{N}}$ isomorphic to $\TL{N}$ and  containing the generator $e_N$ and it is depicted by
 a second
arrow connecting the same pair of subquotients as in the diagram
\begin{equation}\label{mod-first-ext}
\xymatrix@R=27pt@C=3pt@W=2pt@M=2pt
{
&&&\AIrrTL{n}{(-1)^{n+1}}\ar[dl]^(.5){\beta \ext{x}_-}\ar@[][dll]_(.5){\alpha \ext{y}_-}
      \ar[dr]_(.5){\gamma \ext{x}_+}\ar@[][drr]^(.5){\delta \ext{y}_+}&&\\
     &\AIrrTL{n-1}{(-1)^{n}}&\AIrrTL{n-1}{(-1)^{n}}   &
     &\AIrrTL{n+1}{(-1)^{n}}&\AIrrTL{n+1}{(-1)^{n}}
     }
\end{equation}
%}
where the coefficients $\alpha,\beta,\gamma,\delta\in\oC$, and we set
$\AIrrTL{0}{-1}\equiv0$. We note that different elements in the
intersection of the two subalgebras $\TL{N}$ and $u\TL{N}u^{-1}$ can
actually map to different linear combinations of the simple
submodules; this is not shown explicitly on the diagram. The existence of two different extensions
of this type was actually announced in the previous section -- we 
refer the reader to our  discussion in the proof of
Prop.~\ref{prop:stmod-subqstr} where arrows of two different types/colors correspond to the action of the two different subalgebras on the right part of Fig.~\ref{GLfig}.  In the proof, we give a decomposition of standard
$\JTL{N}$-modules on standard modules over the subalgebra $\TL{N}$
generated by $e_j$, with $1\leq j\leq N-1$, and show the action of $e_N\in u\TL{N}u^{-1}$ connecting the
direct summands.

Taking all possible 
quotients of the module in~\eqref{mod-first-ext} by a submodule isomorphic to the direct sum
$\AIrrTL{n-1}{(-1)^{n}}\oplus\AIrrTL{n+1}{(-1)^{n}}$, we obtain a family of indecomposable $\JTL{N}$-modules with the subquotient structure
\begin{equation}\label{mod-first-quotient}
\xymatrix@R=30pt@C=30pt@W=2pt@M=2pt
{
\AIrrTL{n-1}{(-1)^{n}}
&{\;\AIrrTL{n}{(-1)^{n+1}}\;}\ar@<-0.5ex>@[][r]_(.5){(\delta:\gamma)\ext{y}_+}\ar@<+0.5ex>[r]^(.5){\ext{x}_+}
\ar@<+0.5ex>@[][l]^(.5){(\alpha:\beta)\ext{y}_-}\ar@<-0.5ex>[l]_(.5){\ext{x}_-}
 &\AIrrTL{n+1}{(-1)^{n}}&
     }
\end{equation}
and parametrized by two points $x_0=\alpha:\beta$ and $y_0=\delta:\gamma$ on a complex
projective line, $x_0,y_0\in\oC\oP^1$. These modules are denoted by $\MmodJTL{1}{n}{x_0,y_0}$
and they will appear below in
spin-chain decompositions. To simplify notations, we will use below
only single arrows with specified parameters on them. Before going to the decomposition, we first
give an example at $N=4$ where parameters on $\oC\oP^1$ appear.

\subsubsection{Example for $N=4$}\label{sec:spin-chain-decomp-four}
The decomposition of the full spin-chain for $N=4$ sites with respect
to the $\JTL{N}$ action is given by the direct sum, where we set for
simple subquotients
$\AIrrTL{1}{1}=(2)$ and $\AIrrTL{2}{-1}=(1)$,
 \begin{equation*}%\label{decomp-N6}
   \xymatrix@C=10pt@R=18pt{
  %% FIRST level 
    &&(1)\ar[d]&
     &&(2)\ar@[][dl]_{1:0}\ar[dr]^{0:1}&&
		 &(1)\ar[d]&&\\
  %% SECOND level
	(1)&
	 {\oplus}&(2)\ar[d]_{1:i}&
	 {\oplus}&(1)\ar@[][dr]_{1:0}
	  & &(1)\ar[dl]^{0:1}&
        {\oplus}&(2)\ar[d]^{1:(-i)}
			  &{\oplus}&(1)\\
  %% THIRD level
	&&(1)&
	 &&(2)&&
      &(1)&&&
   }  
 \end{equation*}
where the left-most direct summand is at $S^z=2$, the second is at
$S^z=1$, {\it etc.}, see also a general decomposition in~\eqref{decomp-JTL}
below.  The only isomorphic modules are the two invariants depicted by
$(1)$ and mapped to each other by $\e^2$ and $\f^2$. The two modules
at $S^z=\pm1$ are non-isomorphic -- they differ by the points on
$\oC\oP^1$ indicated as $(1:\pm i)$ on the lower parts of their
diagrams; in other words, the arrow from $(2)$ to $(1)$ in the sector
$S^z=1$, on the left side, corresponds to the extension $\ext{x}_+ +
i\ext{y}_+$ while the submodule $(2)\to(1)$ at $S^z=-1$ corresponds to
the extension $\ext{x}_+ - i\ext{y}_+$. The basis extensions
$\ext{x}_+$ and $\ext{y}_+$ are introduced
before~\eqref{prop:exts-PTL} and here they simply mean that $e_1$ maps
from the two-dimensional subquotient $(2)$ to the one-dimensional $(1)$ with the coefficient $1$ in an
appropriate basis in the submodule $(1)$
%%  -- the black arrow -- and
while $e_3$ maps with the coefficient $\pm i$, in the same basis of course.

\subsection{Spin-chain modules over $\JTL{N}$}\label{sec:TL-decomp}
We now recall the decomposition of $\Hilb_{N}$ over the  $\TL{N}$, the open case~\cite{ReadSaleur07-2}.
\begin{equation}\label{decomp-TL}
\Hilb_{N}|_{\rule{0pt}{7.5pt}%
\TL{N}} =  \bigoplus_{j=1}^{L} \PrTL{j}\boxtimes \XX_{1,j} \oplus \StTL{L}\boxtimes \XX_{1,L+1}, 
\end{equation}
 with ``multiplicities'' $\XX_{1,j}$ in front of indecomposable direct
 summands $\PrTL{j}$ being simple $j$-dimensional modules over
 $\LQG$ defined in~\eqref{eq:sl-irrep}. We use the notations $\PrTL{j}$ and $\StTL{j}$ for
 projective and standard $\TL{N}$-modules, respectively. The standard
 module $\StTL{L}$ is the trivial representation $(1)$; the standard
 modules with $1 \leq j < L$ have structure of simple subquotients as $\StTL{j}:
 \IrrTL{j}\rightarrow\IrrTL{j+1}$, and $\StTL{0}$ is the simple module
 $\IrrTL{1}$. The projectives
 $\PrTL{j}$ are self-conjugate and
 described by the diagram $\StTL{j}\rightarrow\StTL{j-1}$.

In general in the periodic case, the $\JTL{N}$ action commutes with
$S^z=2\h$ and
we have thus a decomposition of the
full spin-chain over $\JTL{N}$ on $N=2L$ sites as
\begin{equation}\label{decomp-JTL}
\Hilb_N|_{\rule{0pt}{7.5pt}%
\JTL{N}} =  \bigoplus_{j=-L+1}^{L-1} \APrTL{j}\boxtimes\Xodd{j} \oplus
\AStTL{L}{(-1)^{L-1}}\boxtimes\Xodd{L}, 
\end{equation}
where $\APrTL{j}$ denotes a unique module in the sector
$S^z=j$  which we call \textit{the spin-chain}
   $\JTL{N}$-module, and $\Xodd{j}$ is the one-dimensional and  $\Xodd{L}$ is the two-dimensional simple
$\centJTL$-module (the representation theory of the centralizer $\centJTL$ is described
 in Sec.~\ref{sec:rep-th-centJTL}). We show below the the modue $\APrTL{j}$  is indecomposable. We also set
 $\APrTL{L}\equiv\AStTL{L}{(-1)^{L-1}}$ in what follows.
% with $S^z=j$). 

The subquotient structure of the spin-chain $\JTL{N}$-modules $\APrTL{j}$ 
will be obtained using the centralizing property with the $\centJTL$ algebra which
is essentially the representation $\repQG$ of $\LQGodd$.
 The decomposition of the
 spin-chain over $\LQGodd$ together with the subquotient structure of
 each indecomposable summand $\PPodd{n}$ is given
 in~\eqref{eq:Hilb-decomp-Uqodd}.

In a double-centralizing situation, it would be sufficient to use
Thm.~\ref{lem:endo-PPodd} describing all 
intertwining operators between indecomposable  $\PPodd{n}$ and $\PPodd{m}$ modules over the
$\JTL{}$-centralizer  to reconstruct all
% black and red
 arrows for the subquotient structure
%(as we did for $N=6,8$ above)
depicting the $\JTL{N}$ action. The double centralizing property is
obvious in the  semisimple case but it is not evident for the
representation we consider here. We  show below that the
centralizer of $\centJTL$ is actually a larger algebra containing the
representation $\repgl$ of $\JTL{N}$. Our strategy  thus requires a
technical modification  of the ideas described in Sec.~\ref{subsec:imp-bimod}, where the
double-centralizing property was assumed for simplicity of the general discussion, and consists
of the following steps. First, we propose a
subquotient structure for the spin-chain $\JTL{N}$-modules $\APrTL{j}$;
we  study  all homomorphisms between these modules and
identify the corresponding intertwining operators with PBW basis
elements of $\LQGodd$ given in Def.~\ref{dfn:Uqodd}  that are
represented faithfully. At this step, we will only have a \textit{sufficient}
condition on the module structure to have the centralizer
$\centJTL$. To show that the subquotient structure indeed corresponds
to the $\JTL{N}$ action, we then carry out  a subsequent analysis involving fermionic
expressions for generators of $\repgl\bigl(\JTL{N}\bigr)$ computed
in~\cite{GRS1} and Thm.~\ref{lem:endo-PPodd} giving a subquotient
structure for $\Hilb_{N}$ as a module over the centralizer of
$\centJTL$.

Before going to the decomposition for any even $N$, 
we first  discuss the case $N=8$, to give the reader 
more experience with modules and arrows.

\subsubsection{Decomposition for $N=8$}\label{subsec:decom-N8}
%% To get more experience with diagrams for $\JTL{N}$-modules, we give
%% the decomposition for $N=8$ sites as well. 
Following the strategy described above and using the decomposition
over $\LQGodd$ for $N=8$ sketched in Fig.~\ref{Uq_odd_T-N8}, we find 
the decomposition of the full spin-chain over $\JTL{8}$: it is the direct sum~\eqref{decomp-JTL} with each 
 indecomposable summand $\APrTL{j}$ given from left
 to right corresponding to the decreasing value of $-4\leq j\leq 4$  in the sum
%% \newpage
%% \thispagestyle{empty}
 \begin{equation}\label{PTL_decomp_N8}
    \includegraphics[scale=0.93]{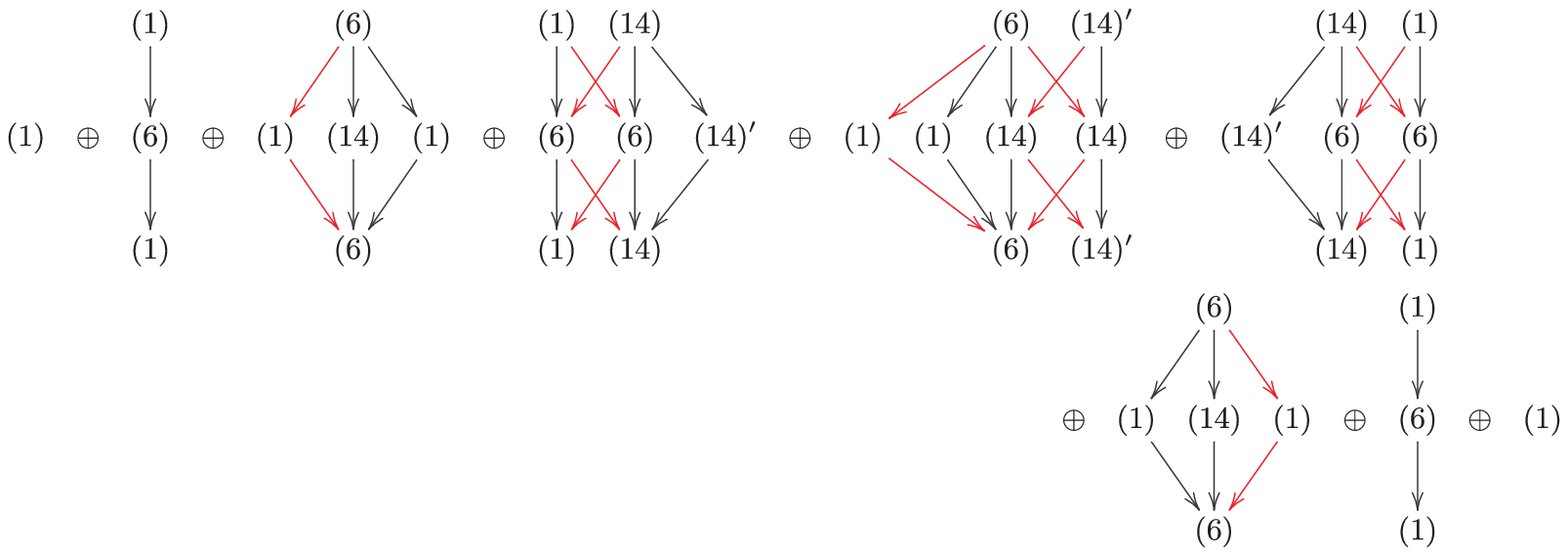}
\end{equation}
and we have set $\AIrrTL{1}{1}=(14)'$, $\AIrrTL{2}{-1}=(14)$,
$\AIrrTL{3}{1}=(6)$, and $\AIrrTL{4}{-1}=(1)$ indicating dimensions of
simple subquotents in the round brackets. In the diagrams, we
introduced arrows of two types as  in
Fig.~\ref{GLfig}
%~\eqref{mod-first-ext}
 --  black arrows show the action of the subalgebra
$\TL{N}\subset{\JTL{N}}$ generated by $e_j$, $1\leq j\leq N-1$ (this
fixes a basis in each  sector $S^z$, up to a basis in each $\TL{N}$-summand), and
red arrows
show the action of the last generator $e_N$ that mixes the
direct summands over $\TL{N}$ (the $e_N$ can act non-trivially also along
the black arrows in such a basis);
ignoring the red arrows gives a decomposition over $\TL{N}$ for each sector with $-3\leq
   S^z\leq3$ which is obtained from~\eqref{decomp-TL} by  restricting to a given  
 value of $S^z$. 
We depict the arrows
   without their projective-line parameters introduced in Sec.~\ref{prop:exts-PTL} for brevity.

 We describe now intertwining operators
respecting the subquotient structure proposed
in~\eqref{PTL_decomp_N8}. We note first that the only isomorphic
modules in~\eqref{PTL_decomp_N8} are the two invariants depicted by
$(1)$ and connected by the action of $\e^4$ and $\f^4$, otherwise we
would have an intertwining operator that does not belong $\centJTL$. From the
relations~\eqref{Uqodd-dfn-4} and the PBW basis given in
Def.~\ref{dfn:Uqodd}, we see that  each space
$\HomTL(\APrTL{j},\APrTL{k})$ should be one-dimensional whenever
$j-k=1\modd2$, and spanned by
$\FF{(j-k-1)/2}$ if $j>k$ (and $\EE{(k-j-1)/2}$, if $j<k$), times
appropriate polynomial in $\h$ projecting onto $\APrTL{j}$. We denote
the corresponding projector by $\proj_j(\h)$, and it is defined as
%% the representation $\repQG$ of the polynomial
\begin{equation}\label{proj-def}
\proj_j(\h) = \prod_{n=-N/2;\,n\ne j}^{n=N/2}(2\h-n).
\end{equation}

In order to see the corresponding homomorphisms between the direct
summands in~\eqref{PTL_decomp_N8}, we consider  the case $j=0$ (the
right diagram) and $k=1$ (the left one)
%% The action of $\JTL{8}$ between the sectors $S^z=1$ and $S^z=0$ can be described
%% using the diagrams
\begin{equation*}\label{decomp-N8-Sz0}
%    \includegraphics[scale=1]{PTL_decomp_N8_Sz0.eps}
%%     \xymatrix@R=31pt@C=4pt{
%% &&&\\
%% &\boldsymbol{\APrTL{0}:}&&\\
%% &&&
%% }
    \xymatrix@R=28pt@C=16pt{
  %% FIRST level
          &(1) \ar@[red][dr] \ar@[][d]
           &&(14)\ar@[red]@/_/[dll] \ar[dl] \ar[dr]&\\
  %% SECOND level
        &(6)\ar@[red]@/_/[drr]_(.65){\ext{y}_-}\ar[d]_{\ext{a}}
             &(6)\ar@[red][dl]^(.65){\ext{y}_+}\ar[dr]^{\ext{b}}
              &&(14)' \ar[dl]_{\ext{c}}\\
  %% THIRD level
        &(1)&&(14)&
}
    \xymatrix@R=31pt@C=4pt{
&&&\\
&\boldsymbol{\oplus}&&\\
&&&
}
    \xymatrix@R=28pt@C=16pt{ 
      &&(6) \ar@[red]@/_/[dll]_(.35){\ext{y}_+} \ar@[][dl]_{\ext{a'}} \ar@[][dr]_{\ext{b'}} \ar@[red]@/^/[drr]^(.35){\ext{y}_-}
 	  &&(14)'\ar@[red][dl]_(.33){\ext{y'}_+} \ar[d]^{\ext{c'}}&\\
    (1) \ar@[red]@/_/[drr]_(.33){}
     &(1) \ar@[][dr]_(.33){}
      &&(14)\ar[dl]_{y_0} \ar@[red][dr]^(.33){}
 	  &(14)\ar@[red]@/^/[dll]\ar[d]^{x_0}&\\
      &&(6)&&(14)'&
    }   
\end{equation*}
%where the black arrows represent the action of the open Temperley--Lieb $\TL{N}$ and the red
%ones -- of the last generator $e_N$. 
where we mark arrows by corresponding representatives from the
first-extensions groups for $\JTL{N}$, see
Lem.~\ref{thm:exts-PTL} and the discussion below the lemma (we do not
suppose that extensions marked by first latin letters, like $\ext{a}$,
are linear combinations of $\ext{x}_+$ and $\ext{y}_+$ introduced in Sec.~\ref{prop:exts-PTL}, and use only
the lower bound stated in Lem~\ref{thm:exts-PTL}.) The homomorphism
mapping the right diagram ($k=0$) to the left one ($j=1$) corresponds
to  $\E\,\proj_0(\h)$ and has the image isomorphic to a submodule with
the subquotient structure
$(1)\leftarrow(6)\rightarrow(14)\leftarrow(14)'$ where the subquotient
$(6)$ is in a linear combination of the pair of $(6)$'s in the left
diagram. The kernel of the homomorphism is generated from a linear
combination of the two $(1)$'s and a linear combination of the two
$(14)$'s in the right diagram. The linear combinations can in
principle be computed using the decomposition over $\TL{N}$ and
$\LQG$, and the action
in projective $\LQG$-modules from App.~\Approjmodbase~but we do not
need it. What we get are linear relations among the extensions
$\ext{a'}=\alpha\ext{a}+\gamma\ext{y}_+$ and
$\ext{b'}=\beta\ext{b}+\delta\ext{y}_-$, and similarly for $\ext{c'}$,
where $\alpha$, $\beta$, $\gamma$, $\delta$ are some complex
numbers. We see also that there are no more homomorphisms from the right
diagram to the left: an image isomorphic to $(1)\leftarrow(6)$ is not
possible because $\ext{y}_-$ and $\ext{b}$ in the left diagram are linearly independent,
an image isomorphic to $(1)\leftarrow(6)\rightarrow(14)$ is not
possible too because this would require the top $(14)'$ in the right
diagram to be in the kernel and  both $(14)$ to be  in the kernel
too. A similar analysis can be carried out  for all other pairs $(j,k)$, showing  
 that the decomposition~\eqref{PTL_decomp_N8} has an algebra
of intertwining operators isomorphic to $\centJTL$ indeed.

So far, we have only shown that a \textit{sufficient}
condition for the module structure to have the centralizer
$\centJTL$ holds. We cannot have more arrows in the diagrams, see a
general discussion after Thm.~\ref{Thm:homs-APrTL}. Next,  we show that
removing at least one red arrow in the  decomposition~\eqref{PTL_decomp_N8} results in
an enlarged endomorphism algebra. Indeed, let us suppose that the arrow
connecting the top $\IrrTL{1}=(14)'$ with $\IrrTL{2}=(14)$ and marked by $\ext{y'}_+$ in
the right diagram, for $S^z=0$, is absent. We note the self-conjugacy ($e_j^{\dagger}=e_j$) of the $\JTL{N}$-representation
$\repgl$ in~\eqref{rep-JTL-1} which implies that
$\APrTLs{0}\cong\APrTL{0}$.
Therefore, another arrow
mapping from the same subquotient $\IrrTL{2}$ to $\IrrTL{1}$ in the
bottom should be also absent. This means, there exists an extra homomorphism from $S^z=0$ to $S^z=1$  with the image
$(1)\leftarrow(6)\rightarrow(14)$ because in this case we can take the
top $\IrrTL{1}$ to be in the kernel and we  can still embed the top
$\IrrTL{3}=(6)$ into a linear combination of the two $\IrrTL{3}$'s in
the middle level of the left diagram due to the linear relation
between $\ext{b'}$, $\ext{b}$, and $\ext{y}_-$ stated above. The
extra homomorphism is not from $\centJTL$ and we thus get a contradiction with
Thm.~\ref{Thm:centr-JTL-main}. We could
similarly suppose that there is  the arrow marked by $\ext{y}_-$ in the left
diagram is absent, and the arrow from the top $\IrrTL{2}$ to $\IrrTL{3}$ should
be thus absent too. Then, the extra homomorphism from $S^z=0$ to
$S^z=1$ does not exist in general but we get an extra homomorphism
from $S^z=1$ to $S^z=2$ which is also not from $\centJTL$. The analysis can be repeated for any direct
summand in the decomposition~\eqref{PTL_decomp_N8}.

So far, we considered consequences of absence of arrows from a top
$\IrrTL{j}$ to $\IrrTL{j+1}$. To see what happens if we suppose the absence of an
arrow mapping a top $\IrrTL{j}$ to $\IrrTL{j-1}$ requires a still more
delicate analysis of the extensions. Let us suppose that the arrow
connecting the top $\IrrTL{3}=(6)$ with $\IrrTL{2}=(14)$ and marked by $\ext{y}_-$ in
the right diagram, for $S^z=0$, is absent. We also mark  the right-most
arrow from the top $\IrrTL{3}$ to the right node $\IrrTL{2}$ in the
diagram for $S^z=2$ (it is the third summand
in~\eqref{PTL_decomp_N8}) by a corresponding extension $\ext{d}$. Then, mapping by $\FF{0}=\F$ the module
$\APrTL{0}$ to $\APrTL{-1}$, and by $\FF{1}$ the 
$\APrTL{2}$ to $\APrTL{-1}$, we get that $\ext{b'}$ is proportional to
the $\ext{d}$ because the two operators map the two top $\IrrTL{3}$'s
to \textit{the same} linear combination of two $\IrrTL{3}$'s in the
middle level of $\APrTL{-1}$. On the other hand, mapping $\APrTL{2}$ to
$\APrTL{1}$ by $\F$ and $\APrTL{0}$ to $\APrTL{1}$ by $\E$, we get
that the two extensions $\ext{b'}$ and $\ext{d}$ should be linearly
independent, hence a contradiction.  This can  only be solved by the presence of the
arrow marked by $\ext{y}_-$ in the diagram for~$\APrTL{0}$.

We finally conclude that $\JTL{N}$
   action  mixes the direct summands over $\TL{N}$ in each sector into one indecomposable
   module in the way  described just above and in~\eqref{PTL_decomp_N8}.

 \subsection{The spin-chain decomposition  over $\JTL{N}$: the general case}\label{sec:JTL-decomp-full}
We now give  the spin-chain
 decomposition over $\JTL{N}$ for any even  number of sites $N$. 
Following the examples given above, we see
   that $\JTL{N}$ action  mixes all the projective modules over the
   subalgebra $\TL{N}$ in each subspace with $S^z=j$ into one
   indecomposable module $\APrTL{j}$.
Using the decomposition~\eqref{decomp-TL} over the $\TL{N}$
 subalgebra, we propose the subquotient
 structure for $\APrTL{0}$
%\newpage
\thispagestyle{empty}
\begin{figure}\centering
    \includegraphics[scale=0.9]{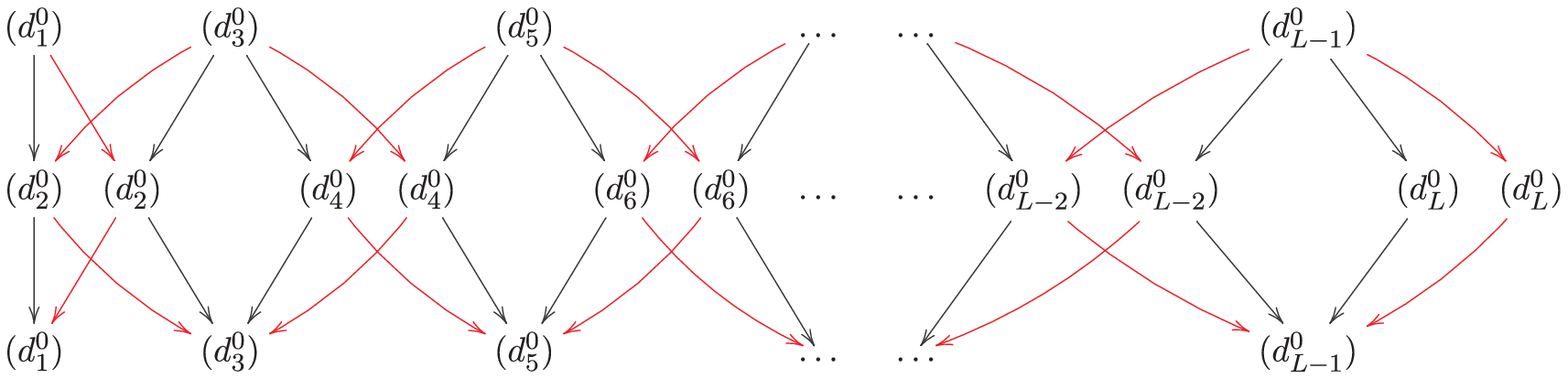}
      \caption{The subquotient structure for the $\JTL{N}$-module
 $\APrTL{0}$ corresponding to the sector   $S^z=0$ and  $L=0\modd2$.  For odd $L$, $\APrTL{0}$  is obtained by changing $L\to L+1$ and then removing nodes $(\TLX_{L+1})$.}
    \label{FF-JTL-mod-zero}
    \end{figure}
given in Fig.~\ref{FF-JTL-mod-zero} for $L=0\modd2$ ($2L=N$), and we set as usual $\AIrrTL{j}{(-1)^{j-1}}=(\TLX_{j})$.
Here, again as in the example for $N=8$, we see that removing all red arrows gives the decomposition over the open Temperley--Lieb $\TL{N}\subset\JTL{N}$ into a direct sum of its projective and trivial modules. 

We note that the diagram for $\APrTL{0}$ can be depicted in a more
familiar way as a module with the ``two-strands" subquotient
structure 
(of ``Feigin--Fuchs'' type)
 presented on the left of
Fig.~\ref{FF-JTL-mod}, where we do not use colors and it is supposed
that arrows connecting isomorphic subquotients correspond to linearly
independent extensions.

 For any sector with non-zero  $j=S^z$, we propose similarly
the subquotient structure for $\APrTL{j}$  given in
Fig.~\ref{FF-JTL-mod}, on the right. The tower for $\APrTL{0}$ has a
trivial top subquotient because $(\TLX_0)=0$ and is cut at the $L$-th
level, {\it i.e.}, it ends with  the pair of $(\TLX_L)$. Other towers have
a top, and also end with the pair of $(\TLX_L)$.  We note that the
two simple subquotients at each level of the ladders are
isomorphic. The Hamiltonian $H$ from~\eqref{hamil-def} acts by Jordan
blocks of rank $2$ on each pair of isomorphic simple subquotients
with one at the top (having only outgoing arrows) and the second
subquotient in the socle of
the module (having only ingoing arrows). The Jordan block structure is due to presence of zero
fermionic modes in the Hamiltonian as  observed in~\cite{GRS1}.
%\newpage
%\thispagestyle{empty}
 \begin{figure}\centering
\begin{equation*}
  \xymatrix@R=24pt@C=30pt{
    &{\mbox{}}&&\\
    {(\TLX_1)}\ar@[][d]\ar[drr]
    &&{(\TLX_1)}\\
    {(\TLX_2)}\ar@[][urr]\ar[drr]
    &&{(\TLX_2)}\ar[u]\ar@[][d]\\
    {(\TLX_3)}\ar[u]\ar@[][d]\ar@[][urr]\ar[drr]
    &&{(\TLX_3)}\\
    {(\TLX_4)}\ar@[][urr] \ar@{.}[]-<0pt,18pt>
    &&{(\TLX_4)}\ar[u]
    \ar@{.}[]-<0pt,18pt>}
\qquad\qquad\qquad
  \xymatrix@=22pt
  {{}&(\TLX_{|j|})\ar[dr]&&\\
    {(\TLX_{|j|+1})}\ar[ur]\ar@[][d]\ar[drr]
    &&{(\TLX_{|j|+1})}\\
    {(\TLX_{|j|+2})}\ar@[][urr]\ar[drr]
    &&{(\TLX_{|j|+2})}\ar[u]\ar@[][d]\\
    {(\TLX_{|j|+3})}\ar[u]\ar@[][d]\ar@[][urr]\ar[drr]
    &&{(\TLX_{|j|+3})}\\
    {(\TLX_{|j|+4})}\ar@[][urr] \ar@{.}[]-<0pt,18pt>
    &&{(\TLX_{|j|+4})}\ar[u]
    \ar@{.}[]-<0pt,18pt>}
\end{equation*}
      \caption{The two-strands structure of the spin-chain
      $\JTL{N}$-modules $\APrTL{j}$ for $j=0$ on the left and $j\ne0$
      on the right side. The towers are ended by a pair of
      $(\TLX_L)$. We do not show red arrows (corresponding to $e_N$
      action) used in Fig.~\ref{FF-JTL-mod-zero} but they can be
      easily recovered using the decomposition over $\TL{N}$.}
    \label{FF-JTL-mod}
    \end{figure}
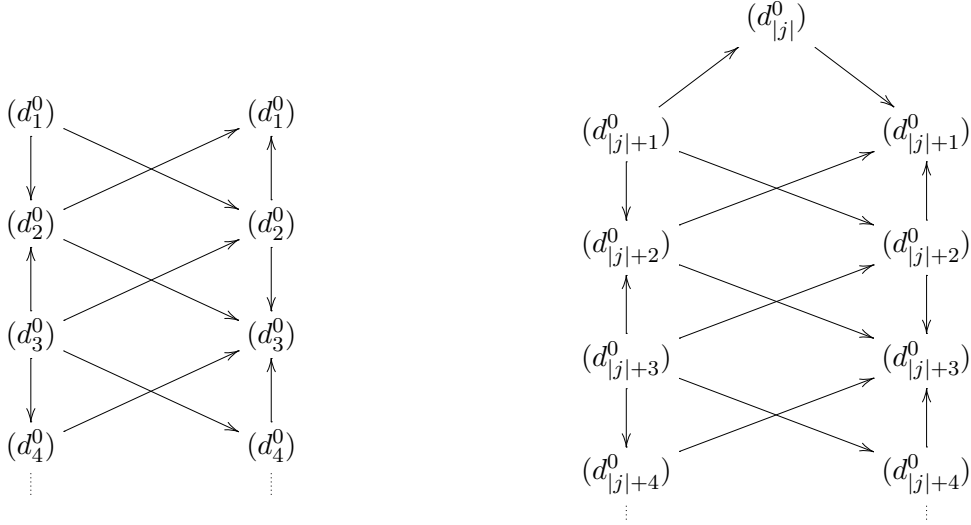
% \newpage

It is important to note that modules $\APrTL{j}$ and $\APrTL{-j}$ are
not isomorphic, otherwise we would have an intertwining operator not
from $\centJTL$, -- the only isomorphic ones are the trivials
$\APrTL{-L}\cong\APrTL{L}=(1)$ connected by the action of $\f^L$ and
$\e^L$ from $\centJTL$. For each $-L+1<j<L-1$, the module $\APrTL{j}$
is fully described by a sequence of parameters $\{x_i\}$ on a complex
projective line, $x_i\in\oC\oP^1$, which were introduced
after~\eqref{mod-first-quotient} and mentioned also
in~\ref{sec:spin-chain-decomp-four} for a particular case. We leave
this characterization for  future work.  We only note that in a faithful
representation, say, in a direct sum of all projective JTL modules or
its tilting modules 
such parameters would not appear (as the indecomposable projective and tilting modules are uniquely characterized by their subquotient structure, up to an isomorphism) but in a non-faithful representation we know such parameters might appear because of the  two (or higher) dimensionality of the
first extension groups, see the discussion in Sec.~\ref{prop:exts-PTL}.  The question
about these parameters  is actually related to  question of how  the
spin-chain modules are obtained by taking particular quotients of projective
$\JTL{N}$-modules. The indecomposable
modules we encounter are particular quotients of a direct sum
of projective covers over $\JTL{N}$. To cover a module $\APrTL{j}$,
one should take the direct sum $\oplus_{k=j}^{L-1}\mathrm{Proj}_k$ of
projective covers $\mathrm{Proj}_k$ for each simple $\JTL{N}$-module
$\AIrrTL{k}{(-1)^{k+1}}$.% in~\eqref{}

\subsection{The indecomposable ``zig-zag'' modules}\label{sec:zig-zag-mod}
Before
% giving the decomposition
describing all homomorphisms between $\APrTL{j}$ and $\APrTL{k}$, with
$-L\leq j,k\leq L$,  we introduce 
%more complicated
indecomposable modules of ``zig-zag'' shape. For simplicity, we
consider only the case of even $L$ (or $N=0\;\textrm{mod}\;4$); the odd $L$ case is quite similar. For a positive odd $n$, let $k=(L-n+1)/2$. Then, taking a
quotient of the direct sum
$\MmodJTL{1}{n}{x_1,y_1}\oplus\MmodJTL{1}{n+2}{x_2,y_2}\oplus\dots\oplus\MmodJTL{1}{L-1}{x_k,y_k}$ of the $\JTL{N}$-modules introduced
in~\eqref{mod-first-quotient} by a submodule
$\AIrrTL{n+1}{(-1)^n}\oplus\AIrrTL{n+3}{(-1)^n}\oplus\dots\oplus\AIrrTL{L-2}{(-1)^n}$
we get a family of $\JTL{N}$-modules $\MJTL{n-1}{x_1,y_1,\dots,x_k,y_k}$, where $x_i,y_i\in\oC\oP^1$.
This family is thus parametrized by the
set
$\{x_i,y_i\,|\, 1\leq i \leq k\}\in\oC\oP^1\times\dots\times\oC\oP^1$ and sketched at the top of   Fig.~\ref{JTL-Mmod}, where 
we set $(\TLX_{n})\equiv\AIrrTL{n}{(-1)^{n+1}}$ as usual and
$(\TLX_{L+1})\equiv0$. 
We define similarly a family of $\JTL{N}$-modules $\NJTL{n-1}{x_0,y_0,\dots,x_k}$ sketched  at
the bottom of Fig.~\ref{JTL-Mmod}, with even $n$ and $k=(L-n)/2$. In what follows, we also use the modules
$\MJTLs{n-1}{x_1,y_1,\dots,x_k,y_k}$ and
$\NJTLs{n-1}{x_0,y_0,\dots,x_k}$ with all arrows reversed.

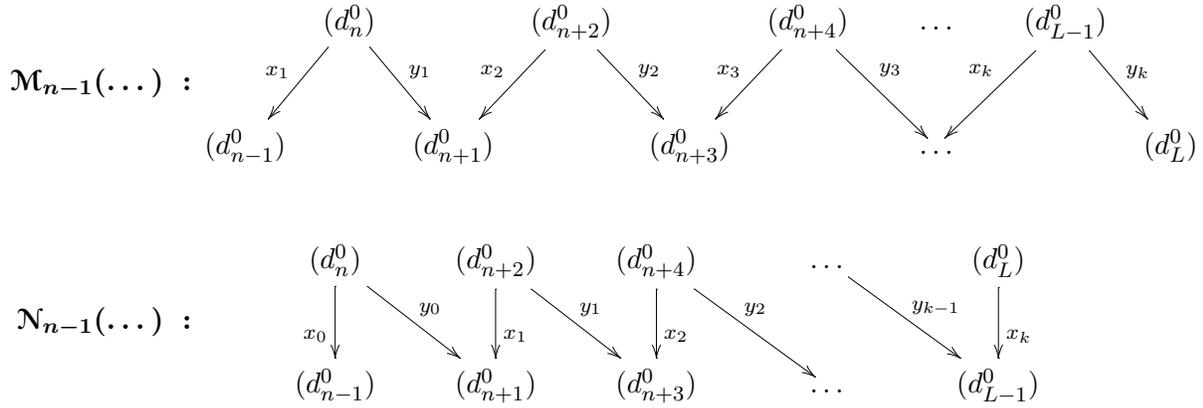
\begin{figure}\centering
\begin{align*}%\label{def:MmodJTL}
\xymatrix@R=15pt@C=3pt@W=2pt@M=1pt
{
&\mbox{}\\
&\boldsymbol{\MJTL{n-1}{\dots}\;:}
}&
\xymatrix@R=26pt@C=6pt@W=4pt@M=4pt
{
&{(\TLX_n)}\ar[dr]^(.5){y_1}\ar[dl]_(.5){x_1}&&
{(\TLX_{n+2})}\ar[dr]^(.5){y_2}\ar[dl]_(.5){x_2}&&
{(\TLX_{n+4})}\ar[dr]^(.5){y_3}\ar[dl]_(.5){x_3}&
\dots&
{(\TLX_{L-1})}\ar[dr]^(.5){y_k}\ar[dl]_(.5){x_k}&
\\
 (\TLX_{n-1})&
 &(\TLX_{n+1})&
   &(\TLX_{n+3})&
   &\mbox{}\quad\dots\quad\mbox{}&
   &(\TLX_{L})&
     }\\%\mbox{}\vskip2mm
&\mbox{}\\
\xymatrix@R=15pt@C=3pt@W=2pt@M=1pt
{
&\mbox{}\\
&\boldsymbol{\NJTL{n-1}{\dots}\;:}
}&
\xymatrix@R=26pt@C=22pt@W=4pt@M=4pt
{
&{(\TLX_n)}\ar[dr]^(.5){y_0}\ar[d]_(.6){x_0}&
{(\TLX_{n+2})}\ar[dr]^(.5){y_1}\ar[d]^(.6){x_1}&
{(\TLX_{n+4})}\ar[dr]^(.5){y_2}\ar[d]^(.6){x_2}&
\dots\ar[dr]^(.5){y_{k-1}}&
{(\TLX_{L})}\ar[d]^(.6){x_k}&
\\
& (\TLX_{n-1})&
 (\TLX_{n+1})&
  (\TLX_{n+3})&
  \mbox{}\quad\dots\quad\mbox{}&
   (\TLX_{L-1})&
     }
\end{align*}
\caption{For even $L$, the indecomposable ``zig-zag'' $\JTL{N}$-modules
  $\MJTL{n-1}{x_1,y_1,\dots,x_k,y_k}$ at the top, with odd $n$, 
  and  $\NJTL{n-1}{x_0,y_0,\dots,x_k}$ at the bottom, with even $n$,
   $k=(L-\bar{n})/2$, and $\bar{n} = n - (n\,\textrm{mod}\,2)$. Each
  single arrow stands for a doubled arrow as introduced in~\eqref{mod-first-quotient}
  and is characterized by a point on a complex projective line, $x_i,y_i\in\oC\oP^1$.}
    \label{JTL-Mmod}
    \end{figure}

The spin-chain module $\APrTL{0}$ is an extension/glueing
$\NJTLs{1}{\dots}\to\NJTL{1}{\dots}$ of two modules of the
$\mathscr{N}$-type, where appropriate parameters stand in the round
brackets. As we said before, to determine the parameters is out of the
scope of the paper and we will thus use the notation without
specifying them explicitly. We only note that a canonical
way to specify the submodule $\NJTL{1}{\dots}$ in $\APrTL{0}$ is to take the kernel
of the quantum-group generator $\F$ in $\APrTL{0}$; we remind that
(the representation~\eqref{QG-ferm-1}
% $\repQG$
 of) $\F$ belongs to the $\JTL{N}$-centralizer
$\centJTL$ on the spin-chain and therefore its kernel is a $\JTL{N}$-module. That the kernel of $\F$
is isomorphic to a $\NNJTL{1}$ module (with appropriate parameters)
easily follows from the decompositions over $\LQG$ in~\eqref{decomp-LQG} and over
$\TL{N}$ in~\eqref{decomp-TL}
restricted to $S^z=0$ and the explicit action of $\LQG$ given in
App.~\Approjmodbase.

The $\JTL{N}$-modules $\APrTL{j}$, for $j\ne0$, proposed on the right
side of Fig.~\ref{FF-JTL-mod} can be also obtained as an extension of
two modules $\NJTL{|j|}{\dots}$ and $\MJTL{|j|+1}{\dots}$, for odd
$j$, with the first one being a subquotient and the second  a
submodule of $\APrTL{j}$, and similarly for the even-$j$ case. For any
$j$, using again the decompositions~\eqref{decomp-LQG} and~\eqref{decomp-TL} restricted
to the supbspace with $S^z=\pm j$ and the $\LQG$-action from
App.~\Approjmodbase, we obtain the following short exact sequences of
$\JTL{N}$-modules
\begin{align}
%0\;\to\;\NJTL{1}{x_0,y_0,x_1,y_1,\dots,x_{(L-2)/2}}\;\to\;\APrTL{0}\;\to\;\NJTLs{1}{x'_0,y'_0,x'_1,y'_1,\dots,x'_{(L-2)/2}}\;\to\;0,\\
\mbox{}\qquad\qquad
0\;\to\;\NJTL{1}{\dots}\;\to&\;\APrTL{0}\;\to\;\NJTLs{1}{\dots}\;\to\;0,&\label{sh-ex-seq-JTL-1}\\
0\;\to\;\MJTL{|j|+1}{\dots}\;\to&\;\APrTL{j}\;\to\;\NJTL{|j|}{\dots}\;\to\;0,
\qquad  j - \,\text{odd},&\\
0\;\to\;\NJTL{|j|+1}{\dots}\;\to&\;\APrTL{j}\;\to\;\MJTL{|j|}{\dots}\;\to\;0,
\qquad  j - \,\text{even},&\label{sh-ex-seq-JTL-3}
\end{align}
where we define the submodules $\MJTL{|j|+1}{\dots}$ and
$\NJTL{|j|+1}{\dots}$ as the kernels of the
quantum-group generator $\F$ on $\APrTL{j}$, for $j>0$, and the kernels
of $\E$, for $j<0$,  for odd and even $j$, respectively. The projective-line parameters in the round brackets of all modules in~\eqref{sh-ex-seq-JTL-1}-\eqref{sh-ex-seq-JTL-3} are thus uniquely fixed.

Using the self-conjugacy ($e_j^{\dagger}=e_j$,
with $1\leq j\leq N$) of the $\JTL{N}$-representation $\repgl$
in~\eqref{rep-JTL-1}  which implies that $\APrTLs{j}\cong\APrTL{j}$, we obtain the
dual short exact sequences of $\JTL{N}$-modules
\begin{align}
\mbox{}\qquad\qquad
0\;\to\; \NJTLs{|j|}{\dots}\;\to&\;\APrTL{j}\;\to\;\MJTLs{|j|+1}{\dots}\;\to\;0,
\qquad j - \,\text{odd},&\label{sh-ex-seq-JTL-4}\\
0\;\to\; \MJTLs{|j|}{\dots}\;\to&\;\APrTL{j}\;\to\;\NJTLs{|j|+1}{\dots}\;\to\;0,
\qquad j - \,\text{even},&\label{sh-ex-seq-JTL-5}
\end{align}
where the submodules $\NJTLs{|j|}{\dots}$ and
$\MJTLs{|j|}{\dots}$ are now defined as the kernels of the
quantum-group generator $\E$ on $\APrTL{j}$, for $j>0$, and the kernels
of $\F$, for $j<0$, for odd and even $j$, respectively.  Then, parameters in the round brackets  in~\eqref{sh-ex-seq-JTL-4}-\eqref{sh-ex-seq-JTL-5} are also uniquely fixed.

We next use the zig-zag modules described above and the short exact
sequences in description of all
% intertwining operators respecting
homomorphisms between the spin-chain $\JTL{N}$-modules proposed above.

\begin{Thm}\label{Thm:homs-APrTL}
For $-L\leq j,k\leq L$, the space of homomorphisms between $\APrTL{j}$ and $\APrTL{k}$ has
the dimension
\begin{equation}
\dim \Hom(\APrTL{j},\APrTL{k})=
    \begin{cases}
      1,&\quad j - k = 1\mod 2,\\
      \half\bigl(L-\max(|j|,|k|) +
      j\,\mathrm{mod}\,2\bigr)+\delta_{j,k},&\quad j - k=0\mod 2,
    \end{cases}
\end{equation}
and the one-dimensional space in the case $j - k = 1\mod 2$ is given
by the map $f_{j,k}\in\Hom(\APrTL{j},\APrTL{k})$ with its image
\begin{equation}\label{f-jk-image}
\mathrm{im}(f_{j,k}) \cong
\begin{cases}
\NJTL{|j|}{\dots},&\quad j-\text{odd} \;\;\text{and}\;\; |j|>|k|,\\
\MJTLs{|k|}{\dots},&\quad j-\text{odd} \;\;\text{and}\;\; |j|<|k|,\\
\MJTL{|j|}{\dots},&\quad j-\text{even}\;\;\text{and}\;\; |j|>|k|,\\
\NJTLs{|k|}{\dots},&\quad j-\text{even}\;\;\text{and}\;\; |j|<|k|,\\
\end{cases} 
\end{equation}
with appropriate parameters from $\oC\oP^1$  in the
round brackets. 

In the case $j - k = 0\mod 2$, the $\Hom$-space is spanned by homomorphisms with
semisimple images.
\end{Thm}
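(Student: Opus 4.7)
\medskip
\noindent\textit{Sketch of the proof.}
The plan is to translate the $\Hom$ computation into an enumeration inside the centralizer. By the identification $\Endo_{\JTL{N}}(\Hilb_{N})\cong\centJTL$ together with the decomposition~\eqref{decomp-JTL} into pairwise non-isomorphic indecomposable $\JTL{N}$-summands (the two trivial modules at $\pm L$ being combined into the single summand $\APrTL{L}$), the endomorphism algebra of the spin chain splits as
\begin{equation*}
\centJTL \;\cong\; \bigoplus_{j,k}\,\Hom(\APrTL{j},\APrTL{k})\otimes\Hom_{\oC}(\Xodd{j},\Xodd{k}).
\end{equation*}
Since the multiplicity spaces $\Xodd{j}$ are one-dimensional for $|j|<L$ and two-dimensional for $|j|=L$, it suffices to compute the dimension of the subspace of $\centJTL$ that shifts the $S^z$-grading by the prescribed amount $k-j$, and then divide out by the trivial $\Hom_{\oC}$ factor.

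Next, I would read off such a basis from the PBW presentation of $\LQGodd$ in Def.~\ref{dfn:Uqodd}, extended by $\e^L,\f^L$ to span $\centJTL$ via Thm.~\ref{Thm:centr-JTL-main}. The weight relations~\eqref{Uqodd-dfn-4} show that $\EE{n}$ and $\FF{m}$ carry odd $S^z$-weights $\pm(2n+1)$, while products $\EE{n}\FF{m}$ carry the even weight $2(n-m)$. Consequently, if $j-k\equiv1\modd 2$ the only PBW monomials that change the weight by $j-k$ are $\FF{(j-k-1)/2}\,\proj_j(\h)$ (for $j>k$) or $\EE{(k-j-1)/2}\,\proj_j(\h)$ (for $k>j$), with $\proj_j$ the projector~\eqref{proj-def}; nontriviality on $\Hilb_{N}$ is checked from the fermionic expressions~\eqref{QG-ferm-2}--\eqref{QG-ferm-1} or equivalently from the action on $\PPodd{r}$ recalled in App.~\Approjmodbase, which yields the one-dimensional first case of the formula. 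For $j-k\equiv0\modd 2$, the admissible monomials are $\EE{n}\FF{m}\proj_j(\h)$ with $n-m=(k-j)/2$, plus $\proj_j(\h)$ itself when $j=k$ (contributing the $\delta_{j,k}$). Counting the pairs $(n,m)\geq 0$ for which both factors act nontrivially across the weight-$j$ and weight-$k$ subspaces of some $\PPodd{r}$ appearing in~\eqref{eq:Hilb-decomp-Uqodd} yields precisely $\half\bigl(L-\max(|j|,|k|)+j\modd 2\bigr)$ independent contributions, matching the claim.

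The third step is to pin down the images when $j-k$ is odd, using the short exact sequences~\eqref{sh-ex-seq-JTL-1}--\eqref{sh-ex-seq-JTL-5}. For instance, when $j>0$ is odd and $|j|>|k|$, the kernel of $\FF{(j-k-1)/2}\proj_j(\h)$ acting on $\APrTL{j}$ is precisely the submodule $\MJTL{|j|+1}{\dots}$ appearing in~\eqref{sh-ex-seq-JTL-1}--\eqref{sh-ex-seq-JTL-3}, so the operator factors through the quotient $\APrTL{j}/\MJTL{|j|+1}{\dots}\cong\NJTL{|j|}{\dots}$ and embeds this quotient into $\APrTL{k}$; the three remaining sub-cases follow by symmetry, using the dual sequences~\eqref{sh-ex-seq-JTL-4}--\eqref{sh-ex-seq-JTL-5} when $|j|<|k|$. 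For $j-k$ even and $(n,m)\neq(0,0)$, the monomial $\EE{n}\FF{m}\proj_j(\h)$ annihilates every top and bottom subquotient of each $\PPodd{r}$ and therefore factors through the semisimple middle level (cf.\ Fig.~\ref{Uq_odd_Tn}), so its image is semisimple, as required.

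The hardest step I expect is the converse count in the even case: proving that the various $\EE{n}\FF{m}\proj_j(\h)$ with $n-m=(k-j)/2$ are in fact linearly independent as operators on $\Hilb_{N}$, so that the upper bound derived above is saturated. I would handle this by restricting to a single summand $\PPodd{r}$ of maximal rank in~\eqref{eq:Hilb-decomp-Uqodd} that contains the weight $j$: on $\PPodd{r}$ the explicit formulas of App.~\Approjmodbase\ make the PBW action transparent, and distinct pairs $(n,m)$ produce linearly independent matrix entries on the displayed weight vectors. Once this faithfulness input is in place, the matching lower bound follows by the decomposition of the first paragraph, and the odd-case identities are checked routinely on the same summand.
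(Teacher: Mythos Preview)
Your overall framework---translating $\Hom(\APrTL{j},\APrTL{k})$ into the graded piece $\proj_k\,\centJTL\,\proj_j$ via Thm.~\ref{Thm:centr-JTL-main}---is sound and is indeed the identification the paper makes \emph{after} proving the theorem (Sec.~5.5). The paper's own argument goes the other way around: it takes the proposed subquotient structure of $\APrTL{j}$ in Fig.~\ref{FF-JTL-mod} as given, reads off the even-parity homomorphisms directly from that diagram, constructs the odd-parity ones via the exact complexes~\eqref{ch-complex-1}--\eqref{ch-complex-2}, and proves uniqueness in the odd case by a filtration argument on the zig-zag modules. Your route bypasses the proposed structure, which is attractive, but it has a real gap in the even case.

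The gap is in your count and in what you call the ``hardest step.'' You claim that counting pairs $(n,m)$ for which $\EE{n}\FF{m}\proj_j$ acts nontrivially gives the formula, and that what remains is to prove these operators are linearly independent. Neither is correct. On each summand $\PPodd{r}$ one has $\e^{n}\E\,\f^{m}\F=\e^{n}\f^{m}\,\E\F$ (because $[\E,\f^{m}]\F=0$ at $p=2$), and $\E\F$ is a rank-one map sending the top weight-$j$ vector to the bottom weight-$j$ vector; thus all the $\EE{n}\FF{m}\proj_j$ with fixed $n-m$ land in the same one-dimensional target in each $\PPodd{r}$ and differ only by scalars. They are therefore linearly \emph{dependent} in general, and the number of nonzero pairs strictly overcounts the span. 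Concretely, for $L=4$, $j=0$, $k=2$, both $(n,m)=(1,0)$ and $(2,1)$ act nontrivially on $\PPodd{3}$ but give proportional maps, so the span is one-dimensional---matching the formula, but not your count of two. What actually controls the dimension is the number of summands $\PPodd{r}$ with $r\equiv j+1\pmod 2$ and $\max(|j|,|k|)<r\leq L$; to finish your route you would need to show that the coefficient matrix $(c_{r,n,m})$ has full row rank over those $r$'s, which is a different computation from the one you outline. (A minor slip in the same paragraph: $\EE{n}\FF{m}$ does not annihilate the top of $\PPodd{r}$---it sends top to bottom; the image is semisimple because it lies in the socle, not because the top is killed.)
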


We give only an idea of the proof. The case  $j - k$ is even is obvious and follows
from the subquotient structure of $\APrTL{j}$ given in
Fig.~\ref{FF-JTL-mod}: a basis in the space
$\Hom(\APrTL{j},\APrTL{k})$ can be chosen as the homomorphisms having
the images isomorphic to $\AIrrTL{j}{(-1)^{j+1}}$.

The case  $j - k$  odd can be analyzed by taking a concatenation of
 the short exact
 sequences~\eqref{sh-ex-seq-JTL-1}-\eqref{sh-ex-seq-JTL-5} with the
mappings $\F$ and $\E$. The result of such concatenation are
 two cochain complexes with the differentials $\F$ and $\E$
(we recall that $\F^2=\E^2=0$)
\begin{gather}
0\to \APrTL{L} \xrightarrow{\;\F\;} \APrTL{L-1} \xrightarrow{\;\F\;}
\dots \xrightarrow{\;\F\;}
\APrTL{j+1} \xrightarrow{\;\F\;} \APrTL{j} \xrightarrow{\;\F\;}
\APrTL{j-1} \xrightarrow{\;\F\;} \dots \xrightarrow{\;\F\;} \APrTL{-L} \to 0,\label{ch-complex-1}\\
0\to \APrTL{-L} \xrightarrow{\;\E\;} \APrTL{-L+1} \xrightarrow{\;\E\;}
\dots \xrightarrow{\;\E\;}
\APrTL{j-1} \xrightarrow{\;\E\;} \APrTL{j} \xrightarrow{\;\E\;}
\APrTL{j+1} \xrightarrow{\;\E\;} \dots \xrightarrow{\;\E\;} \APrTL{L} \to 0,\label{ch-complex-2}
\end{gather}
which have trivial cohomologies, {\it i.e.}, they are long exact
sequences. The images (and therefore the kernels) of these
differentials are the zig-zag $\JTL{N}$-modules described
in~\eqref{sh-ex-seq-JTL-1}-\eqref{sh-ex-seq-JTL-5}. This proves
existence of the homomorphisms $f_{j,j\pm1}$ with the
properties~\eqref{f-jk-image} in the case $j - k=\pm1$. Existence for
all other cases is proven by taking into account the commutation of
$\JTL{N}$ with operators $\FF n$ and $\EE m$ from the representation
$\repQG$ of $\LQGodd$. To compute their images, we first recall 
 the decompositions~\eqref{decomp-LQG} and~\eqref{decomp-TL}  over $\LQG$ and
 $\TL{N}$, respectively, restricted to the subspaces with
$\repQG(\h)=j/2$ and $\repQG(\h)=k/2$. These subspaces are mapped to each other by
the generators $\FF{(j-k-1)/2}$ and $\EE{(j-k-1)/2}$. Then, the isomorphisms~\eqref{f-jk-image}
are obtained using the explicit $\LQG$-action on the direct summands (given in App.~\Approjmodbase)
and the homomorphism of algebras from Rem.~\ref{rem:inj-hom-LQGodd}. For example,
the image of $\FF{(j-k-1)/2}$ in the subspace with $\repQG(\h)=k/2$ is
isomorphic to the $\JTL{N}$-module
 $\NJTL{|j|}{\dots}$, when $j>k$ and $j$ is odd, {\it etc.}

Recall then that each module $\APrTL{j}$ is a glueing of two zig-zag modules~\eqref{sh-ex-seq-JTL-1}--\eqref{sh-ex-seq-JTL-5}, {\it e.g.}, for $j$ odd it has $\NJTL{|j|}{\dots}$ as its subquotient and $\MJTL{|j|+1}{\dots}$ as its submodule. To prove that there are no other homomorphisms (up to an overall
 rescaling) between $\APrTL{j}$ and $\APrTL{k}$, with $j-k$  an odd
 number, it is sufficient to consider filtrations of the zig-zag
 submodules/subquotients in $\APrTL{j}$ and $\APrTL{k}$ by their (smaller) zig-zag submodules/subquotients, {\it i.e.}, by those with higher sub-index, see
 Fig.~\ref{JTL-Mmod} where the smaller zig-zag submodules are easily
 identified. For example, we have a filtration $\dots\subset\MJTL{j+1}{\dots}\subset\MJTL{j-1}{\dots}$.  Any homomorphism
 should obviously respect the filtrations. Then, using the  subquotient structure for
 $\APrTL{j}$ proposed in Fig.~\ref{FF-JTL-mod} and assuming existence of a homomorphism with a kernel non-isomorphic to the kernel of $f_{j,k}$ constructed just above we readily see that such a homomorphism does not respect the filtrations, thus a contradiction. Care should be taken for a pair of arrows connecting isomorphic pair of subquotients --
 these arrows correspond to
 linearly independent elements from the first extension groups in
 Lem.~\ref{thm:exts-PTL}. 

%\medskip
\subsection{Intertwiners and the PBW basis}
We now identify all the homomorphisms from the space
$\EndJTL(\Hilb_{N})=\bigoplus_{j,k=-L}^L\Hom(\APrTL{j},\APrTL{k})$ with the PBW basis
elements in
$\LQGodd$ that are represented faithfully on the
spin-chain. For $j-k$ an odd number and for each
$f_{j,k}\in\Hom(\APrTL{j},\APrTL{k})$ described in
Thm.~\ref{Thm:homs-APrTL}, we have  the equalities (by the construction of $f_{j,k}$ in the proof of Thm.~\ref{Thm:homs-APrTL})
\begin{gather}
f_{j,k} = \repQG\bigl(\FF{(j-k-1)/2}\,\proj_j(\h)\bigr), \qquad \text{for} \; j>k,\label{homs-Uqoddgen-1}\\
f_{j,k} =  \repQG\bigl(\EE{(k-j-1)/2}\,\proj_j(\h)\bigr), \qquad \text{for} \; j<k,\label{homs-Uqoddgen-2}
\end{gather}
where projectors $\proj_j$ onto $\APrTL{j}$ are polynomials in $\h$
introduced in~\eqref{proj-def}.  The homomorphisms $f_{j,k}$ in the
case $j-k$  an even number are identified with $\repQG(\proj_j(\h))$
if $j=k$ and otherwise with composites of the generators $\FF n$ and
$\EE m$, times the projector $\proj_j(\h)$. Proceeding then by counting basis elements (a simple
calculation) in the image of $\LQGodd$ on the spin-chain
one would obtain that the operators constructed exhaust the PBW basis in the image of $\LQGodd$.

We thus have shown  that a sufficient
condition on the module structure in Fig.~\ref{FF-JTL-mod} to have the centralizer
$\centJTL$ holds. To show that the subquotient structure indeed corresponds
to the $\JTL{N}$ action, we do a further and final analysis.

\subsection{Final analysis}
 To finish
our exposition of the proof that  the proposed subquotient structure for
$\APrTL{j}$ is correct, we describe next the subquotient structure for $\APrTL{j}$
considered as a module over the centralizer of $\centJTL$ which is
isomorphic by the definition to the algebra $\EndcJTL(\Hilb_N)$. The
centralizer obviously contains $\repgl\bigl(\JTL{N}\bigr)$ as a
subalgebra. The opposite inclusion is not true, as we show now.

 The subquotient
structure can be obtained using intertwining operators respecting
$\centJTL$ action. These are described in Thm.~\ref{lem:endo-PPodd}. The only
difference from the diagrams for $\JTL{N}$ in Fig.~\ref{FF-JTL-mod} is
that there are additional (``long') arrows mapping a top
subquotient~$\IrrTL{j}$ (having only outgoing arrows) to~$\IrrTL{k}$
in the socle (having only ingoing arrows) whenver $|j-k|\geq4$ is an
even number. We note that these long arrows are not composites of any short
arrows mapping from the top to the middle level, and from the middle
to the socle; this distinguishing property appears only at
$N\geq10$. It turns out that $\JTL{N}$ generators correspond only to
these short arrows and not to the long ones, and therefore there is no
 element from $\JTL{N}$ represented by a long arrow. This can be
shown using a direct calculation with  fermionic expressions for
$e_j$ and $u^2$ (see~\eqrefej and~\eqrefuu in our first paper~\cite{GRS1}) in a basis
of root vectors of the Hamiltonian $H$ from~\eqref{hamil-def}. 
Indeed, the expression for $e_j$ is a bilinear combination of $2(N-1)$ generators
of a Clifford algebra. Half of them ($N-2$ creation modes
$\chi^{\dagger}_{p>0}$ and $\eta_{p>0}$ in notations of~\cite{GRS1}, Sec.~\secrefcr) generates
the bottom level -- the intersection of the kernels of $\F$ and $\E$
in $\Hilb_N$ -- from the vacuum state $\vac$, and also the top level
from one cyclic vector $\lvac$ which is involved with $\vac$ into a
Jordan cell for $H$. Among the Clifford algebra generators, there are
two -- zero modes  $\eta_0$ and $\chi^{\dagger}_0$ -- proportional to
$\F$ and $\E\K^{-1}$, respectively. These are the only
generators mapping vectors from the top level to the middle level, and
from the middle to the bottom level. We see from the
expression~\eqrefuuu in~\cite{GRS1} that  $u^2$, is
also a sum of monomials in the Clifford algebra, none of these monomials  
containing the product of the two zero modes. The product maps the top
to the bottom and a monomial containing it could thus correspond to a
long arrow. The $e_j$'s have such a monomial but it is quadratic,
{\it i.e.}, proportional to the product of the zero modes, and thus commutes
with the $\JTL{N}$ action
and  maps a top subquotient $\IrrTL{j}$ only to the bottom $\IrrTL{j}$. The
fermionic expression for $e_j$ has also other terms/monomials containing only
one of the zero modes and they thus map only by one level down.
%% (from the top to the middle or the middle to the bottom)
We conclude that the action of $e_j$, with $1\leq j\leq N$, and $u^2$ cannot correspond to those long arrows connecting the top and the bottom
and which are not composites of  short arrows. This  proves
that there are no such arrows in diagrams for the subquotient
structure of $\JTL{N}$-modules~$\APrTL{j}$. We can thus conclude that  
the algebra $\repgl\bigl(\JTL{N}\bigr)$ does not contain the double centralizer $\EndcJTL(\Hilb_N)$.

Finally, we observe that removing at least one red arrow from the
diagrams for $\APrTL{0}$ in Fig.~\ref{FF-JTL-mod-zero} or for
$\APrTL{j}$ in Fig.~\ref{FF-JTL-mod} results in an enlarged
endomorphism algebra (black arrows should be present due to the action
of the subalgebra $\TL{N}$.) Indeed, removing a red arrow mapping from
a top subquotient $\IrrTL{j}$ to $\IrrTL{j+1}$ in the middle we should
remove also the red arrow mapping from the same subquotient
$\IrrTL{j+1}$ to $\IrrTL{j}$ in the bottom because of the
self-conjugacy ($e_j^{\dagger}=e_j$) of the $\JTL{N}$-representation
$\repgl$ in~\eqref{rep-JTL-1} which implies that
$\APrTLs{j}\cong\APrTL{j}$. Then, we can repeat the same analysis as
in Sec.~\ref{subsec:decom-N8} for $N=8$ and get an additional
intertwining operator not from $\centJTL$ but this  contradicts to
Thm.~\ref{Thm:centr-JTL-main}. Removing a red arrow connecting
$\IrrTL{j}$ and $\IrrTL{j-1}$ results eventually in a contradiction to
a statement related to Lem.~\ref{thm:exts-PTL} in a way very similar to what was stated also
in the example for $N=8$ in Sec.~\ref{subsec:decom-N8}. We do not give
a proper generalization of the results for $N=8$ because of their
simplicity.  This analysis finishes our exposition of the proof for  the  subquotient structure of
$\APrTL{j}$ modules over $\JTL{N}$ proposed in Fig.~\ref{FF-JTL-mod}.

\subsection{Comparison with the standard modules}\label{subsec:St-vs-SpCh}

{\footnotesize
\newcommand{\myar}{\ar@[|(2.5)]@[]}
 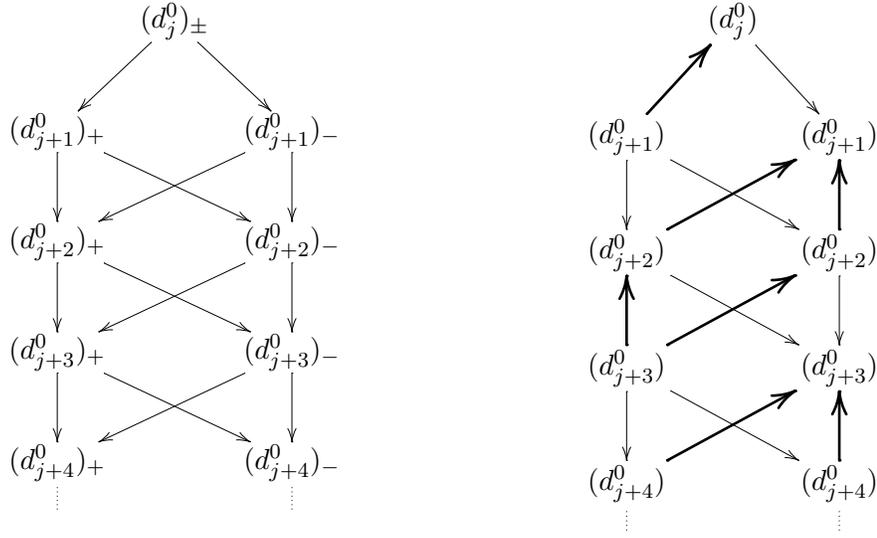
\begin{figure}\centering
 \begin{equation*}
   \xymatrix@R=26pt@C=11pt@W=1pt@M=1pt
   {{}&(\TLX_j)_{\pm}\ar[dr]\ar[dl]&&\\
     {(\TLX_{j+1})_{+}}\ar[d]\ar[drr]
     &&{(\TLX_{j+1})_{-}}\ar[d]\ar[dll]\\
     {(\TLX_{j+2})_{+}}\ar[d]\ar[drr]
     &&{(\TLX_{j+2})_{-}}\ar[d]\ar[dll]\\
     {(\TLX_{j+3})_{+}}\ar[d]\ar[drr]
     &&{(\TLX_{j+3})_{-}}\ar[d]\ar[dll]\\
     {(\TLX_{j+4})_{+}} \ar@{.}[]-<0pt,18pt>
     &&{(\TLX_{j+4})_{-}}
     \ar@{.}[]-<0pt,18pt>}
 \qquad\qquad\qquad\quad
  \xymatrix@R=26pt@C=12pt@W=2pt@M=2pt
  {{}&(\TLX_j)\ar[dr]&&\\
    {(\TLX_{j+1})}\myar[ur]\ar[d]\ar[drr]
    &&{(\TLX_{j+1})}\\
    {(\TLX_{j+2})}\myar[urr]\ar[drr]
    &&{(\TLX_{j+2})}\myar[u]\ar[d]\\
    {(\TLX_{j+3})}\myar[u]\ar[d]\myar[urr]\ar[drr]
    &&{(\TLX_{j+3})}\\
    {(\TLX_{j+4})}\myar[urr] \ar@{.}[]-<0pt,18pt>
    &&{(\TLX_{j+4})}\myar[u]
    \ar@{.}[]-<0pt,18pt>}
 \end{equation*}
      \caption{The structure of the spin chain modules $\APrTL{j}$ at $\q=i$ (the
        right one). The thick   arrows  have been flipped with
        respect to the structure of the standard modules on the left side.}
    \label{GLfigflip1}
    \end{figure}
}

We finally give  a qualitative  characterization of the spin-chain modules
$\APrTL{j}$ in the context of the standard modules in
Fig.~\ref{GLfig2} discussed in Sec.~\ref{sec:stand-q-i}. The
subquotient structure of the $\JTL{N}$-modules in the spin chain
is obtained by flipping half the arrows in the standard modules of $\ATL{N}$ and ignoring the subscript $\pm$ (distinguishing only non-isomorphic simple $\ATL{N}$-subquotients but not the ones over $\JTL{N}$),
as illustrated on Fig.~\ref{GLfigflip1}. This is similar
to what happens when comparing   Verma and Feigin--Fuchs modules over a
Virasoro algebra.
Note that we do not use here  the standard modules for $\JTL{N}$
which turn out to have no arrows inside the tower on
Fig.~\ref{GLfig2}; we believe this latter feature is a peculiarity of
the case $\q=i$.

\section{Bimodule structure in the closed $\gl(1|1)$ spin-chains}\label{ind-chain-bimod-sec}
In this short section, we find the subquotient structure of  the bimodules $\Hilb_{2L}$  over the pair
%$(\JTL{N},\centJTL)$ 
of the two commuting algebras centralizing each other both in the periodic and antiperiodic $\gl(1|1)$ spin-chains.

\subsection{Bimodule over $\JTL{N}$ and $\centJTL$}\label{ind-chain-bimod-subsec}
 We use the spin-chain decomposition~\eqref{decomp-JTL} over $\JTL{N}$ described in Secs.~\ref{sec:TL-decomp} and~\ref{sec:JTL-decomp-full} and the intertwining operators from Thm.~\ref{Thm:homs-APrTL} to study the structure of the bimodule $\Hilb_N$ over the two  algebras $\JTL{N}$ and $\centJTL$.

 \begin{figure}\centering
\mbox{}\bigskip\\
\mbox{}\medskip
  \def\svgwidth{470pt}
        \includegraphics[scale=1]{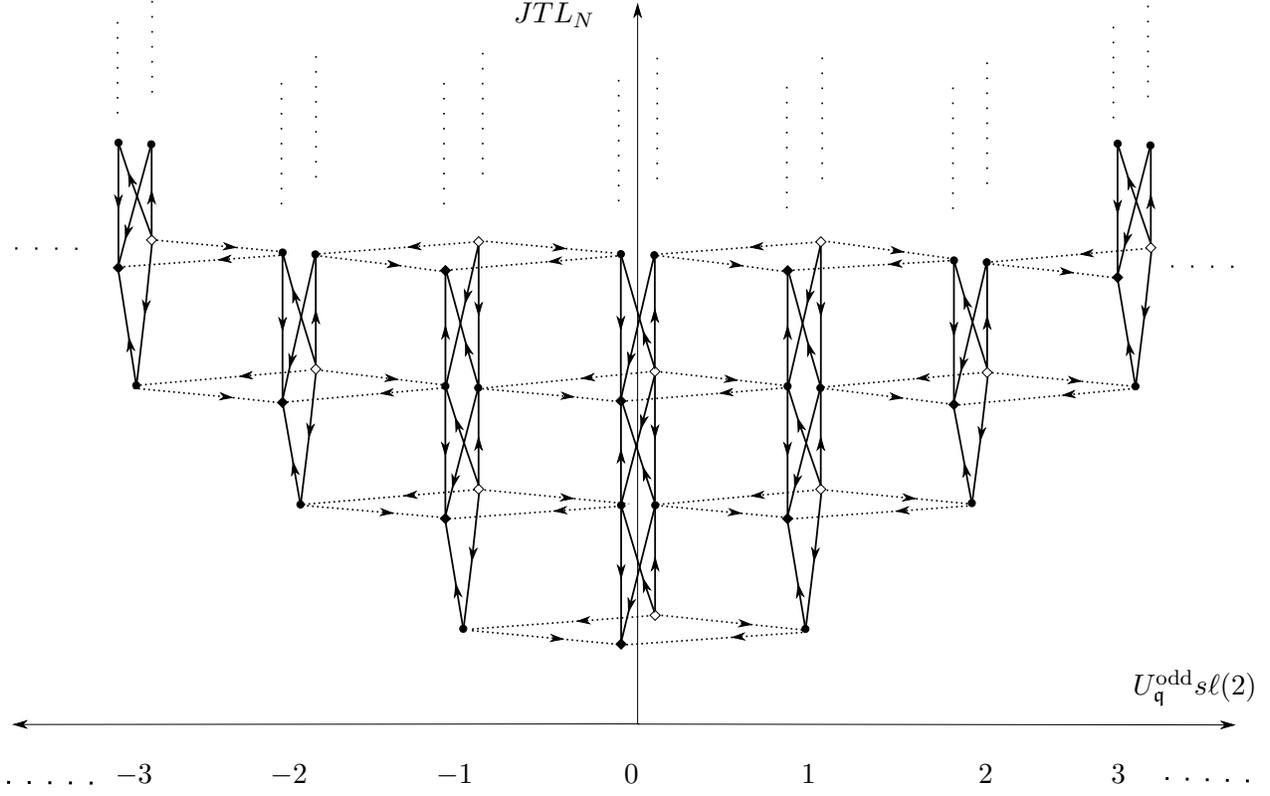}
      \caption{Bimodule over the pair $(\JTL{N},\LQGodd)$ of commuting
      algebras.  The action of $\JTL{N}$ is depicted by vertical
      arrows while the action of $\LQGodd$ is shown by dotted
      horizontal lines. Each label $j$ in the horizontal axis corresponds to the sector for $S^z$ and the label runs from $-L$ on the left to $j=L$ on 	
      the right. Each vertical tower above a label $j$ is the diagram for $\APrTL{j}$.
      The first horizontal layer at the bottom contains four
      nodes $(\TLX_1)$ and dotted arrows mixing them compose the
      $\LQGodd$-module $\PPodd{1}$. The second layer contains eight
      nodes $(\TLX_2)$ and the dotted arrows depict the action in the
      indecomposable module $\PPodd{2}$ presented on
      Fig.~\ref{Uq_odd_T-N8} in the front of $(\TLX_2)$, {\it etc.} We
      suppress long-range arrows representing action of the generators
      $\FF{>0}$ and $\EE{>0}$ in order to simplify diagrams. For
      example, the second layer of the bimodule contains in addition
      four long arrows going from the node $\diamond$ at $j=\mp1$ to the
      node $\bullet$ at $j=\pm2$, and from the node $\bullet$ at
      $j=\pm2$ to the node {\scriptsize$\vardiamond$} at $j=\mp1$.}
    \label{JTL-Uqodd-bimod}
    \end{figure}

One way to describe the bimodule $\Hilb_{2L}$ is to consider
the cochain complexes~\eqref{ch-complex-1} and~\eqref{ch-complex-2}
with the differentials $\F$ and $\E$.  The images (and the kernels) of
these differentials and of the generators $\FF n$ and $\EE m$ are the
zig-zag $\JTL{N}$-modules described in Sec.~\ref{sec:zig-zag-mod} and Thm.~\ref{Thm:homs-APrTL},
with the use of the
identifications~\eqref{homs-Uqoddgen-1} and~\eqref{homs-Uqoddgen-2}. The
centralizer $\centJTL$ then acts on each of these complexes
in a ``long-range" way
mapping terms with $S^z=j$ to ones with $S^z=k$ and with the only
condition that $|j-k|=1\modd2$.

We finally give 
%in Fig.~\ref{ind-chain-bimod-fig} 
the  diagram describing the subquotient structure of the bimodule $\Hilb_{2L}$ over the pair
%cenralizers 
$(\JTL{N},\centJTL)$.  The two commuting actions
are presented in Fig.~\ref{JTL-Uqodd-bimod}
 where we show a direct sum
of the spin-chain modules $\APrTL{j}$ over $\JTL{N}$. The direct sum
is depicted as a (horizontal) sequence of diagrams for $\APrTL{j}$
from $j=-L$ on the left to $j=L$ on the right. Each node in the
diagram is a simple subquotient over the product
$\JTL{N}\boxtimes\LQGodd$. The action of $\JTL{N}$ is depicted by
vertical arrows while the action of $\LQGodd$ is shown by dotted
horizontal lines connecting different $\JTL{N}$-modules.  We  note that the $\JTL{N}$-modules
$\APrTL{j}$ in Fig.~\ref{JTL-Uqodd-bimod} are drawn in opposite
direction ``from bottom to top" comparing to diagrams in
Fig.~\ref{FF-JTL-mod}.

In the diagram, the first (horizontal) layer at the bottom contains four
nodes, which are simple $\JTL{N}$-modules $(\TLX_1)$, and dotted
arrows mixing them describe the indecomposable $\LQGodd$-module
$\PPodd{1}$. The second layer contains eight nodes of type $(\TLX_2)$
and the dotted arrows contribute to the indecomposable module
$\PPodd{2}$ presented on Fig.~\ref{Uq_odd_T-N8} in the front of
$(\TLX_2)$, {\it etc.} We emphasize that we do not draw long-range arrows
representing action of the generators $\FF{>0}$ and $\EE{>0}$ in
modules $\PPodd{n>1}$ in order to simplify diagrams but the arrows can
be easily recovered using either the homomorphisms of
$\JTL{N}$-modules described above in Thm.~\ref{Thm:homs-APrTL} or 
the subquotient structure of $\PPodd{n}$ described
in~\ref{subsec:PPodd-def} -- for example, the second layer of the
bimodule contains in addition four long arrows going from the node
$\diamond$ at $j=\mp1$ to the node $\bullet$ at $j=\pm2$, and from the
node $\bullet$ at $j=\pm2$ to the node {\scriptsize$\vardiamond$} at $j=\mp1$. With this
comment about arrows in mind, the reader can compare he complexity of
this bimodule with the open-case bimodule in
Fig.~\ref{openbimodule-fin}.

\subsection{The bimodule in the twisted case}

We recall~\cite{GRS1} that the twisted or antiperiodic version
of the $\gl(1|1)$ spin chain is obtained by setting
$f^{(\dagger)}_{2L+1}=-f^{(\dagger)}_1$ (compare with the conditions~\eqref{f_j-rel} for the periodic case). 
We then obtain from~\eqref{rep-JTL-1} a different expression for
 $e_{2L}$,
\begin{equation*}
e_{2L}=(f_{2L}-f_{1})(f_{2L}^\dagger-f_{1}^\dagger),
\end{equation*}
 which does not provide  a
representation of the $\JTL{N}$
algebra any longer. It still   does provide of the  even  affine
  Temperley--Lieb algebra 
  $\oATL{N}$ introduced in Sec.~\ref{sec:TL-alg-def}, see also~\eqref{diag-alg}.
We recall that in the diagrammatic language the $\JTL{N}$
algebra  corresponds to a quotient of $\oATL{N}$, where a
non-contractible loop on a cylinder is replaced by the numerical
factor $m=0$, while the antiperiodic boundary conditions now require
 a quotient of $\oATL{N}$, where non-contractible loops are given the weight~$2$ (the
 dimension of the fundamental or its dual, instead of the
 superdimension).
We also have the relation $u^{N}=(-1)^j$ which is satisfied in the sector
with $2j$ through-lines and which
means that we impose the condition $z^{2j}=(-1)^j$ on the
$z^2$-parameter in this sector.
 We will call the
corresponding finite-dimensional algebra $\JTL{N}^{tw}$. This algebra is related with the twisted or
deformed version of the Jones algebra studied in~\cite{Green}.

We next recall the result~\cite{GRS1} about the
centralizer of the representation of $\JTL{2L}^{tw}$.
The choice of an ``even" subalgebra in $U_{\q}s\ell(2)$ at generic $\q$,
{\it i.e.}, generated by the renormalized \textit{even}-powers of the $\E$
and $\F$ gives in the limit $\q\to i$ the centralizer for (the representation of) $\JTL{N}^{tw}$
on the antiperiodic spin-chain --- the usual $U(s\ell(2))$
generated by the $\e$ and $\f$.
\begin{thm}\cite{GRS1}
On the alternating antiperiodic $\gl(1|1)$ spin chain,
 the centralizer of the image
of the representation of the algebra $\JTL{N}^{tw}$ is the associative algebra $\repQG(U s\ell(2))$.
\end{thm}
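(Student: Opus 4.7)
The plan is to establish the statement by a standard double-centralizer argument, which becomes available once semisimplicity of the antiperiodic representation is proved.

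First, I would verify the inclusion $\repQG\bigl(U\SU(2)\bigr)\subseteq\cent\bigl(\JTL{N}^{tw}\bigr)$. This amounts to checking that the modified boundary generator $e_{2L}=(f_{2L}-f_1)(f_{2L}^{\dagger}-f_1^{\dagger})$ commutes with $\repQG(\e)$ and $\repQG(\f)$. With the antiperiodic identification $f_{2L+1}=-f_1$, the ``global'' bosonic generators $\e$, $\f$ (quadratic in the $f_j$ and $f_j^{\dagger}$) pick up sign changes in their boundary-crossing monomials that precisely match the sign flip in $e_{2L}$; this follows from a direct fermionic computation parallel to the periodic analysis of Sec.~\ref{sec:super-spin-ch-def}. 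Commutation with the remaining $e_j$, $1\leq j\leq 2L-1$, is inherited from the periodic case since those densities are unchanged.

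Next, I would prove that $\Hilb_{2L}$ is semisimple as a $\JTL{N}^{tw}$-module. This is the main technical step. Using the framework of Sec.~\ref{sec:JTL-stand}, the twisted quotient relation $u^{N}=(-1)^{j}$ in the sector with $2j$ through lines forces $z^{2j}=(-1)^j$, so that the pseudomomenta $e^{2iK}$ appearing in the spin chain are shifted away from the degeneracy condition~\eqref{deg-st-mod} at $\q=i$; consequently the standard $\JTL{N}^{tw}$-modules $\AStTL{j}{(-1)^{j}}$ occurring in the decomposition remain irreducible. An alternative (and more concrete) route is to use the free-fermion diagonalization of the antiperiodic XX-type Hamiltonian, as in~\cite{GRS1}: the antiperiodic boundary conditions eliminate the problematic fermionic zero modes responsible for the Jordan cells in the periodic case, so $H$ is diagonalizable and, by a standard argument, this lifts to semisimplicity of the whole algebra action.

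Given semisimplicity, the double-centralizer theorem applies: one obtains a decomposition $\Hilb_{2L}\cong\bigoplus_{j} S_{j}\boxtimes\pi_{j}$ with $S_{j}$ the simple $\SU(2)$-modules appearing in $\repQG\bigl(U\SU(2)\bigr)$ and $\pi_{j}$ the corresponding simple $\JTL{N}^{tw}$-modules, and then $\cent(\JTL{N}^{tw})=\bigoplus_{j}\Endo(S_{j})$. A dimension count using the fermionic realisation of $\e$ and $\f$ on each weight space then identifies this centralizer with $\repQG\bigl(U\SU(2)\bigr)$. The main obstacle will be the rigorous proof of semisimplicity at $\q=i$: although the twist shifts the pseudomomentum away from the bad values on the abstract standard modules, the representation $\repgl$ is non-faithful, so one must verify that no additional degeneracies are introduced in its image -- concretely, that the induced Gram forms remain non-degenerate, or equivalently that no rank-$2$ Jordan blocks occur in the antiperiodic spectrum.
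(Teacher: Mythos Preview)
The paper does not actually prove this theorem here; it is quoted from~\cite{GRS1}. The only hint given about the method is the sentence preceding the statement: the ``even'' subalgebra of $U_{\q}s\ell(2)$ (generated by renormalized even powers of $\E$ and $\F$) centralizes the twisted Temperley--Lieb generators already at generic~$\q$, and one then takes the limit $\q\to i$. This is a deformation argument, quite different from the semisimplicity/double-centralizer route you outline. Note also that in the paper's logic the order is reversed from yours: the centralizer result is imported from~\cite{GRS1} first, and the semisimple bimodule decomposition $\Hilb_N=\bigoplus_{j}\AIrrTL{j}{(-1)^j}\boxtimes\XX_{1,j+1}$ is \emph{derived from it} afterwards, not used as an input.

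Your first proposed route to semisimplicity contains a genuine error. You claim that the twist ``shifts the pseudomomenta away from the degeneracy condition~\eqref{deg-st-mod}'' so that the standard modules $\AStTL{j}{(-1)^{j}}$ are irreducible. They are not. At $\q=i$ the condition~\eqref{deg-st-mod} reads $(-1)^{j}=(-1)^{j+k}$, which is solved by every even $k\geq2$; the paper records the resulting chain-type subquotient structure explicitly in~\eqref{StATL-anti}. What is true is that the $k=1$ degeneracy disappears, so the chains jump in steps of~$2$ rather than~$1$, but the standard modules remain reducible. The semisimplicity of $\Hilb_N$ over $\JTL{N}^{tw}$ therefore does \emph{not} follow from irreducibility of the relevant standard modules, and you cannot invoke the double-centralizer theorem on this basis.

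Your second route --- absence of fermionic zero modes in the antiperiodic sector, hence diagonalizability of $H$ --- identifies the correct mechanism, and is indeed how~\cite{GRS1} analyses the spectrum. But ``$H$ diagonalizable $\Rightarrow$ the full algebra acts semisimply'' is not a standard implication and would need an argument specific to the fermionic realization (e.g.\ that the whole image of $\JTL{N}^{tw}$ lies in the Clifford algebra with no zero-mode bilinears). If you pursue this line you must make that step explicit; alternatively, follow the paper's hint and establish the centralizer directly at generic~$\q$ before specializing.
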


We then describe the decomposition of the spin-chain over  $U s\ell(2)$ and then use it to obtain the  decomposition over $\JTL{N}^{tw}$.
Recall first the decomposition~\eqref{decomp-LQG} of $\Hilb_N$ over $\LQG$
where each indecomposable direct summand $\PP_{1,j}$ given in~\eqref{schem-proj}
is decomposed over the $U s\ell(2)$ subalgebra onto the direct sum
$2\XX_{1,j}\oplus \XX_{1,j-1}\oplus \XX_{1,j+1}$. We recall that each 
module $\XX_{1,j}$ has a trivial action of
 $\E$, $\F$, and $\K$ while it is the $j$-dimensional module under $U s\ell(2)$.
We  can thus easily write a decomposition with respect to the action of
the renormalized powers $\e$ and $\f$:
\begin{equation}
\Hilb_N|_{\rule{0pt}{7.5pt}%
U s\ell(2)} = \bigoplus_{j=1}^{L+1} \bigl(2d^0_j + d^0_{j-1} +
d^0_{j+1}\bigr) \XX_{1,j}
\end{equation}
where we set $d^0_0=0$ and $d^0_j=0$ for all $j>L$. The multiplicities
in front of $\XX_{1,j+1}$, with $0\leq j\leq L$, give dimensions of simple modules over
$\JTL{N}^{tw}$ which we denote as $\AIrrTL{j}{(-1)^j}$ (half of
them, those corresponding to even $j$,
are also modules over $\JTL{N}$ and we use the same notation which
should not be a source of confusion). Therefore, we obtain the structure of the bimodule
, which is semisimple:
\begin{equation}
\Hilb_N|_{\rule{0pt}{7.5pt}%
\JTL{N}^{tw}\boxtimes U s\ell(2)} = \bigoplus_{j=0}^{L}\AIrrTL{j}{(-1)^j} \boxtimes \XX_{1,j+1}
\end{equation}
The dimension of $\AIrrTL{j}{(-1)^j}$ is $2d^0_{j+1} + d^0_{j} +
d^0_{j+2}$ and is  computed using the binomial expression
$d^0_j=\sum_{i=j}^L(-1)^{j-i}\left(\binom{N}{L+i}
 - \binom{N}{L+i+1}\right)$ with the result
\begin{equation}\label{dimIrrATL-anti}
\dim \AIrrTL{j}{(-1)^j} = \binom{N}{L+j} - \binom{N}{L+j+2},\qquad
0\leq j\leq L.
\end{equation}
This  agrees with the structure of the standard modules
$\AStTL{j}{(-1)^j}$ over $\JTL{N}^{tw}$ that can be deduced from~\cite{GL}. 
The subquotient structure is now of chain type, and thus simpler than for
$\AStTL{j}{(-1)^{j+1}}$ $\JTL{N}$-modules, which are of the two-strands type
described in Sec.~\ref{sec:stand-q-i}:
\begin{equation}\label{StATL-anti}
\AStTL{j}{(-1)^{j}}:\qquad
\AIrrTL{j}{(-1)^j} \longrightarrow \AIrrTL{j+2}{(-1)^j}
\longrightarrow \AIrrTL{j+4}{(-1)^j} \longrightarrow \dots
\end{equation}
Recal that $\dim \AStTL{j}{z^2} = \binom{N}{L+j}$. Then, the
dimensions~\eqref{dimIrrATL-anti} correspond to single subtractions in
accordance with the subquotient structure~\eqref{StATL-anti}.

\bigskip

\section{Conclusion}\label{sec:concl}

At the end of this technical paper we have thus reached our goal of
obtaining the bimodule structure for the $\gl(1|1)$ spin chain. While
the results are somewhat more complicated than in the open case, they
nevertheless bear a strong similarity with it. This corresponds
closely with the fact that bulk and boundary symplectic fermions
theories are deeply related as well. We emphasize that this is a
feature particular to the $\gl(1|1)$ case, which provides a non
faithful representation of the Jones--Temperley--Lieb algbera. Cases
such as $\gl(2|2)$ would be faithful, and in a certain sense even more
complicated, even though faithfulness would make many technical
aspects in fact simpler.

Our next and crucial task is to compare the bimodule over $\JTL{N}$ and $\LQGodd$  with the known information about the bulk symplectic fermion theory, and see to what extent the  algebraic properties of the finite spin chain could have been used to infer those of the continuum limit. This will be discussed in the third paper of this series \cite{GRS3}. 
 
\section*{Acknowledgements} 
We are grateful to C.~Candu, I.B.~Frenkel, M.R.~Gaberdiel, J.L.~Jacobsen, 
G.I.~Lehrer, V.~Schomerus, I.Yu.~Tipunin and R.~Vasseur for valuable discussions. We also thank the anonymous referee for many helpful comments and criticisms. The
work of A.M.G. was supported in part by Marie Curie IIF fellowship, the RFBR grant 10-01-00408,
and the RFBR--CNRS grant 09-01-93105.
A.M.G is  grateful to V.~Schomerus for his
kind hospitality in DESY, Hamburg, during  2010, and  also grateful to N.~Read for
kind hospitality in Yale University during  2011.  The work of
N.R. was supported by the NSF grants DMR-0706195 and DMR-1005895.   The work of
H.S. was supported by the ANR Projet 2010 Blanc SIMI 4 : DIME. The authors
are also grateful to the organizers of the ACFTA program at the Institut Henri Poincar\'e in Paris, where this work was finalized. 

\section*{Appendix A: The full quantum group $\LQG$ at roots of unity}
\renewcommand\thesection{A}
\renewcommand{\theequation}{A\arabic{equation}}
\setcounter{equation}{0}

We collect here the standard expressions for the quantum group $\LQG$ that we
use in the analysis of symmetries of $\gl(1|1)$ spin-chains. This
appendix is identical to appendix A in our first paper~\cite{GRS1},
and reproduced here only for the reader's convenience.  We introduce
standard notation for $\q$-numbers
$[n]=\ffrac{\q^n-\q^{-n}}{\q-\q^{-1}}$ and $[n]!=[1][2]\dots[n]$.

\subsection{Defining relations}
The \textit{full} (or Lusztig) quantum group $\LQG$ with $\q =
e^{i\pi/p}$, for integer $p\geq2$, is generated by $\E$, $\F$, and
$\K$ satisfying the standard relations for the quantum $s\ell(2)$,
\begin{equation}\label{Uq-com-relations}
  \K\E\K^{-1}=\q^2\E,\quad
  \K\F\K^{-1}=\q^{-2}\F,\quad
  [\E,\F]=\ffrac{\K-\K^{-1}}{\q-\q^{-1}},
\end{equation}
with the constraints,
\begin{equation}
  \E^{p}=\F^{p}=0,\quad \K^{2p}=\one,
\end{equation}
and 
additionally by the divided powers $\f\sim \F^p/[p]!$ and $\e\sim \E^p/[p]!$, which turn out to satisfy the usual $s\ell(2)$-relations:
\begin{equation}
  [\h,\e]=\e,\qquad[\h,\f]=-\f,\qquad[\e,\f]=2\h.
\end{equation}
%{\small
There are also  ``mixed" relations
\begin{gather}
  [\h,\K]=0,\qquad[\E,\e]=0,\qquad[\K,\e]=0,\qquad[\F,\f]=0,\qquad[\K,\f]=0,\label{zero-rel}\\
  [\F,\e]= \ffrac{1}{[p-1]!}\K^p\ffrac{\q \K-\q^{-1} \K^{-1}}{\q-\q^{-1}}\E^{p-1},\qquad
  [\E,\f]=\ffrac{(-1)^{p+1}}{[p-1]!} \F^{p-1}\ffrac{\q \K-\q^{-1} \K^{-1}}{\q-\q^{-1}},
    \label{Ef-rel}\\
  [\h,\E]=\ffrac{1}{2}\E A,\quad[\h,\F]=- \ffrac{1}{2}A\F,\label{hE-hF-rel}
\end{gather}
where 
\begin{equation}\label{A-element}
  A=\,\sum_{s=1}^{p-1}\ffrac{(u_s(\q^{-s-1})-u_s(\q^{s-1}))\K
        +\q^{s-1}u_s(\q^{s-1})-\q^{-s-1}u_s(\q^{-s-1})}{(\q^{s-1}
         -\q^{-s-1})u_s(\q^{-s-1})u_s(\q^{s-1})}\,
        u_s(\K)\idem_s%+\\
	%+ 2\ffrac{u_0(K)}{u_0(\q^{-1})}\, \idem_0
        %+ 2\ffrac{u_p(K)}{u_p(-\q^{-1})}\, \idem_p,
\end{equation}
with the polynomials $u_s(\K)=\prod_{n=1,\;n\neq s}^{p-1}(\K-\q^{s-1-2n})$, and
the $\idem_s$ are some central primitive idempotents~\cite{BFGT}.
The relations~\eqref{Uq-com-relations}-\eqref{A-element} are the defining
relations of the associative algebra $\LQG$.

\medskip

The quantum group $\LQG$ has a Hopf-algebra structure
with the comultiplication 
\begin{gather}
  \Delta(\E)=\one\otimes \E + \E\otimes \K,\quad
  \Delta(\F)=\K^{-1}\otimes \F + \F\otimes\one,\quad
  \Delta(\K)=\K\otimes \K,\label{Uq-comult-relations}\\
%   S(E)=-EK^{-1},\quad  S(F)=-KF,\quad S(K)=K^{-1},
%   \label{Uq-antipode}\\
%   \epsilon(E)=\epsilon(F)=0,\quad\epsilon(K)=1.\label{Uq-epsilon}
  \Delta(\e)=\e\tensor\one +\K^p\tensor \e
  +\ffrac{1}{[p-1]!} \sum_{r=1}^{p-1}\ffrac{\q^{r(p-r)}}{[r]}\K^p\E^{p-r}\tensor \E^r
  \K^{-r},\label{e-comult}\\
 \Delta(\f)= \f\tensor \one + \K^p\tensor \f+\ffrac{(-1)^p}{[p-1]!} 
  \sum_{s=1}^{p-1}\ffrac{\q^{-s(p-s)}}{[s]}\K^{p+s}\F^s\tensor \F^{p-s}.\label{f-comult}
\end{gather}
The antipode and counit are not used in the paper but the reader can
find them, for example, in~\cite{BFGT}.

We can easily write the $(N-1)$-folded coproduct for the capital
generators $\E$ and $\F$,
\begin{equation}\label{N-fold-comult-cap}
\Delta^{N-1}\E =
\sum_{j=1}^{N}\underbrace{\one\tensor\dots\tensor\one}_{j-1}\tensor
\E\tensor \K \tensor \dots \tensor \K,\qquad
\Delta^{N-1}\F =
\sum_{j=1}^{N}\underbrace{\K^{-1}\tensor\dots\tensor \K^{-1}}_{j-1}\tensor
\F\tensor \one \tensor \dots \tensor \one.
\end{equation}

\subsection{Standard spin-chain notations}\label{gl-XX}
We introduced the more usual (in the spin-chain
literature~\cite{PasquierSaleur,DFMC}\footnote{We note that our
  convention for the spin-chain  representation differs from the one
  in~\cite{PasquierSaleur} by the change $\q\to\q^{-1}$.}) quantum group generators
 \begin{gather*}
 S^{\pm}=\sum_{1\leq j\leq N} \q^{-\sigma_1^z/2}\otimes\ldots
 \q^{-\sigma_{j-1}^z/2}\otimes\sigma_{j}^{\pm}\otimes\q^{\sigma_{j+1}^z/2}\otimes
 \ldots \otimes \q^{\sigma_{N}^z/2},\\
  k=\q^{S^z},\qquad \text{with}\quad S^z=\half\sum_{j=1}^{2L}\sigma_j^z,
 \end{gather*}
 where  $\sigma^{\pm}_j$ and $\sigma^z_j$ are  $2\times2$-matrices  acting on the $j$th tensorand as ,
\begin{equation}\label{Pauli}
\sigma^+ = 
\begin{pmatrix}
0 & 1\\
0 & 0
\end{pmatrix}, \quad
\sigma^- = 
\begin{pmatrix}
0 & 0\\
1 & 0
\end{pmatrix}, \quad
\sigma^z = 
\begin{pmatrix}
1 & 0\\
0 & -1
\end{pmatrix}.
\end{equation}
The defining relations are then (for $\q=e^{i\pi/p}$ and integer $p\geq2$)
\begin{gather*}
    k S^{\pm}k^{-1}=\q^{\pm 1}S^{\pm},\qquad
   \left[S^{+},S^{-}\right]=\ffrac{k^2-k^{-2}}{\q-\q^{-1}},\\
   (S^{\pm})^p = 0,\qquad k^{4p}=\one,
\end{gather*}
and the comultiplication is
\begin{equation*}
    \Delta(S^{\pm})=k^{-1}\otimes S^{\pm}+S^{\pm}\otimes k, \qquad \Delta(k^{\pm1})=k^{\pm1}\tensor k^{\pm1}.
\end{equation*}
We then have the Hopf-algebra homomorphism
\begin{equation*}
\E\mapsto S^+ k, \qquad \F\mapsto k^{-1} S^-,\qquad \K\mapsto k^2
\end{equation*}
relating the two choices. The antipode and counit formulas can be easily obtained in the spin-chain notations as well but we do not need them in this paper.

\subsubsection{The case of XX spin-chains}
For $p=2$ or ``XX spin-chain'' case, the  $(N-1)$-folded coproduct of the renormalized powers $\e$ and $\f$ reads 
\begin{multline}\label{N-fold-comult-ren-e}
\Delta^{N-1}\e =
\sum_{j=1}^{N}\underbrace{\one\tensor\dots\tensor\one}_{j-1}\tensor
\e\tensor \K^2 \tensor \dots \tensor \K^2 +\\
+ \q\sum_{t=0}^{N-2}\sum_{j=1}^{N-1-t} \underbrace{\one\tensor\dots\tensor\one}_{j-1}\tensor
\E\tensor \underbrace{\K \tensor \dots \tensor \K}_{t}\tensor \E\K\tensor \K^2 \tensor \dots \tensor \K^2
\end{multline}
and
\begin{multline}\label{N-fold-comult-ren-f}
\Delta^{N-1}\f =
\sum_{j=1}^{N}\underbrace{\K^2\tensor\dots\tensor \K^2}_{j-1}\tensor
\f\tensor \one \tensor \dots \tensor \one +\\
+ \q^{-1}\sum_{t=0}^{N-2}\sum_{j=1}^{N-1-t}
\underbrace{\K^2\tensor\dots\tensor \K^2}_{j=1}\tensor
\K^{-1}\F\tensor \underbrace{\K^{-1} \tensor \dots \tensor \K^{-1}}_{t}\tensor \F\tensor \one \tensor \dots \tensor \one.
\end{multline}
These renormalized powers can also be expressed in terms of the more usual spin-chain operators, and one finds at $p=2$
\begin{equation*}
\Delta^{N-1}(\e)=\q S^{+(2)}k^{2},\qquad \Delta^{N-1}(\f)=\q^{-1}
k^{-2}S^{-(2)},
\end{equation*}
where $\q=i$ and
 \begin{equation}
 S^{\pm (2)}=\sum_{1\leq j<k\leq N-1} \q^{-\sigma_1^z}\otimes\ldots
 \otimes \q^{-\sigma_{j-1}^z}\otimes\sigma_{j}^{\pm}
 \otimes1\otimes\ldots\otimes 1\otimes \sigma_k^{\pm} \otimes
 \q^{\sigma_{k+1}^z}\otimes \ldots \otimes \q^{\sigma_{N}^z}.
 \end{equation}

\medskip
We also note that the $\gl(1|1)$
spin-chain representation $\repgl$ is equivalent~\cite{GRS1} to a twisted XX spin
chain representation $\repXX$ of $\JTL{2L}$. The expression of the Temperley--Lieb generators in this case is well known for the open chain \cite{PasquierSaleur},
\begin{equation}
\repXX(e_j) \equiv e^{XX}_j=-\half\left[\sigma_j^x\sigma_{j+1}^x+\sigma_j^y\sigma_{j+1}^y - \q(\sigma_j^z-\sigma_{j+1}^z)\right],
\end{equation}
where $\sigma^{x,y,z}$ are Pauli matrices introduced in~\eqref{Pauli} and we used $\sigma^{\pm}=\half\bigl(\sigma^x\pm i\sigma^y\bigr)$. To get an equivalence  in the closed case we need to  set in the expression of $e_{2L}$ the following condition:
\begin{equation}
\sigma_{2L+1}^{\pm}=-(-1)^{S^z}\sigma_1^\pm.
\end{equation}
This means that a periodic $\gl(1|1)$ (alternating) spin chain
corresponds to a periodic XX spin chain for odd values of  $S^z$ and to an antiperiodic  XX spin chain 
for even values of  $S^z$.

\section*{Appendix B: Projective $\LQG$-modules $\PP_{1,r}$}%\label{app:proj-mod-base}
\renewcommand\thesection{B}
\renewcommand{\theequation}{B\arabic{equation}}
\setcounter{equation}{0}

We recall~\cite{BFGT} the action of $\LQG$ (for $\q=i$) in projective
covers $\PP_{1,r}$ of simple modules $\XX_{1,r}$, where $r$ is an
integer and $r\geq1$.  A module $\XX_{1,r}$ is $r$-dimensional and spanned by
$\stprp_{m}$, $0\leq m\leq r{-}1$, with \footnote{We
simplify a notation used in~\cite{BFGT} assuming $\XX_{1,r} \equiv
\XX^{\alpha(r)}_{1,r}$ with $\alpha(r)=(-1)^{r-1}$, and the same for $\PP_{1,r}$.  }
\begin{equation}\label{eq:sl-irrep}
\begin{split}
&\E\, \stprp_{m} = \F\, \stprp_{m} = 0,\qquad \K\, \stprp_{m} = (-1)^{r-1} \stprp_{m},\\
  \h\, \stprp_{m} &=  \half(r-1-2m)\stprp_{m},\quad
  \e\, \stprp_{m} =  m(r-m)\stprp_{m-1},\quad
  \f\, \stprp_{m} =  \stprp_{m+1},
\end{split}
\end{equation}
where we set $\stprp_{-1}
=\stprp_{r}=0$.
For $r=0$, we also set $\XX_{1,0}\equiv 0$. 
The subquotient
structure of $\PP_{1,r}$ is then given as 
\begin{equation}\label{schem-proj-app}
{\small
     \xymatrix@C=2pt@R=24pt@M=1pt@W=2pt{&&\\&\PP_{1,r}\quad=\quad&\\&&}
\xymatrix@C=4pt@R=18pt@M=2pt@W=2pt{
           &\XX_{1,r} \ar[dl] \ar[dr]&\\
 	  \XX_{1,r-1} \ar[dr]
 	    &&\XX_{1,r+1}\ar[dl]\\
        &\XX_{1,r}&
    } 
}
\end{equation}

For $r > 1$, the projective module $\PP_{1,r}$ has the basis
\begin{equation}\label{left-proj-basis-plus}
  \{\toppr_{m},\botpr_{m}\}_{0\le m\le r-1}
  \cup\{\leftpr_{l}\}_{1\le l\le
  r-1}
\cup\{\rightpr_{l}\}_{0\le l\le
  r},
\end{equation}
where $\{\toppr_{m}\}_{0\le m\le
    r-1}$ is the basis
corresponding to the top module in~\eqref{schem-proj-app},
$\{\botpr_{m}\}_{0\le m\le r-1}$
to the bottom, $\{\leftpr_{l}\}_{1\le l\le r-1}$ to the left, and
$\{\rightpr_l\}_{0\le l\le r}$ to
the right module. 
For $r=1$, the basis does not contain
$\{\leftpr_{l}\}_{1\le l\le
r-1}$ terms and we imply $\leftpr_{l}~\equiv~0$ in the
action.

We set $\alpha(r)=(-1)^{r-1}$.   The action of $\LQG$ on $\PP_{1,r}$ is then given by
\begin{align}
  \K\toppr_{m}&=\alpha(r)\toppr_{m}, \qquad  \K\botpr_{m}=\alpha(r)\botpr_{m}, &\quad 0\le m\le r-1,&\notag\\
  \K\leftpr_{l}&=-\alpha(r)\leftpr_{l}, &\quad 1\le l\le r-1,&\notag\\
  \K\rightpr_{l}&=-\alpha(r)\rightpr_{l}, &\quad 0\le l\le r,&\notag\\
  \E\toppr_{m}&=
    \alpha(r) \ffrac{r-m}{r}\rightpr_{m} + \alpha(r) \ffrac{m}{r}\leftpr_{m},   \qquad \E\botpr_{m}=  0,  &\quad 0\le m\le r-1,&\notag\\
  \E\leftpr_{l}&=
    \alpha(r) (l-r)\botpr_{l-1},  &\quad 1\le l\le r-1,&\label{proj-mod-QGbase}\\
  \E\rightpr_{l}&=
    \alpha(r) l\botpr_{l-1},
&  \quad 0\le l\le r,&\notag\\
  \F\toppr_{m}&=
    \ffrac{1}{r}\rightpr_{m+1}-\ffrac{1}{r}\leftpr_{m+1}, \qquad \F\botpr_{m}= 0 &\quad 0\le m\le r-1, 
    \quad(\leftpr_{r}\equiv0),&\notag\\
  \F\leftpr_{l}&=
    \botpr_{l}, &
  \quad 1\le l\le r-1,&\notag\\
  \F\rightpr_{l}&=
    \botpr_{l}, &
  \quad 0\le l\le r.&\notag
\end{align}

In the  basis thus introduced , the $s\ell(2)$-generators $\e$, $\f$ and $\h$
act in $\PP_{1,r}$ as in the direct sum
$\XX_{1,r}\oplus\XX_{1,r-1}\oplus\XX_{1,r+1} \oplus \XX_{1,r}$ with
the action defined in~\eqref{eq:sl-irrep}.

\end{document}